\newcommand{\NOTEB}[2]{$^{\textcolor{blue}\clubsuit}$\marginpar{\setstretch{0.43}\textcolor{blue}{\bf\tiny #1: }\textcolor{blue}{\bf\tiny #2}}}
\renewcommand{\NOTEB}[2]{}
\renewcommand{\phi}{r}
\renewcommand{\Xi}{\xi}
\newtheorem{thm}{Theorem}  
\newtheorem{fact}[thm]{Fact}
\newtheorem{lem}[thm]{Lemma}
\newtheorem{cor}[thm]{Corollary}
\newtheorem{pro}[thm]{Proposition}
\newtheorem{exm}[thm]{Example}
\newtheorem{definition}{Definition}
\newtheorem{remark}{Remark}
\newtheorem{lemma}{Lemma}
\newtheorem{theorem}{Theorem}
\newtheorem{observation}{Observation}
\newtheorem{claim}{Claim}
\numberwithin{thm}{section}
\numberwithin{equation}{section}
\newcommand{\thmref}[1]{Theorem~\ref{thm:#1}}
\newcommand{\lemref}[1]{Lemma~\ref{lem:#1}}
\newcommand{\lemrefs}[2]{Lemmas~\ref{lem:#1} and~\ref{lem:#2}}
\newcommand{\obsref}[1]{Observation~\ref{obs:#1}}
\newcommand{\clmref}[1]{Claim~\ref{clm:#1}}
\newcommand{\defref}[1]{Definition~\ref{def:#1}}
\newcommand{\defrefs}[2]{Definitions~\ref{def:#1} and~\ref{def:#2}}
\newcommand{\figref}[1]{Figure~\ref{fig:#1}}
\newcommand{\secref}[1]{Section~\ref{sec:#1}}
\newcommand{\remref}[1]{Remark~\ref{rem:#1}}
\newcommand{\exmref}[1]{Example~\ref{exm:#1}}
\newcommand{\factref}[1]{Fact~\ref{fact:#1}}
\newcommand{\condref}[1]{Condition~\ref{cond:#1}}
\newcommand{\bQ}{\ensuremath{\mathbb{Q}}}
\newcommand{\cA}{\ensuremath{\mathcal{A}}}
\newcommand{\cJ}{\ensuremath{\mathcal{J}}}
\newcommand{\cN}{\ensuremath{\mathcal{N}}}
\newcommand{\cV}{\ensuremath{\mathcal{V}}}
\newcommand{\bcA}{\ensuremath{\boldsymbol{\mathcal{A}}}}
\newcommand{\Alpha}{\boldsymbol\alpha}
\newcommand{\Nu}{\boldsymbol\nu}
\newcommand{\NPH}{\ensuremath{\mathcal{NP}}-hard}
\newcommand{\bigO}[1]{\ensuremath{\mathcal{O} \left( #1 \right)}}
\renewcommand{\epsilon}{\ensuremath{\varepsilon}}
\let\oldsqrt\sqrt
\def\hksqrt{\mathpalette\DHLhksqrt}
\def\DHLhksqrt#1#2{\setbox0=\hbox{$#1\oldsqrt{#2\,}$}\dimen0=\ht0
   \advance\dimen0-0.2\ht0
   \setbox2=\hbox{\vrule height\ht0 depth -\dimen0}%
   {\box0\lower0.4pt\box2}}
\renewcommand\sqrt\hksqrt
\renewcommand{\leq}{\leqslant}
\renewcommand{\geq}{\geqslant}
\renewcommand{\le}{\leqslant}
\renewcommand{\ge}{\geqslant}
\renewcommand{\epsilon}{\varepsilon}
\def\now{\minute=\time \hour=\time \divide \hour by 60 \hourMins=\hour \multiply\hourMins by 60
  \advance\minute by -\hourMins \zeroPadTwo{\the\hour}:\zeroPadTwo{\the\minute}}
\def\today{\the\year-\zeroPadTwo{\the\month}-\zeroPadTwo{\the\day}}
\def\zeroPadTwo#1{\ifnum #1<10 0\fi #1}
\begin{document}

\title{\textbf{PTAS for Ordered Instances of Resource Allocation Problems with Restrictions on Inclusions}}

\author[1]{Kamyar Khodamoradi\thanks{The author wishes to acknowledge support from a Natural Sciences and Engineering Research Council of Canada (NSERC)  Discovery Grant.}}
\author[1]{Ramesh Krishnamurti\thanks{The author wishes to acknowledge support from a Natural Sciences and Engineering Research Council of Canada (NSERC)  Discovery Grant.}}
\author[1, 2]{Arash Rafiey \thanks{The author is supported by COMPETE grant from Indiana State University}}
\author[3]{Georgios Stamoulis\thanks{The work of the author was partially supported by the Swiss National Foundation as part of the \textit{Early Postdoc.Mobility} grant P1TIP2\_152282}}
\affil[1]{School of Computing Science, Simon Fraser University, Burnaby, BC, Canada}
\affil[2]{Department of Math and Computer Science, Indiana State University, Terre Haute, IN, USA}
\affil[3]{Depatrment of Data Science \& Knowledge Engineering, Maastricht University, The Netherlands}
\affil[ ]{}
\affil[ ]{\textit{kka50@sfu.ca, ramesh@sfu.ca}}
\affil[ ]{\textit{arash.rafiey@indstate.edu}} 
\affil[ ]{\textit{georgios.stamoulis@maastrichtuniversity.nl}}
\renewcommand\Authands{ and }

\maketitle

\begin{abstract}
We consider the problem of allocating a set $I$ of $m$ indivisible resources (items) to a set $P$ of $n$ customers (players) competing for the resources. Each resource $j \in I$ has a same value $v_j > 0$ for a subset of customers interested in $j$, and zero value for the remaining customers. The utility received by each customer is the sum of the values of the resources allocated to her. The goal is to find a feasible allocation of the resources to the interested customers such that for the Max-Min allocation problem (Min-Max allocation problem) the minimum of the utilities (maximum of the utilities) received by the customers is maximized (minimized). The Max-Min allocation problem is also known as the \textit{Fair Allocation problem}, or the \textit{Santa Claus problem}. The Min-Max allocation problem is the problem of Scheduling on Unrelated Parallel Machines, and is also known as the $R \, | \, | C_{\max}$ problem.

In this paper, we are interested in instances of the problem that admit a Polynomial Time Approximation Scheme (PTAS). We show that an ordering property on the resources and the customers is important and paves the way for a PTAS. For the Max-Min allocation problem, we start with instances of the problem that can be viewed as a \textit{convex bipartite graph}; a bipartite graph for which there exists an ordering of the resources such that each customer is interested in (has a positive evaluation for) a set of \textit{consecutive} resources. We demonstrate a PTAS for the inclusion-free cases. This class of instances is equivalent to the class of bipartite permutation graphs. 
For the Min-Max allocation problem, we also obtain a PTAS for inclusion-free instances. These instances are not only of theoretical interest but also have practical applications.

\end{abstract}

\section{Introduction and Problem Definition}

In the general resource allocation problem, we are given a bipartite graph $H = (I, P ,E)$ where the set of vertices $I$ represents $m$ indivisible resource items (or simply items) to be allocated to the set $P$ of $n$ players (or customers). The sets $I$ and $P$ form the two sets of the bipartition, and are  indexed by numbers in $[m]$ and $[n]$ respectively. Thus, $I = \{ x_1, \, x_2, \, \ldots, \, x_m \}$ and $P = \{ p_1, \, p_2, \, \ldots, \, p_n \}$. Each player $p_j \in P$ has a utility function $val_j(i) = v_{i} \geq 0$ for each item $x_i \in I$. This represents the value of item $x_i$ for player $p_j$. If item $x_i$ is adjacent to player $p_j$, then its value for her is $v_i>0$ ($v_i$ is a positive integer), otherwise its value is zero. This is represented in the graph $H$ via the edge set $E$. If an item $x_i$ has a value $v_i > 0$ to a player $p_j$, then there exists an edge $e \in E$ that connects $x_i$ to $p_j$ in $H$, and we say player $p_j$ is adjacent to item $x_i$. If there is no edge between an item $x_i \in I$ and a player $p_j \in P$, then the item $x_i$ has a value of zero to  player $p_j$. Allocating an item $x_i \in I$ with value $v_i \in \bQ_{>0}$, where $\bQ_{>0}$ is the set of positive rational numbers, to a player adjacent to the item increases the utility of that player by $v_i$. Our goal is to find a feasible allocation of the resource items to the players that optimizes an objective function. 

We assume there are no items of degree zero in $H$. If there exists such an item, we can safely remove it from the graph since it is not accessible to any player. In a feasible allocation (henceforth, \emph{feasible assignment}) for $n$ players, the set $I$ is partitioned into $n$ subsets, where each subset is allocated to the player with the same index. In other words, $I = I^1 \cup I^2 \cup \dots \cup I^n$, where items in $I^j$ are allocated to player $p_j \in P$ who is adjacent to these items. Such a partition is denoted by $\bcA = (I^1, I^2, \ldots, I^n)$.

The objective function that is optimized depends on the particular resource allocation problem we solve. In the Max-Min allocation problem, we seek a feasible solution that ensures that each player receives utility \emph{at least} $t>0$, assuming that we can guess the largest possible value of the parameter $t$. In other words, we seek a partition $\bcA = (I^1, \dots, I^n)$ such that $\sum_{i: x_i \in I^j} v_i \geq t$, $\forall j \in [n]$. A sufficiently good guess of $t$ can be obtained by doing a binary search. It is worth mentioning that the assignments made in the Max-Min allocation problem do not necessarily need to be a partitioning of the items, meaning that one can leave some items unassigned. But since the goal is to maximize the sum of item values assigned, restricting the assignments to cover the entire set of items will not decrease the objective value. Therefore, to avoid further complications, we assume the assignments made are in fact partitions of the entire set of items. On the other hand, in the Min-Max allocation problem, we have item costs rather than item values. For instance, $v_i$'s can be the construed as processing times of a set of jobs and players can be processors, and we are required to assign \underline{every} job to some processor. Naturally, we wish to allocate jobs to machines such that the overall finish time, or \emph{make-span}, is minimized. Therefore, we seek a feasible solution that ensures each player receives a utility of \emph{at most} $t>0$ for the smallest $t$ possible. In other words, we seek a partition $\bcA = (I^1, \dots, I^n)$ such that $\sum_{i: x_i \in I^j} v_i \leq t$, $\forall j \in [n]$.

In this paper, we study cases where we can obtain PTAS for these two resource allocation problems. We start with cases where the bipartite graph $H = (I, \, P, \, E)$ that models the problem is \textit{convex}. A bipartite graph $H = (I, \, P, \, E)$ with two set of vertices $I$ and $P$, is {\em convex} if there is an ordering $<_I$ of the vertices in $I$ such that the neighbourhood of each vertex in $P$ consists of consecutive vertices, \emph{i.e.}, the neighbourhood of each vertex in $P$ forms an interval. Indeed, we focus on \emph{inclusion-free} instances which form a subset of the convex bipartite graphs. This subclass of graphs is also referred to as the class of \emph{bipartite permutation graphs} and the class of \emph{proper interval bigraphs}. We will discuss them in more detail in \secref{santa_prelim}. Convex bipartite graphs (henceforward referred to as \emph{convex graphs} for short) were introduced in \cite{glover_convex} and are well known for their nice structures and both theoretical and practical properties. Many optimization problems become polynomial-time solvable or even linear-time solvable in convex graphs while remaining {\NPH} for general bipartite graphs  \cite{Brandstadt:1999:GCS:302970}. Convex graphs can be recognized in linear time by using PQ-trees \cite{Brandstadt:1999:GCS:302970} and, moreover, the recognition algorithms provide the ordering $<_I$, given that the underlying graph is of course convex \cite{DBLP:journals/dam/MeidanisPT98, DBLP:journals/tcs/HabibMPV00, DBLP:journals/jcss/BoothL76}. Various studies have considered the bipartite permutation graphs for other types of optimization problems. This class is of interest in graph homomorphism problems \cite{GHRY08}. Also, there are recognition algorithms for bipartite permutation graphs \cite{GHRY08, SBS87}.

The interval case arises naturally in energy production applications where resources (energy) can be assigned and used within a number of successive time steps \cite{KKRS13, MS12}. That is, the energy produced at some time step is available only for a limited period corresponding to an interval of time steps. The goal is a fair allocation of the resources over time, \emph{i.e.}, an allocation that maximizes the minimum accumulated resource we collect at each time step. In other words, we would like to have an allocation that guarantees the energy we collect at each time step is at least $t$, a pre-specified threshold. See also \cite{DBLP:journals/jal/Sgall96} for some applications in on-line scheduling.

\subsection{Related Work}

For the general Max-Min fair allocation problem, where a given item does not necessarily have the same value for each player (i.e., every player has her own value for any item), no ``good'' approximation algorithm is known. In \cite{BezakovaD05}, by using the natural LP formulation for the problem and similar ideas as in \cite{LenstraST87}, an additive ratio of $\max_{i,j} v_{ij}$ is obtained, which can be arbitrarily bad. A stronger LP formulation, the \textit{configuration LP}, is used to obtain a solution at least $\textrm{opt}/n$ in \cite{BansalS06}. Subsequently, \cite{sadpourS07} provided a rounding scheme for this LP to obtain a feasible solution of value no worse than $\mathcal{O}(\frac{\textrm{opt}}{\sqrt{n}(\log^3 n)})$. Recently in \cite{innovations/SahaS10}, an $\mathcal{O}(\sqrt{\frac{\log \log n}{n \log n}})$ approximation algorithm was proposed, which is close to the \textit{integrality gap} of the configuration LP. On the negative side, there exists a simple $\frac{1}{2}$ inapproximability result \cite{BezakovaD05} using the same ideas as in \cite{LenstraST87}. Due to the difficulty  of the general case, subsequent research focused more on special cases.  For the \textit{restricted} case of the Max-Min allocation problem (also known as the restricted Santa Claus problem), where $v_{ij} \in \{0, v_j \}$ for $i \in [n]$ and $j \in [m]$, there is an $\mathcal{O}(\frac{\log \log \log n}{\log \log n})$ factor approximation algorithm \cite{BansalS06}. Furthermore, the $\frac{1}{2}$ inapproximability result for the general case \cite{BezakovaD05} carries over to this restricted case as well. 

Recently, Feige proved that the integrality gap of the configuration LP defined and studied in \cite{BansalS06}, cannot be worse than some constant. In \cite{AsadpourFS08} an integrality gap of $\frac{1}{5}$ was shown for the same LP which was later improved to $\frac{1}{4}$. This implies that a $\frac{1}{4}$-approximation algorithm based on rounding the configuration LP for the restricted Max-Min allocation is possible, although no such algorithm is available yet. 

The authors provide a local search heuristic with an approximation guarantee of $\frac{1}{4}$. However, this heuristic was not known to run in polynomial time.
Later, it was shown in \cite{svensson_icalp_santa} that the local search can be done in $n^{\mathcal{O}(\log n)}$ time. In \cite{srinivasan} the authors provided a constructive version of Feige's original nonconstructive argument based on Lov\'{a}sz Local Lemma, thus providing the first constant factor approximation for the restricted Max-Min fair allocation problem. They provide an $\alpha$-approximation algorithm for some \textit{constant} $\alpha$ where an explicit value of $\alpha$ is not provided (but is thought to be a huge constant).
Thus there is still a gap between the $\frac{1}{2}$ inapproximability result and the constant $\alpha$ approximability result in \cite{srinivasan}.
Very recently, a $13$-approximation was given for the problem \cite{DBLP:journals/corr/AnnamalaiKS14}, which provides the first constant factor polynomial time approximation algorithm for the problem for a particular constant value. Their approach uses, in a highly non-trivial way, the local search technique for hypergraph matching that was used in \cite{AsadpourFS08}. Another important aspect of this approach is that the algorithm is purely combinatorial.

Several special cases of the Max-Min fair allocation problem have been studied. The case where $v_{ij} \in \{ 0,1, \infty \}$ is shown to be hard in  \cite{approx/KhotP07} and a trade-off between the running time and the approximation guarantee is established. In particular, for a number $\alpha \leq \frac{|I|}{2}$, it is shown how to design an $\alpha \cdot \frac{opt}{n}$-approximation algorithm in time $|P|^{\mathcal{O}(1)}|I|^{\mathcal{O}(\alpha)}$. In \cite{stoc/BateniCG09} the authors consider the case in which each item has positive utility for a bounded number of players $D$, and prove that the problem is as hard as the general case for $D \leq 3$. They also provide a $\frac{1}{2}$ inapproximability result and a $\frac{1}{4}$ approximation algorithm for the \textit{asymmetric} case (where each item has two distinct non-zero values for the two players interested in that item) when $D \leq 2$. The authors also provide a simpler LP formulation for the general problem and devise a polylogarithmic approximation algorithm that runs in quasipolynomial time (and a $|P|^{\epsilon}$-approximation algorithm that runs in $|P|^{\mathcal{O}(\frac{1}{\epsilon})}$ time, for some $\epsilon \geq 0$). The same result has been obtained in \cite{focs/ChakrabartyCK09}, which includes a $\frac{1}{2}$ approximation when $D \leq 2$, thus matching the bound proved in  \cite{stoc/BateniCG09}. In \cite{woeginger_ptas}, the author provides a PTAS for a (very) special case of the problem considered in this paper, namely, when the instance graph of the problem is a complete bipartite graph. In \cite{MS12} a $\frac{1}{2}$-approximation algorithm was developed for the class of instances considered in this paper, namely for the case where the intervals for each player are inclusion-free in the same sense introduced in this paper. See also \cite{DBLP:journals/corr/abs-1009-4529}, \cite{DBLP:journals/orl/MuratoreSW10} for some other special cases that our results generalize.

The Min-Max-Allocation problem, or the $R \, | \, | C_{\max}$ problem, as it is known in standard scheduling notation, is an important class of resource allocation problems.  We seek an allocation (also known as an assignment) of jobs (the resources) to machines (the players) such that the makespan (the time by which the latest machine finishes its processing) is minimized.  

For the $R \, | \, | C_{\max}$ problem, a $2$-approximation algorithm based on a characterization of the extreme point solutions of the natural linear programming relaxation of the problem is given \cite{LenstraST87}. The authors also provide a $\frac{3}{2}$ inapproximability result. These results have been adapted to the Max-Min case in \cite{BezakovaD05}. So far, all efforts to improve either of the bounds have failed. In a very recent result  \cite{rcmax_ola}, it is shown that the restricted version of $R \, | \, | C_{\max}$ (where the processing time of a job $j$ is $v_j$ for a subset of the machines and infinity otherwise) admits an $\alpha$ approximation guarantee for $\alpha$ strictly less than $2$. This result is an \textit{estimation} result i.e., it estimates the (optimal) makespan of the restricted $R \, | \, | C_{\max}$ within a factor of $\alpha = \frac{33}{17} + \epsilon$ for some arbitrary small positive $\epsilon$, although no polynomial time algorithm with such a performance guarantee is known. In \cite{DBLP:journals/algorithmica/EbenlendrKS14}, a $1.75$ approximation algorithm is given for the restricted $R \, | \, | C_{\max}$ problem where each job can be assigned to at most $2$ machines. In this article, the authors claim that their $1.75$ approximation for this restricted case is the first one that improves the factor $2$ approximation of \cite{LenstraST87} on an unbounded number of machines and jobs. Our PTAS thus provides a certain strengthening of their claim, providing the fist natural and non-trivial instance (as in the case of a complete bipartite graph) for which a PTAS is provided.

We note that further restrictions of this special case, where every job has a degree at most two, have been studied \cite{DBLP:journals/ipl/LeeLP09a}. Thus if the underlying bipartite graph is a tree, then a PTAS can be designed for the problem. If the processing times are in the set $\in \{1,2 \}$, then a $\frac{3}{2}$-approximation algorithm exists, which is the best possible unless $\mathcal{P} = \mathcal{NP}$. Finally, \cite{DBLP:journals/scheduling/VerschaeW14} provides better approximations for several special cases of the problem. More importantly, it shows that the configuration LP for this restriction has an integrality gap of $2$.

\subsection{Our Contribution} 

We summarize our results below:

\begin{enumerate}
\item We start by presenting a PTAS for the restricted Max-Min fair allocation problem when the instance of the problem is an inclusion-free convex graph, that is, the neighborhood of each player is an interval of items (\secref{max-min}). Notice that this instance of the problem is (strongly) {\NPH}, as it contains complete bipartite graphs as a special case (each player is adjacent to all the items), which is known to be strongly {\NPH} \cite{garey_johnson}. 

\item Next, we modify our approach for the Max-Min allocation problem to obtain a PTAS for the $R \, | \, | \, C_{\max}$ problem with inclusion-free convex graphs. (\secref{rc_max}).
\end{enumerate}

To obtain the PTAS for the instances considered in this paper, we first use scaling and rounding to classify the items into small and big items. Then, we prove that in a given instance of the problem, for any assignment of a certain value, another assignment of slightly less value but with simpler structure exists. Finally, we provide an algorithm that searches all these simple-structured assignments.

\section{Max-Min Allocation Problem (Santa Claus) on Convex Graphs}
\label{sec:max-min}

As instance of the problem is given via a convex graph $H = (I, \, P, \, E)$ and a utility function $val: \, I \rightarrow \bQ_{>0}$ which associates a value $v_i$ to every item $x_i$ in $I$. Structural properties and algorithms of convex graphs have received attention in the field of algorithmic graph theory. We refer the reader to texts such as \cite{Brandstadt:1999:GCS:302970} and \cite{Spinrad2003}. For the sake of completeness, we give a definition of this class of graphs and introduce some notations that we will use throughout the paper.

\subsection{Preliminaries and Notations}
\label{sec:santa_prelim}

To define the class of convex graphs we first need to introduce the following property.

\begin{definition}[Adjacency Property]\cite{Brandstadt:1999:GCS:302970}
Let $H = (X, \, Y, \, E)$ be a bipartite graph. An ordering $<_X$ of $X$ in $H$ has the \emph{adjacency property} if for each vertex $y \in Y$, the neighbourhood of $y$ in $X$ denoted by $N(y)$ consists of vertices that are consecutive in $<_X$.
\end{definition}

\noindent A convex graph is defined as follows. 

\begin{definition}[Convex Graph]\cite{Brandstadt:1999:GCS:302970}
A bipartite graph $H = (X, \, Y, \, E)$ is convex if there exists an ordering of $X$ or $Y$ that satisfies the adjacency property.
\end{definition}

In this paper, we deal with a subclass of convex graphs known as inclusion-free convex graphs. In this subclass, inclusion may still occur between the intervals, but it should follow certain rules. We first explain the rules with the help of \figref{intersect_intervals}. Assume that a convex graph $H = (I, \, P, \, E)$ is given as the input instance. Consider two players $p$ and $q$ in $P$. Their neighbourhoods in $H$ can either be separate (their intersection is empty) or intersecting. If the neighbourhood of two players $p$ and $q$ is intersecting, four types of intersection between the two neighbourhoods may occur. These types are depicted in \figref{intersect_intervals}. We say that two intervals are \emph{properly overlapping} (\figref{intersect_intervals} \textbf{(a)}) if neither of the two intervals contains the other one, or \emph{left inclusive} (\figref{intersect_intervals} \textbf{(b)}) if one interval contains the other and the two intervals share their left boundary, or 3) \emph{right inclusive} (\figref{intersect_intervals} \textbf{(c)} if one interval contains the other and the two intervals share their right boundary, or 4) \emph{margined-inclusive} (\figref{intersect_intervals} \textbf{(d)}) if one interval is completely contained in the other, and the intervals do not share their boundaries. The subclass of inclusion-free convex graphs forbids margined-inclusion while left inclusion and right inclusion may still occur. 

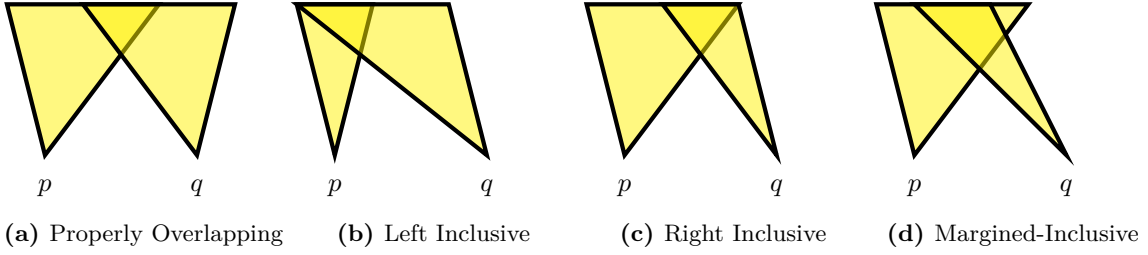
\begin{figure}
    \centering
    \begin{subfigure}[b]{0.22\textwidth}
        \begin{tikzpicture}
            \draw [ultra thick, draw=black, fill=yellow, fill opacity=0.5]    
                   (-0.5, 0) -- (-1, 2) -- (1, 2) -- cycle;
            \draw [ultra thick, draw=black, fill=yellow, fill opacity=0.5]
                   (1.5, 0) -- (0, 2) -- (2, 2) -- cycle;                
            \node[below = 0.2cm] at (-0.5, 0) {$p$};
            \node[below = 0.2cm] at (1.5, 0) {$q$};
        \end{tikzpicture}
        \caption{Properly Overlapping}
    \end{subfigure}
       \begin{subfigure}[b]{0.22\textwidth}
        \begin{tikzpicture}
            \draw [ultra thick, draw=black, fill=yellow, fill opacity=0.5]
                   (-0.5, 0) -- (-1, 2) -- (0, 2) -- cycle;
            \draw [ultra thick, draw=black, fill=yellow, fill opacity=0.5]
                   (1.5, 0) -- (-1, 2) -- (1, 2) -- cycle;       
            \node[below = 0.2cm] at (-0.5, 0) {$p$};
            \node[below = 0.2cm] at (1.5, 0) {$q$};
        \end{tikzpicture}
        \caption{Left Inclusive}
    \end{subfigure}   
       \begin{subfigure}[b]{0.22\textwidth}
        \begin{tikzpicture}
            \draw [ultra thick, draw=black, fill=yellow, fill opacity=0.5]    
                   (-0.5, 0) -- (-1, 2) -- (1, 2) -- cycle;
            \draw [ultra thick, draw=black, fill=yellow, fill opacity=0.5]
                   (1.5, 0) -- (0, 2) -- (1, 2) -- cycle;       
            \node[below = 0.2cm] at (-0.5, 0) {$p$};
            \node[below = 0.2cm] at (1.5, 0) {$q$};
        \end{tikzpicture}
        \caption{Right Inclusive}
    \end{subfigure}     
       \begin{subfigure}[b]{0.22\textwidth}
        \begin{tikzpicture}
            \draw [ultra thick, draw=black, fill=yellow, fill opacity=0.5]    
                   (-0.5, 0) -- (-1, 2) -- (1, 2) -- cycle;
            \draw [ultra thick, draw=black, fill=yellow, fill opacity=0.5]
                   (1.5, 0) -- (-0.5, 2) -- (0.5, 2) -- cycle;       
            \node[below = 0.2cm] at (-0.5, 0) {$p$};
            \node[below = 0.2cm] at (1.5, 0) {$q$};
        \end{tikzpicture}
        \caption{Margined-Inclusive}
    \end{subfigure}     
    \caption{Four types of intersecting intervals of two players}
    \label{fig:intersect_intervals}
\end{figure}

We define the class of inclusion-free convex graphs as follows. Given an ordering of the items $<_I$ that satisfies the adjacency property, for any pair of players if the neighbourhood of one player is completely included in the neighbourhood of the other, the smaller neighbourhood must either contain the leftmost or the rightmost item of the bigger neighbourhood with respect to the ordering $<_I$. This property is sometimes referred to as the \emph{enclosure} property in the literature.

\begin{definition}[Enclosure Property]
Let $H = (X, \, Y, \, E)$ be a bipartite graph. An ordering $<_X$ of $X$ in $H$ has the enclosure property if for each pair of vertices $y, y' \in Y$ for which $N(y) \subseteq N(y')$ in $X$, the vertices of $N(y') \, \backslash \, N(y)$ appear consecutively in $<_X$. We say that the graph $H$ as the enclosure property if such an ordering of the vertices of $H$ exists.
\end{definition}

As we mentioned earlier, every convex graph admits such an ordering and we can find it in linear time. Let $<_I$ be the ordering of items in set $I$ that satisfies the adjacency property. For every vertex $p \in P$ let $[\ell_p, \, r_p]$ be the interval of the items adjacent to $p$. Based on the ordering $<_I$ on $I$, we define the following ordering on $P$.

\begin{definition}[Lexicographical Ordering of Players]
For a given ordering $<_I$ of items $I$, an ordering of players is called lexicographical if and only if a player $p$ is ordered before another player $q$ whenever $\ell_p <_I \ell_q$, or $\ell_p = \ell_q$ and $r_p \leq_I r_q$ (breaking ties arbitrarily), in which $r_p \leq_I r_q$ means either $r_p = r_q$ or $r_p <_I r_q$.
\end{definition}

The adjacency property is in some contexts referred to as the \emph{Min ordering} property. When a graph admits both the adjacency and the enclosure property, it is said to respect the \emph{Min-Max ordering} property. In a given convex graph $H = (I, \, P, \, E)$ with the ordering of items $<_I$ that satisfies the adjacency property, let $<_P$ denote the lexicographical ordering of the players based on $<_I$. For arbitrary players $p, q \in P$ and arbitrary items $i, j, \in I$, if $p$ is adjacent to $i \in I$ and  $q$ is adjacent to $j$ with $p <_P q$ and $j <_I i$, then by the adjacency property $p$ is also adjacent to $j$, meaning that the neighbourhood of $p$ contains the ``minimum'' of the two items $i$ and $j$ with respect to the ordering $<_I$. The enclosure property dictates that $q$ also be adjacent to $i$, the maximum of the two items $i$ and $j$, hence the name ``Min-Max'' ordering (not to be confused with Min-Max allocation problem) property. Note that not all convex graphs satisfy the enclosure (or Min-Max ordering) property. In fact, the class of convex graphs that satisfy the enclosure property is a proper subclass of the convex class and is known to be equivalent to two famous graph classes, the bipartite permutation graphs and the proper interval bigraphs.


The following observation shows an interesting property of graphs that satisfy both the adjacency and enclosure properties.

\begin{observation}
\label{obs:double-interval}
Assume we are given an inclusion-free convex graph $H = (I, \, P, \, E)$ alongside an ordering of items $<_I$ that satisfies both the adjacency and the enclosure properties. If $I$ is ordered according to $<_I$, then the lexicographical ordering of $P$ based on $<_I$ also satisfies both properties. In other words, when $P$ is ordered lexicographically based on $<_I$, (i) the neighbourhood of every item $x \in I$ forms an interval in $P$, and furthermore, (ii) for every pair of items $x_1, x_2 \in I$, their respective neighbourhoods are either properly overlapping, left inclusive, or right inclusive.
\end{observation}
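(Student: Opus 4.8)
The plan is to first isolate and prove the \emph{Min-Max ordering property} that the paragraph preceding the statement only sketches, and then derive both (i) and (ii) from it. Concretely, I would establish the following claim: if $p <_P q$ are two players, $i, j \in I$ are two items with $j <_I i$, and $p$ is adjacent to $i$ while $q$ is adjacent to $j$, then $p$ is adjacent to $j$ (the smaller item) and $q$ is adjacent to $i$ (the larger item). The adjacency of $p$ to $j$ is immediate from the adjacency property: $p <_P q$ gives $\ell_p \leq_I \ell_q$, and $q$ adjacent to $j$ gives $\ell_q \leq_I j$, so $\ell_p \leq_I j <_I i \leq_I r_p$ and hence $j \in [\ell_p, r_p]$. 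The adjacency of $q$ to $i$ is the delicate direction and is where all three hypotheses (adjacency, enclosure, and the lexicographic tie-break) are needed: assuming the contrary forces $r_q <_I i \leq_I r_p$, so that $N(q) = [\ell_q, r_q] \subseteq [\ell_p, r_p] = N(p)$; if $\ell_p <_I \ell_q$ this is a margined inclusion, forbidden by the enclosure property, while if $\ell_p = \ell_q$ the lexicographic order forces $r_p \leq_I r_q$, contradicting $r_q <_I r_p$. This case split is the main obstacle, since it is the only point where the precise tie-breaking rule in the lexicographic order of $P$ is essential.

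With the Min-Max property in hand, part (i) follows directly. To show the neighbourhood of a fixed item $x$ is an interval in $<_P$, take $p <_P q <_P s$ with $p$ and $s$ both adjacent to $x$, and argue $q$ is adjacent to $x$. The bound $\ell_q \leq_I x$ comes from $q <_P s$ (so $\ell_q \leq_I \ell_s$) together with $s$ adjacent to $x$ (so $\ell_s \leq_I x$); the bound $x \leq_I r_q$ is obtained exactly as in the delicate direction above, using $p <_P q$, $p$ adjacent to $x$, and a contradiction with either margined inclusion or the tie-break. Hence $x \in [\ell_q, r_q]$ and $q$ is adjacent to $x$.

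For part (ii), I would argue by contradiction. Suppose some pair of items $x_1, x_2$ exhibits a margined inclusion of their $P$-neighbourhoods, say $N(x_1) = [a,b]$ is strictly contained in $N(x_2) = [c,d]$ with $c <_P a$ and $b <_P d$ (both strict), where $a,b,c,d$ are players and the neighbourhoods are intervals by part (i). Then $c$ and $d$ are adjacent to $x_2$ but not to $x_1$, whereas $a$ and $b$ are adjacent to both. If $x_1 <_I x_2$, apply the Min-Max property to $c <_P a$ with the item pair $(j,i) = (x_1, x_2)$: since $c$ is adjacent to $x_2$ and $a$ to $x_1$, the property forces $c$ adjacent to $x_1$, contradicting $c \notin N(x_1)$. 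If instead $x_2 <_I x_1$, apply it symmetrically at the right end to $b <_P d$ with $(j,i) = (x_2, x_1)$, forcing $d$ adjacent to $x_1$, again a contradiction. Since one of the two cases always applies, no margined inclusion of item-neighbourhoods can occur, which is precisely the enclosure property for the lexicographic ordering of $P$.

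The bulk of the real work is concentrated in the Min-Max property; once it is proven, parts (i) and (ii) are short. I expect the main obstacle to be bookkeeping the endpoint inequalities correctly and, in particular, verifying that the lexicographic tie-break (rather than the enclosure property alone) is what rules out the coincident-left-endpoint sub-case. A secondary point worth stating explicitly is the equivalence used throughout: for two nested intervals $N(q) \subseteq N(p)$, the enclosure property (that $N(p) \setminus N(q)$ is consecutive) holds if and only if they share a left or a right endpoint, i.e.\ the inclusion is not margined.
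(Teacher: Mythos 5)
Your proposal is correct, but it is organized quite differently from the paper's proof. The paper proves part (i) directly by contradiction, assuming a gap $p' <_P p <_P p''$ in the $P$-neighbourhood of an item $x$ and then running a three-way case analysis on whether $x$ lies in the middle, on the left boundary, or on the right boundary of $N(p')$, with each case resolved by the lexicographic tie-break or the enclosure property; it then proves part (ii) separately by a direct endpoint-inequality contradiction (players $p' <_P p <_P p''$ witnessing a margined inclusion of item intervals force $r_{p''} <_I r_p$, contradicting $p <_P p''$), handling one of the two item orderings explicitly and the other ``by symmetry.'' You instead promote the Min-Max ordering property --- which the paper only asserts informally in the expository paragraph before the observation --- to a standalone lemma, prove it rigorously (correctly identifying that the coincident-left-endpoint subcase needs the tie-break rather than enclosure, and that the strict-inclusion subcase needs enclosure together with the nonemptiness of the inner interval), and then derive both (i) and (ii) as short corollaries: (i) via betweenness $p <_P q <_P s$, where the left bound $\ell_q \leq_I x$ is immediate and the right bound reuses the lemma's delicate direction, and (ii) by applying the lemma once in each of the two possible item orderings. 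Your route buys modularity and reuse --- the single delicate argument appears once instead of being rehearsed across the paper's three cases, and your part (ii) treats both cases uniformly rather than appealing to symmetry --- while the paper's direct case analysis buys a self-contained, picture-driven argument for each part that does not require first formalizing the Min-Max property. One point you handle that deserves the explicit statement you give it: the equivalence between ``enclosure of nested intervals'' and ``shared left or right endpoint'' requires the inner neighbourhood to be nonempty, which holds in all your applications since the inner player is adjacent to a witnessing item.
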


\begin{proof} We first show part \emph{(i)}. For the sake of contradiction, assume there is an item $x \in I$ such that the neighbourhood of $x$ in $P$ does not form an interval. This implies that a gap exists in the neighbourhood of $x$ in $P$, meaning that there are players $p' <_P p <_P p''$ such that $x$ in adjacent to $p'$ and $p''$ but not $p$. Here, $<_P$ denotes the lexicographical ordering of players. Since the neighbourhood of players form intervals in $I$, the neighbourhood of $p'$ can include $x$ in one of the three ways depicted in \figref{double_interval1}. The dotted edges represented the non-edges. We go through each case:

\begin{description}
\item[Case 1:] $x$ is in the middle of the interval of $p'$. As $p$ is not adjacent to $x$, we first consider the case in which $\ell_p <_I x$. We can conclude $r_p <_I x$ since the neighbourhood of $p$ is an interval in $I$. This means that either $\ell_p = \ell_{p'}$ which together with the fact that $r_p <_I r_p'$ (note that $p'$ is adjacent to $x$ and $p$ is not, so the right boundary of the neighbourhood of $p'$ must be to the right of that of $p$ in this case) means that $p$ should come before $p'$ in the lexicographical ordering contradicting the assumption that $p' <_P p$, or we have that $\ell_{p'} <_I \ell p$ and $r_p <_I r_p'$ which contradicts the assumption that the graph satisfies the enclosure property (the neighbourhood of $p$ falls completely inside that of $p'$). Therefore it follows that $x <_I \ell_p$. In this case, since $p <_P p''$, we either have that $\ell_p = \ell_{p''}$ or $\ell_p <_I \ell_{p''}$. In both this situations, $p''$ cannot be adjacent to $x$, implying that the assumption of Case 1 over the neighbourhood of players is false.

\item[Case 2:] $x$ is on the left boundary of the interval of $p'$. In this case, we certainly have that  $x <_I \ell_p$ as $p$ is lexicographically larger than $p'$ and not adjacent to $x$. As in Case 1, either $\ell_p = \ell_{p''}$ or $\ell_p <_I \ell_{p''}$ due to the lexicographical ordering. In both situations, $p''$ cannot be adjacent to $x$ which makes Case 2 a contradiction as well. 

\item[Case 3:] $x$ is on the right boundary of the interval of $p'$. Again, $x <_I \ell_p$, as assuming otherwise leads to a contradiction we mentioned in Case 1. Therefore, since $p <_P p''$, we get the two cases as before, $\ell_p = \ell_{p''}$ or $\ell_p <_I \ell_{p''}$, both leading to a contradiction.
\end{description}

\begin{figure}
    \centering
    \begin{subfigure}[b]{0.32\textwidth}
        \begin{tikzpicture}
            \draw [black, thick] (0, 0) -- (-1, 2);
            \draw [black, thick] (0, 0) -- (1, 2);
            \draw [black, thick] (2, 0) -- (0, 2);            
            \draw [black, thick, dashed] (1, 0) -- (0, 2);          
            \draw [draw = black, fill = white, thick] (0, 0) circle (0.2cm);
               \draw [draw = black, fill = white, thick] (1, 0) circle (0.2cm);
              \draw [draw = black, fill = white, thick] (2, 0) circle (0.2cm);
            \draw [draw = black, fill = white, thick] (0, 2) circle (0.2cm);
               \draw [draw = black, fill = white, thick] (-1, 2) circle (0.2cm);
               \draw [draw = black, fill = white, thick] (1, 2) circle (0.2cm);            
            \node[below = 0.2cm] at (0, 0) {$p'$};
            \node[below = 0.33cm] at (1, 0) {$p$};
            \node[below = 0.2cm] at (2, 0) {$p''$};          
            \node[above = 0.2cm] at (0, 2) {$x$};
        \end{tikzpicture}
        \caption{$x$ in the Middle}
    \end{subfigure}
       \begin{subfigure}[b]{0.32\textwidth}
        \begin{tikzpicture}
            \draw [black, thick] (0, 0) -- (0, 2);
            \draw [black, thick] (0, 0) -- (1, 2);
            \draw [black, thick] (2, 0) -- (0, 2);            
            \draw [black, thick, dashed] (1, 0) -- (0, 2);          
            \draw [black, thick, dashed] (0, 0) -- (-1, 2);              
            \draw [draw = black, fill = white, thick] (0, 0) circle (0.2cm);
               \draw [draw = black, fill = white, thick] (1, 0) circle (0.2cm);
              \draw [draw = black, fill = white, thick] (2, 0) circle (0.2cm);
            \draw [draw = black, fill = white, thick] (0, 2) circle (0.2cm);
               \draw [draw = black, fill = white, thick] (-1, 2) circle (0.2cm);            
               \draw [draw = black, fill = white, thick] (1, 2) circle (0.2cm);            
            \node[below = 0.2cm] at (0, 0) {$p'$};
            \node[below = 0.33cm] at (1, 0) {$p$};
            \node[below = 0.2cm] at (2, 0) {$p''$};          
            \node[above = 0.2cm] at (0, 2) {$x$};
        \end{tikzpicture}
        \caption{$x$ on the Left Boundary}
    \end{subfigure}   
       \begin{subfigure}[b]{0.32\textwidth}
        \begin{tikzpicture}
            \draw [black, thick] (0, 0) -- (-1, 2);
            \draw [black, thick] (0, 0) -- (0, 2);
            \draw [black, thick] (2, 0) -- (0, 2);            
            \draw [black, thick, dashed] (1, 0) -- (0, 2);          
            \draw [black, thick, dashed] (0, 0) -- (1, 2);              
            \draw [draw = black, fill = white, thick] (0, 0) circle (0.2cm);
               \draw [draw = black, fill = white, thick] (1, 0) circle (0.2cm);
              \draw [draw = black, fill = white, thick] (2, 0) circle (0.2cm);
            \draw [draw = black, fill = white, thick] (0, 2) circle (0.2cm);
               \draw [draw = black, fill = white, thick] (-1, 2) circle (0.2cm);
               \draw [draw = black, fill = white, thick] (1, 2) circle (0.2cm);                        
            \node[below = 0.2cm] at (0, 0) {$p'$};
            \node[below = 0.33cm] at (1, 0) {$p$};
            \node[below = 0.2cm] at (2, 0) {$p''$};          
            \node[above = 0.2cm] at (0, 2) {$x$};
        \end{tikzpicture}
        \caption{$x$ on the Right Boundary}
    \end{subfigure}     
    \caption{Different cases a gap may exist in the interval of an item $x$ in an inclusion-free convex graph}
    \label{fig:double_interval1}
\end{figure}
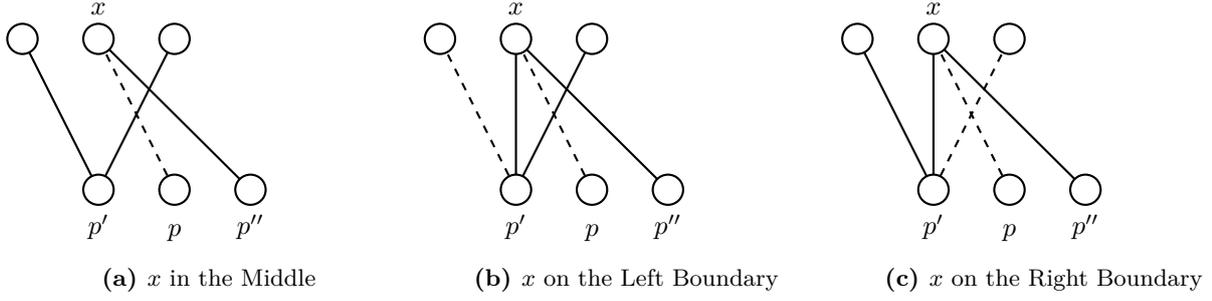

The assumption that a gap exists in the neighbourhood of $x$ leads to a contradiction in either case. This proves part \emph{(i)} of the observation. For part \emph{(ii)}, we assume there is an item $x$ whose neighbourhood is marginally included inside the neighbourhood of some other item $x'$. Again, there are two cases: one in which $x' <_I x$, and the other $x <_I x'$. We only show the former as the latter follows from symmetry. Since the interval of $x$ is entirely inside the neighbourhood of $x'$ without touching any of the boundaries, there must exist players $p' <_P p <_P p''$ such that $p$ is adjacent to both $x$ and $x'$, but $p'$ and $p''$ are only adjacent to $x'$, lying on the left and right side of the neighbourhood of $x$. \figref{double_interval2} depicts the neighbourhood of players and items, with the dashed edges representing the non-edges. Note that the neighbourhood of each of the players is an interval in $I$. Players $p'$ and $p''$ are both adjacent to $x'$ but not $x$, while player $p$ is adjacent to $x$. Since $x' <_I x$, we conclude that $\ell_p' <_I r_p' <_I r_p$ and $\ell_p'' <_I r_p'' <_I r_p$. This contradicts the assumption that $p''$ is lexicographically greater than $p$. Therefore the intervals of items in the set $P$ must be inclusion-free.
\end{proof}

Given a convex graph $H = (I, \, P, \, E)$, and ordering $<_I$ of items that satisfies the adjacency property, and a target value $t$, our goal is to find a $t$-assignment of items to players, defined informally as a feasible assignment such that each player (machine) receives total utility at most $t$. More precisely, 

\begin{definition}[$t$-assignment for the Max-Min case]
A $t$-assignment, for any $t \ge 0$, is a feasible assignment such that every player $p_j$ receives a set of items $I^j \subseteq [\ell_{p_j}, \, r_{p_j}]$ with total value \underline{at least} $t$.
\end{definition}

In this paper, we only deal with the easier case of inclusion-free instances. All the definitions and theorems in later sections apply to the class of inclusion-free convex graphs (unless mentioned otherwise explicitly). To the best of our knowledge, no PTAS is known for general instances with margined-inclusive intervals, where the convex graph only satisfies the adjacency property. Whether there exists a PTAS for convex graphs, in general, is an interesting research question that we pose as an open problem.

\subsubsection{A Note on the Notation}

Throughout the paper, we use \textbf{bold} mathematical symbols to refer to vectors. Superscripts are reserved for player indices. For simplicity, we denote the induced subgraphs of the input convex graph $H$ by $H' = (I', \, P')$ for some $I' \subseteq I$ and $P' \subseteq P$ and refrain from explicitly mentioning the subset of edges $E' \subseteq E$. The set $E'$ contains all those edges of $E$ that are incident to at least one vertex in $I'$ \underline{and} at least one vertex in $P'$. The \underline{underline} is used for emphasis. Finally, we use \emph{italics} to highlight the terms we define in the paper.

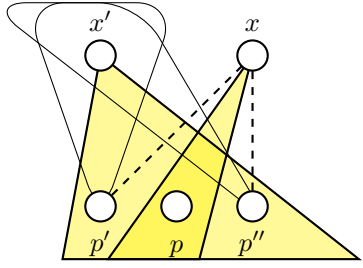
\begin{figure}
    \centering
    \begin{tikzpicture}
             \draw [fill = yellow, opacity = 0.4] (1, 2) -- (-0.9, -0.7) -- (0.3,  -0.7) -- cycle;
             \draw [fill = yellow, opacity = 0.4] (-1, 2) -- (-1.5, -0.7) -- (2.4, -0.7) -- cycle; 
             \draw [thick] (1, 2) -- (-0.9, -0.7) -- (0.3,  -0.7) -- cycle;
             \draw [thick] (-1, 2) -- (-1.5, -0.7) -- (2.4, -0.7) -- cycle;                        
             \draw [rounded corners = 15pt] (1.2, -0.2) -- (-2.5, 2.7) -- (-0.5,  2.7) -- cycle;
             \draw [rounded corners = 15pt] (-1, -0.2) -- (-2, 2.7) -- (0,  2.7) -- cycle;           
             \draw [dashed, thick] (-1, 0) -- (1, 2);
             \draw [dashed, thick] (1, 0) -- (1, 2);
            \draw [draw = black, fill = white, thick] (0, 0) circle (0.2cm);
               \draw [draw = black, fill = white, thick] (1, 0) circle (0.2cm);
              \draw [draw = black, fill = white, thick] (-1, 0) circle (0.2cm);
               \draw [draw = black, fill = white, thick] (-1, 2) circle (0.2cm);
               \draw [draw = black, fill = white, thick] (1, 2) circle (0.2cm);            
            \node[below = 0.33cm] at (0, 0) {$p$};
            \node[below = 0.2cm] at (1, 0) {$p''$};
            \node[below = 0.2cm] at (-1, 0) {$p'$};          
            \node[above = 0.2cm] at (-1, 2) {$x'$};
            \node[above = 0.2cm] at (1, 2) {$x$};
    \end{tikzpicture}
    \caption{The contradictory assumption that intervals of items in the set of players $P$ are not inclusion-free}
    \label{fig:double_interval2}
\end{figure}

\subsection{Preprocessing the Instance}

 Given a particular instance of the problem, we first simplify the input instance. For a positive rational number $t$, we may assume that the value of each item is at most $t$. If item $x_i$ has value $v_i >t$ then we set $v_i$ to $t$ without loss of generality. By a proper scaling, \emph{i.e.}, dividing each value by $t$, we may assume that the value of each item is in $[0, 1]$. Observe that a $t$-assignment becomes a $1$-assignment. We do a binary search to find the largest value of $t$ for which a $t$-assignment exists (in the Min-Max case we seek the smallest $t$ such that a $t$ assignment exists). The binary search is carried out in the interval $[0, \, \frac{1}{n}\sum_{i = 1}^{m} v_i]$ where 0 is an absolute lower bound, and $\frac{1}{n}\sum_{i = 1}^{m} v_i$ is an absolute upper bound of the optimal solution respectively (for the Min-Max case the binary search is carried out in the interval $[\frac{1}{n} \sum_{i = 1}^{m} v_i, \sum_{i = 1 }^{m} v_i]$).

For a subset $P' \subseteq P$ of players, let $N_{[H]}(P')$ be the union of the set of all neighbours of the players in $P'$ in the graph $H$. We let $N(P')$ denote this set whenever the graph $H$ is implied by the context. For an interval $[i,j]$ of  items, let $P[i,j]$ be the set of players whose \underline{entire} neighbourhood lies in $[i,j]$, that is $P[i,j] = \{ p \in P: N(p) = [\ell_p,r_p] \subseteq [i,j] \}$. For a subset $I' \subseteq I$ of items, let $val(I')$ denote the sum of values of all the items in $I'$. We note that in every $1$-assignment, for every subset $P' \subseteq P$ of players, the sum of the value of items in its neighbourhood should be at least $|P'|$. In other words, $\forall P' \subseteq P: \, val(N(P')) \geq |P'|$. If the value of each item is $1$, then this condition is the well known Hall's condition \cite{hall}, a condition sufficient and necessary for a bipartite graph to have a perfect matching. We consider a more general version of the $1$-assignment and derive a generalized version of Hall's condition below. Each player $p \in P$ has a demand $d(p)$. This version contains the $1$-assignment as a special case, i.e., the case where $d(p) = 1$ for all players $p \in P$. Also, for a subset of players $P' \subseteq P$, let $d(P')$ denote the sum of demands of all players in $P'$. Now the generalized Hall's condition (for the Max-Min case) becomes: $\forall P' \subseteq P: \, val( N(P') ) \geq d(P')$. From now on we refer to this condition as Hall's condition. Note that this condition is necessary to satisfy the players' demands, but not sufficient (see \figref{counter_hall}). \lemref{hall} shows that in order to check Hall's condition for $H$ it suffices to check it for every interval of items. Therefore, Hall's condition in our setting becomes \condref{hall} below:

\begin{equation}
\forall \,\, [\ell, r] \subseteq [1,\, m]:  \qquad  val([\ell,r]) \ge d(P[\ell,r]). \label{cond:hall}
\end{equation}

\begin{lemma}\label{lem:hall}
In order to check Hall's condition for $H$ it suffices to verify \condref{hall}. In other words, it suffices to check Hall's condition
for every set of players $P[\ell,r]$, $[\ell,r] \subseteq [1, \, m]$.
\end{lemma}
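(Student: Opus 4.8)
The plan is to prove the equivalence of Hall's condition with \condref{hall} by treating the two implications separately, reserving the real work for the direction \condref{hall} $\Rightarrow$ full Hall. The easy direction is essentially free: assuming the full condition $val(N(P')) \ge d(P')$ for all $P' \subseteq P$, I would simply instantiate it at the particular subset $P' = P[\ell, r]$ for each interval $[\ell,r]$. Since every player in $P[\ell,r]$ has its whole neighbourhood inside $[\ell,r]$, we have $N(P[\ell,r]) \subseteq [\ell,r]$ and hence $val([\ell,r]) \ge val(N(P[\ell,r])) \ge d(P[\ell,r])$, which is exactly \condref{hall}. So the content of the lemma is entirely in the reverse implication.

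For the substantive direction, assume \condref{hall} holds and fix an arbitrary $P' \subseteq P$. I would write $N(P') = \bigcup_{p \in P'} [\ell_p, r_p]$ and, using the adjacency property (each summand is a contiguous interval of items), decompose this union into its maximal runs of consecutive items, say $N(P') = B_1 \cup \cdots \cup B_s$, where the blocks $B_k$ are pairwise disjoint and separated by gaps. The key structural observation is that each individual interval $[\ell_p, r_p]$, being connected, cannot straddle a gap between two blocks; it therefore lies entirely inside exactly one $B_k$. This yields a partition $P' = P'_1 \cup \cdots \cup P'_s$ with $P'_k = \{\, p \in P' : [\ell_p, r_p] \subseteq B_k \,\}$.

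From here the estimate is a direct summation. For each $k$ we have $P'_k \subseteq P[B_k]$ (every player of $P'_k$ has its neighbourhood contained in $B_k$), so by nonnegativity of the demands $d(P'_k) \le d(P[B_k])$. Applying \condref{hall} to the block $B_k$ gives $val(B_k) \ge d(P[B_k]) \ge d(P'_k)$. Since the blocks are disjoint and the $P'_k$ partition $P'$, summing over $k$ gives $val(N(P')) = \sum_{k=1}^{s} val(B_k) \ge \sum_{k=1}^{s} d(P'_k) = d(P')$, which is Hall's condition for $P'$. As $P'$ was arbitrary, the full condition follows.

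The hard part — really the only point that needs care — is the block decomposition and the claim that each player's interval is confined to a single maximal block. This is precisely where the interval (adjacency) structure is indispensable: in a general bipartite graph a neighbourhood could span the gap between two blocks, and the clean partition of $P'$ would break down, invalidating the block-by-block application of \condref{hall}. I would also be explicit that the demands are nonnegative (here all equal to $1$, but the generalized version only needs $d \ge 0$) so that passing from $P'_k$ to the larger set $P[B_k]$ is valid. Everything else reduces to summing $val$ over disjoint intervals, which is routine; note also that the argument uses only the adjacency property, so no appeal to the enclosure property is required.
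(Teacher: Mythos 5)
Your proof is correct and follows essentially the same route as the paper's: the easy direction by instantiating Hall's condition at $P' = P[\ell,r]$, and the substantive direction via the decomposition of $N(P')$ into maximal intervals and the induced partition of $P'$, applying \condref{hall} block by block. The only difference is that you argue the hard direction directly while the paper phrases it contrapositively (extracting a violating block from a violating $P'$), which is an immaterial distinction.
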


\begin{proof}
Assume we are given a convex graph $H$. We claim that it is sufficient to check Hall's condition only for intervals of items, meaning that if there are any violations of Hall's condition, then there is at least one violation over an interval of items. In other words, we show that there exists a subset of the players $P'$ for which \emph{(i)} $N(P')$ consists of several maximal intervals (therefore $N(P')$ is not an interval itself) and \emph{(ii)} Hall's condition is violated for $P'$ if and only if there exists an interval of items $[\ell, \, r] \subseteq I$ for which \condref{hall} is violated.

\noindent
$\Rightarrow$: Assume that there exist several maximal intervals $J_1, \, J_2, \, \ldots, \, J_t$ whose union gives $N(P')$. Since each player is adjacent to an interval of items, there exists a corresponding partition of $P'$ into subsets $P'_1, \, P'_2, \, \ldots, \, P'_t$, such that $N(P'_i) = J_i$. Since Hall's condition is violated, we have

\begin{displaymath}
val(J_1) + \, \ldots \, + val(J_t) = val(N(P')) < d(P') = d(P'_1) + \, \ldots \, +  d(P'_t )
\end{displaymath}

\noindent
Thus, there must exist an $i, 1 \leq i \leq t,$ for which $val(J_i) = val( N(P'_i) ) < d(P'_i )$. Now let $\ell$ and $r$ be the leftmost and rightmost items in $J_i$ respectively. Since $N(P'_i) = J_i = [\ell, \, r]$, then the neighbourhood of every player in $P'_i$ should fall entirely in the interval $[\ell, \, r]$. Therefore, $P'_i \subseteq P([\ell, \, r])$. Thus, we conclude $val([\ell, \, r]) < d(P'_i) \leq d(P([\ell, \, r]))$, which is a violation of \condref{hall} for the interval $[\ell, \, r]$.

\noindent
$\Leftarrow$: 
Assume that \condref{hall} is violated for an interval of items $[\ell, \, r] \subseteq [1, \, m]$, meaning that $val([\ell, \, r]) < d(P([\ell, \, r]))$. We have that $N(P([\ell, \, r])) \subseteq [\ell, \, r]$ since otherwise $P([\ell, \, r])$ contains a player whose entire neighbourhood does not fall in the set $[\ell, \, r]$, which contradicts the definition of $P([\ell, \, r])$. Let $P'$ be $P([\ell, \, r])$. We then have that $val(N(P')) \leq val([\ell, \, r]) < d(P')$, which means that Hall's condition is violated for the set $P'$. This completes the second direction of the claim.
\end{proof}

As a result of \lemref{hall}, it is sufficient to check Hall's condition for every interval of items. Since there are at most $m^2$ intervals, Hall's condition can be verified in time polynomial in the size of $H$.

\begin{figure}
\centering
\includegraphics[scale=0.85]{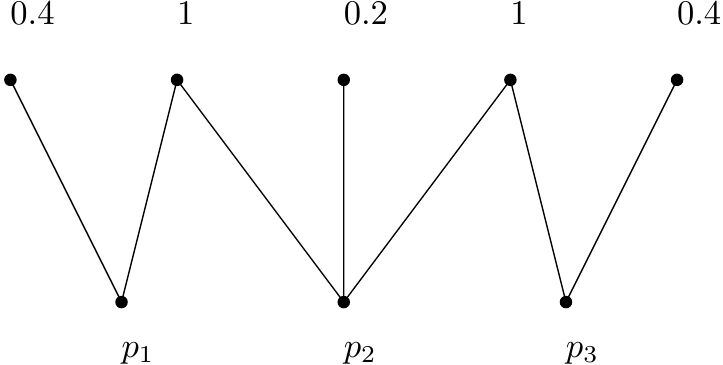}
\caption{An instance in which Hall's condition is satisfied for $t=1$ but the optimal solution value is not greater than $0.4$. In this examples, $d(p_1) = d(p_2) = d(p_3) = 1$.}
\label{fig:counter_hall}
\end{figure}

\subsubsection{Rounding the Instance}
\label{sec:rounding_santa}
In a given instance $H$ of the problem, we round the item values to obtain a rounded instance $H'$. For any integer $ k \geq 4$, we let $\frac{1}{k}$ be the error parameter. For each instance for which there is an optimal $1$-assignment, we seek an assignment such that each player receives a set of items with total value at least $1-\frac{4}{ k + 1 }$, $k \ge 4$.
\begin{itemize}
\item We call an item $x_i$ \emph{small} if its value is less than or equal to $\frac{1}{k}$. The values of small items are not changed in $H'$.
\item We call an item \emph{big} if its value is greater than $\frac{1}{k}$. If the value of item $x_i$, $v_i$, is in the interval $(\frac{1}{k}(1+\frac{1}{k})^\tau, \frac{1}{k}(1 + \frac{1}{k})^{\tau + 1}]$ for $\tau \geq 0$, it is rounded to $\frac{1}{k}(1 + \frac{1}{k})^{\tau + 1}$ in $H'$.
\end{itemize}

After the rounding, there are at most $C =  \lceil \log k / (\log ( 1+ \frac{1}{k}) ) \rceil$ \emph{categories} of big items, that is items with distinct values more than $\frac{1}{k}$. One can easily verify that $C$ is no more than $k^{1.4}$. For $0 \le \tau \le C$, let $q_{\tau} = \frac{1}{k}(1+\frac{1}{k})^{\tau}$. For subset $I'$ of $I$ let $v_s(I')$ denote the value of the small items in $I'$. The rounding process in the Min-Max case is slightly different and we include it in the corresponding section. 

\subsection{The Algorithm for Inclusion-Free Convex Graphs}
\label{sec:algorithm}

In this section, we consider the cases in which the neighbourhoods of no two players form a pair of margined-inclusive intervals. Note that left and right inclusion may still occur in such instances. The main result of this section is the following theorem.

\begin{theorem}\label{thm:finalItem}
Let $H$ be an instance of the problem before rounding with $n$ players and $m$ items. Then for $k \ge 4$ there exists a $(1-\frac{4}{k+1})$-approximation algorithm for the Max-Min allocation problem on inclusion-free convex graphs with running time $\bigO{(m + n)nm^{2(C+1)}}$, in which $C \leq k^{1.4}$.
\end{theorem}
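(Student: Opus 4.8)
The plan is to reduce \thmref{finalItem} to finding, inside the rounded instance $H'$ produced in \secref{rounding_santa}, an assignment in which every player receives value at least $1-\tfrac{4}{k+1}$ whenever the original (scaled) instance admits a $1$-assignment, and then to locate such an assignment by dynamic programming over a polynomially bounded family of \emph{canonical} assignments. First I would fix the target via the binary search from the preprocessing step, scale so that the sought value is $t=1$ and every item value lies in $[0,1]$, and record that rounding the big items up to the values $q_\tau=\tfrac1k(1+\tfrac1k)^\tau$ only increases utilities, so a $1$-assignment of the scaled instance remains a $1$-assignment of $H'$; conversely any assignment of value $v$ in $H'$ yields value at least $v/(1+\tfrac1k)$ in the scaled instance. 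Thus it suffices to work entirely inside $H'$, where big items take only $C\le k^{1.4}$ distinct values and small items have value at most $\tfrac1k$.

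The heart of the argument is a structural lemma asserting that, given any $1$-assignment of $H'$, there is an assignment of value at least $1-\tfrac{4}{k+1}$ per player that is canonical in the following sense: if players are processed in the lexicographic order $<_P$, then for each big-item category $\tau\in\{1,\dots,C\}$ the category-$\tau$ items handed to consecutive players form blocks that advance monotonically along $<_I$, and the small items each player receives fill a contiguous sub-interval of $[\ell_p,r_p]$. To prove this I would invoke \obsref{double-interval}: because the lexicographic ordering of players also enjoys the adjacency and enclosure properties, the neighbourhoods $[\ell_p,r_p]$ form a staircase with both endpoints nondecreasing, so whenever two players cross on a big item of the same category an exchange argument removes the crossing without changing either player's received value, and whenever a player's small items are scattered they can be slid into a contiguous block inside her interval. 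The only value a player need surrender during this uncrossing is at most one big item together with a small remainder, and I would charge these losses against the rounding slack so that the per-player guarantee degrades from $1$ to $1-\tfrac{4}{k+1}$ rather than accumulating across players.

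Given the canonical form, I would search for the desired assignment by a left-to-right dynamic program over players in the order $<_P$. The state after committing $p_1,\dots,p_j$ records, for each of the $C+1$ item types (the $C$ big categories together with the aggregate of small items), the two pointers in $[1,m]$ delimiting the block of that type already consumed; monotonicity from the structural lemma guarantees these pointers only advance, so the number of reachable states is at most $m^{2(C+1)}$. For the transition at player $p_j$ I would enumerate, consistently with $[\ell_{p_j},r_{p_j}]$ and using \condref{hall} together with \lemref{hall} to certify feasibility of the remaining sub-instance, how far each pointer moves, testing in time $\bigO{m+n}$ whether $p_j$ thereby reaches value $1-\tfrac{4}{k+1}$; ranging over the $n$ players yields the claimed $\bigO{(m+n)\,n\,m^{2(C+1)}}$ running time.

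The step I expect to be the main obstacle is the structural lemma, specifically controlling the total value lost during uncrossing. Monotonising the big-item blocks and compacting the small items are each easy in isolation, but the two operations interact, since sliding small items may force a big item to move and vice versa, so the delicate point is to bound the loss by the fixed quantity $\tfrac{4}{k+1}$ for every player simultaneously and to prevent the losses from telescoping along the staircase. This is where the inclusion-free Min-Max ordering hypothesis is indispensable: a single margined-inclusive pair would already destroy the monotonicity that makes the pointers, and hence the polynomial-size state space, valid.
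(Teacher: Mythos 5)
Your overall architecture matches the paper's: round into $C\le k^{1.4}$ big categories, prove that a near-optimal assignment with a canonical ``aligned'' structure exists via exchange arguments exploiting the Min-Max ordering, run a dynamic program over configuration states of size $m^{\bigO{C}}$, and divide out the $(1+\frac1k)$ rounding factor at the end. However, two of your steps have genuine gaps. First, the loss accounting in your structural lemma does not work as budgeted: you allow a player to ``surrender at most one big item'' during uncrossing and propose to charge this to the slack $\frac{4}{k+1}$, but a big item has rounded value $q_\tau=\frac1k(1+\frac1k)^\tau$, which for large $\tau$ is close to $1$ and cannot be absorbed by any $\bigO{1/k}$ budget. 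The paper's \lemref{alignment} avoids any big-item loss: within a fixed category, an out-of-alignment pair is repaired by an \emph{exact one-for-one swap} (the adjacency property plus the lexicographic order guarantee the required edges exist), so each player's big-item value is preserved exactly, and only the small items cost up to $\frac1k$; a further $\frac2k$ is then lost in reconstruction (\lemref{non-wasteful_assign}), giving $1-\frac3k$ in the rounded instance and $(1-\frac3k)/(1+\frac1k)=1-\frac{4}{k+1}$ after unrounding. Your proposal never produces this constant, and your worry about small items ``forcing a big item to move'' is moot in the paper's scheme precisely because big and small items are aligned in two independent rounds.

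Second, your DP state --- two pointers per category delimiting ``the block of that type already consumed'' --- presupposes that the consumed items of each category form a single contiguous block, and this is exactly the assertion that fails without an additional idea. The paper's \exmref{private} exhibits two right-aligned $1$-assignments with \emph{identical} configuration vectors whose remainder graphs differ, because private items (degree-$1$ in the remainder) can be left behind and become stranded; alignment alone therefore does not make the consumed set reconstructible, contiguous or otherwise. The paper repairs this with the non-wastefulness condition (\obsref{non-wasteful_existence}: unassigned private items are handed to their owners at no cost) and then proves, in \clmref{no_margin}, that inclusion-freeness forces the remaining big items of each category to be exactly a left prefix --- this is the lemma your contiguity claim silently relies on and does not prove. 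Relatedly, even granting contiguity, your state space of $m^{2(C+1)}$ pointer configurations with transitions enumerated on top does not yield the claimed $\bigO{(m+n)\,n\,m^{2(C+1)}}$ bound; the paper gets it by using count vectors ($m^{C+1}$ columns) as states and enumerating \emph{pairs} of vectors in the transition, with the positional information recovered by the left-to-right sweep of the \textsf{retrieve} routine rather than stored in the state.
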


Our proof technique builds upon a previous work of Alon et al. \cite{AAWY98}. In particular, we use the notion of ``input vectors'' (\defref{input_vec}) in a dynamic programming algorithm that considers the players one by one and remembers plausible allocations for them that can lead to a ($1 - \epsilon$)-assignment. Input vectors are configuration vectors that indicate the number of available items in an instance of the problem or one of its subproblems. In their paper, Alon et al. study complete bipartite graphs, \emph{i.e.}, the case where all items are available to all players. This fact allows them to ignore the actual assignment of items to players in their dynamic programming algorithm and only work with the configuration vectors for the set of available items, \emph{i.e.}, the input vectors; once the configuration vector for an assignment of items to a certain player is known, any set of items that matches the configuration vector can be assigned to the player. In other words, all items of a certain size are identical to the algorithm. Since the number of all possible input vectors is polynomial in the instance size, they manage to provide a polynomial-time algorithm in the end. 

Our work focuses on a more general case, that of inclusion-free convex graphs, which contains complete bipartite graphs as a subset. In this setting, an item may be adjacent only to a subset of players. As a result, to identify a subproblem, we cannot simply represent the set of available items to each player with the corresponding input vector. On the other hand, one cannot consider all possible ways of assigning items to players since the number of such assignments is exponential and computationally prohibitive. For the case of margined inclusion- free instances, we prove one can still identify the subproblems from their respective input vectors. To that end, we introduce ($1 - \epsilon$)-assignments of a certain structure (which we call simple-structured assignments) and prove their existence in any instance of the problem which admits a $1$-assignment. Such assignments are defined in \defrefs{alignment}{non-wasteful}, and shown to exist in \lemrefs{alignment}{non-wasteful_assign}.

In this section, we assume that the convex graph $H$ is a rounded input. For a rounded instance, it suffices to set the constant parameter $\epsilon$ to $\frac{3}{k}$. In the proof of \thmref{finalItem}, we show how this bound guarantees a $1 - \frac{4}{k + 1}$ approximation factor for the original instance. The dynamic programming algorithm has a forward phase and a backward phase. In the forward phase, the algorithm checks the existence of a solution by filling up a table. During this phase, the algorithm runs $n$ \emph{steps}, one for each player. At a given step $j$, it checks for feasible ($1 - \epsilon$)-assignments $\bcA' = (I^{n - j + 1}, \, I^{n - j + 2}, \, \ldots, \, I^n)$ for players $p_{n - j + 1}, \, p_{n - j + 2}, \, \ldots, \, p_n$, called \emph{partial} assignments, based on the solutions of the sub-problems in the previous step (feasible ($1 - \epsilon$)-assignments for players $p_{n - j + 2}, \, \ldots, p_n$). During the first step ($j = 1$), the algorithm searches for ($1 - \epsilon$)-assignments for player $p_n$ from scratch, therefore this case is treated differently. The backward phase also has $n$ steps. During this phase, the algorithm considers the players in the reverse order of the forward phase, \emph{i.e.}, from $p_1$ to $p_n$, and generates the final set of items to be allocated to each player based on the information stored in the table.

Every partial assignment of the forward phase indicates a \underline{potential} assignment of items to players\footnote{The assignment will be finalized once the algorithm finishes the steps of the forward phase.}. After every such assignment, the set of available items changes, thus a subproblem is instantiated. The induced subgraph of $H$ on the set of available items and unsatisfied players is called a \emph{remainder} graph. The algorithm regards every remainder graph as an instance of a subproblem of the original allocation problem. Note that the number of possible partial ($1 - \epsilon$)-assignments and their corresponding remainder graphs can be exponential. Some remainder graphs may eventually result in a solution and some may not. Thus, at each step, a na\"{i}ve dynamic programming approach might need to remember the remainder graphs, which will lead to an exponential memory (and time) complexity. By exploiting the properties of convex graphs (specifically the Min-Max ordering) we can show that one can do better than the na\"{i}ve approach. In this section, we examine some structural properties of the problem that allow us to find a solution for any given instance while checking a polynomial\footnote{In the number of players and items, but exponential in the inverse of the error parameter.} number of simple-structured assignments. These properties will also be useful for a more general case of the problem studied in \secref{margined}. Our dynamic programming algorithm uses input vectors, which we introduce here.

\begin{definition}[Input Vector]
\label{def:input_vec}
For a given convex graph $H$ for an input instance with $\nu_\tau$ big items of size $q_\tau$ for $\tau = 1, \, 2, \, \ldots, \, C$ and small items of total value in the interval $(\frac{\nu_0 - 1}{k}, \, \frac{\nu_0}{k}]$, an input vector is a configuration vector of the form $\nu(H) = \Nu = (\nu_0, \nu_1, \ldots, \nu_{C})$. 
\end{definition}

Note that for an arbitrary input vector $\Nu = (\nu_0, \nu_1, \ldots, \nu_{C})$, $\nu_\tau \leq m$ for all big items of size $q_\tau$, $1 \leq \tau \leq C$ as there can be at most $m$ items of a certain value. Furthermore, $\nu_0$ is the total value of the small items in integral multiples of $\frac{1}{k}$ which has been rounded up. Therefore, $\nu_0 \leq m$ since $\frac{m}{k}$ is an upper bound on the total value of the small items. As a result, there are at most $m^{C + 1}$ possible input vector values, which is a polynomial in the size of the problem instance. In the rest of this paper, we let $\cV$ denote the set of all possible input vectors. Similar to the algorithm by Alon et al., we are interested in a dynamic programming approach that only deals with the input vectors in $\cV$ rather than the (potentially exponential) remainder graphs. Unlike the case of complete graphs, we cannot assign items of a certain size interchangeably. However, by imposing the right set of restrictions on the assignments we can retrieve (reconstruct) a remainder graph from any input vector whose total sum of items is almost equal to the sum of values indicated by the input vector. In particular, we wish to show the following two facts: 

\begin{fact}
\label{fact:existence}
Whenever there exists an arbitrary $1$-assignment of items to players for an instance of the problem, there also exists a restricted ($1 - \epsilon$)-assignment.
\end{fact}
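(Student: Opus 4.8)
The plan is to treat Fact~\ref{fact:existence} not as a standalone argument but as the composition of the two structural transformations that are established in \lemref{alignment} and \lemref{non-wasteful_assign}, since a \emph{restricted} assignment is by definition (\defrefs{alignment}{non-wasteful}) one that is simultaneously aligned and non-wasteful. Starting from the hypothesized arbitrary $1$-assignment $\bcA = (I^1, \dots, I^n)$, I would first invoke \lemref{alignment} to convert $\bcA$ into an \emph{aligned} assignment, paying a small per-player loss in value; I would then invoke \lemref{non-wasteful_assign} to convert the aligned assignment into a \emph{non-wasteful} one, paying a further small loss. The remaining work is bookkeeping: confirm that the non-wasteful step preserves alignment (so the output is genuinely restricted) and that the two losses together do not exceed the budget $\epsilon = \frac{3}{k}$ fixed for the rounded instance, so that a $1$-assignment degrades to a restricted $(1-\epsilon)$-assignment.

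Concretely, the first step should exploit the Min-Max ordering structure guaranteed by \obsref{double-interval}: because every item sees an \emph{interval} of players and inclusions are only left- or right-inclusive (never margined), an arbitrary assignment can be ``straightened'' so that the bundles handed to consecutive players are monotone with respect to $<_I$ and $<_P$. The rounding of \secref{rounding_santa} is what bounds the cost of this straightening, since only the $C$ categories of big items (each of value exceeding $\frac{1}{k}$) and the bucketed small-item mass (tracked by $\nu_0$ in integral multiples of $\frac{1}{k}$) ever need to be moved, and each reallocation can be charged against a bounded number of $\frac{1}{k}$-sized increments. The second step, non-wastefulness, trims each bundle down to a canonical minimal set still of value at least $1-\epsilon$; because item values were rounded into the geometric buckets $q_\tau = \frac{1}{k}(1+\frac{1}{k})^\tau$, removing any overshoot costs only $O(\frac{1}{k})$ per player.

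The main obstacle I anticipate is the \emph{composition} itself rather than either transformation in isolation: I must ensure that trimming a bundle to make it non-wasteful does not reintroduce misalignment, and symmetrically that aligning bundles does not force waste. The cleanest route is to phrase \lemref{non-wasteful_assign} so that it takes an \emph{already aligned} assignment as input and returns an assignment that is still aligned; then Fact~\ref{fact:existence} follows immediately by chaining the two lemmas and summing the two $O(\frac{1}{k})$ losses into the stated $\epsilon = \frac{3}{k}$ bound. This is precisely why the two notions are introduced separately but proved in sequence, and why the fact can be recorded here as a short corollary once both lemmas are in hand.
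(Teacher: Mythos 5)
Your high-level decomposition (align first, then enforce non-wastefulness) matches the paper's, but your second step is wired to the wrong lemma and, more importantly, to the wrong mechanism. \lemref{non-wasteful_assign} is not a transformation that produces non-wasteful assignments: it is the reconstruction lemma establishing \factref{retrieve}, i.e., it \emph{assumes} a non-wasteful right-aligned assignment already exists and shows that its remainder graphs can be recovered from the input vectors within an additive $\frac{2}{k}$ in small-item value. Invoking it to ``convert the aligned assignment into a non-wasteful one'' is circular. The step the paper actually uses is \obsref{non-wasteful_existence}, and its mechanism is the opposite of the ``trimming'' you propose: a stranded item (\defref{non-wasteful} forbids these in every remainder graph) is an \emph{unassigned} item all of whose neighbours have been removed, so deleting items from bundles can only create stranded items, never eliminate them. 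The paper instead takes each unassigned private item in a remainder graph and \emph{adds} it to the bundle of its unique owner; this costs nothing---each player's value only increases---so non-wastefulness is obtained at zero loss, and both your anticipated extra $O(\frac{1}{k})$ charge for this phase and your worry that trimming might reintroduce misalignment dissolve. (What does need checking, and what the paper only asserts in its brief sketch, is that adding private items preserves right-alignment; but that is a different concern from the one you raise.)

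Your $\epsilon$-accounting also conflates the two facts. \factref{existence}, as instantiated in the paper, costs only $\epsilon = \frac{1}{k}$, all of it incurred in the small-item round of \lemref{alignment}; the budget $\frac{3}{k}$ you try to meet includes the additive $\frac{2}{k}$ by which the left-to-right sweep of \lemref{non-wasteful_assign} may overestimate the small-item mass of a reconstructed remainder graph---a loss that belongs to \factref{retrieve} and to the analysis of Algorithm Assignment in \lemref{finalItem}, not to the existence statement. Relatedly, your claim that the alignment cost is controlled because only big items and bucketed small-item mass ``ever need to be moved'' inverts the actual proof of \lemref{alignment}: big items are exchanged exactly, value for value within each category $q_\tau$ via the adjacency property, at zero cost to the assignment vector, and it is the repacking of \emph{small} items that loses up to $\frac{1}{k}$ per player.
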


\begin{fact}
\label{fact:retrieve}
Given that a $1$-assignment exists for an instance of the problem, then there also exists a polynomial-time algorithm that, given the input vectors for every step (partial assignment) of a ($1 - \epsilon$)-assignment, reconstructs the remainder graph of every step, only from its input vector and not knowing the actual partial assignment, in such a way that the total value of the items in each reconstructed remainder graph is 
\begin{itemize}
\item the same as that of the corresponding remainder graph in the original (and unknown) $1$-assignment for every category $\tau = 1, \, 2, \, \ldots, \, C$ of the big items and,
\item only a small fraction $\epsilon$ more than its counterpart in the original remainder graph for the small items.
\end{itemize}
\end{fact}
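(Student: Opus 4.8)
The plan is to give an explicit reconstruction procedure and verify it by induction over the $n$ steps of the forward phase, the whole argument resting on a monotone ``staircase'' consequence of the lexicographic ordering. First I would establish that for lex-ordered players \emph{both} endpoints are non-decreasing, i.e. $\ell_{p_1} \le_I \cdots \le_I \ell_{p_n}$ and $r_{p_1} \le_I \cdots \le_I r_{p_n}$: the left endpoints are monotone by definition of $<_P$, while a strict inequality $\ell_{p_i} <_I \ell_{p_{i+1}}$ together with $r_{p_{i+1}} <_I r_{p_i}$ would place the interval of $p_{i+1}$ strictly inside that of $p_i$ without a shared boundary, i.e. a margined inclusion, which is forbidden; the tie case $\ell_{p_i}=_I\ell_{p_{i+1}}$ is handled by the lexicographic tie-break. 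Hence the forward phase, which consumes $p_n, p_{n-1}, \ldots, p_1$, processes the players from the rightmost interval to the leftmost, and each remainder graph is itself an inclusion-free convex graph on a prefix $p_1, \ldots, p_{n-j}$ of the players whose dual item-side structure is controlled by \obsref{double-interval}.

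Given this, the reconstruction is the natural greedy one. I start from $R'_0 = H$, whose input vector equals $\nu(H)$ exactly. At the step assigning player $p_{n-j+1}$, I read the coordinate-wise difference $\Nu^{(j-1)} - \Nu^{(j)}$ of the two \emph{given} input vectors. For each big category $\tau$ this difference is an exact integer $c_\tau$, and I hand $p_{n-j+1}$ the $c_\tau$ category-$\tau$ items of \emph{largest} index still available in $[\ell_{p_{n-j+1}}, r_{p_{n-j+1}}]$ (rightmost-first). For the small coordinate I do not match each step in isolation; instead I maintain the \emph{cumulative} small value removed, deleting available small items (again rightmost-first, within range) so that after this step the total removed equals $\tfrac1k\bigl(\nu_0^{(0)}-\nu_0^{(j)}\bigr)$ up to at most one small item, i.e. up to $\tfrac1k$.

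The crux is showing the rightmost-greedy choice for the big items never gets stuck and reproduces each category exactly, and this is where the inclusion-free structure is essential. I would argue by exchange against the (unknown) restricted $(1-\epsilon)$-assignment $T$ guaranteed by \factref{existence}, which assigns exactly the same counts $c_\tau$ to each player. If greedy $G$ and $T$ ever differ, consider the first (right-most processed) player $p=p_{n-j+1}$ and category $\tau$ where they disagree; since their counts agree there are category-$\tau$ items $a\in T$ and $b\in G$ in $N(p)$ with $b$ strictly to the right of $a$ (greedy prefers the right). In $T$, item $b$ is held by some other player $p'$, which by the choice of the first divergence is processed later than $p$, so $\ell_{p'}\le_I\ell_p$ and $r_{p'}\le_I r_p$ by the staircase. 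Then $\ell_{p'}\le_I\ell_p\le_I a<_I b\le_I r_{p'}$ shows $a\in N(p')$, so $T$ may give $b$ to $p$ and $a$ to $p'$, strictly reducing the disagreement while staying feasible. Iterating transforms $T$ into $G$, proving $G$ is feasible and that each big category appears with identically many items in every reconstructed remainder graph; since same-category items share a value, the total big value matches exactly per $\tau$.

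For the small coordinate the bound is immediate from the rounding convention and the size of small items. Because $\nu_0$ stores the small value rounded \emph{up} to a multiple of $\tfrac1k$ and $v_s(R'_0)=v_s(H)$ exactly, the cumulative-removal invariant keeps the reconstructed small value of $R'_j$ on the high side of, and within a bounded number of $\tfrac1k$ increments of, the original $v_s(R_j)$; with $\epsilon=\tfrac3k$ this is the claimed ``at most $\epsilon$ more''. Each of the $n$ steps only scans the $O(m)$ items of $N(p_{n-j+1})$, so the procedure runs in polynomial time. I expect the exchange argument of the third paragraph to be the main obstacle, as it is the one point where the adjacency and enclosure (Min-Max) properties must be combined to certify that committing early to a canonical rightmost choice never deprives a later, more leftward player.
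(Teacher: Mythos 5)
Your route genuinely differs from the paper's. The paper's proof (\lemref{non-wasteful_assign}) never simulates the assignment: it reconstructs each remainder graph \emph{canonically from its own vector} by a left-to-right sweep --- keep the leftmost $\nu_\tau$ items of each big category and a maximal leftmost set of small items of total value below $\frac{\nu_0+1}{k}$ --- and proves correctness by contradiction (\clmref{no_margin}): any discrepancy between the reconstructed and the true remainder forces either a stranded item, contradicting non-wastefulness, or a pair $p_y <_P p_x$ with $r_{p_x} < r_{p_y}$ and $\ell_{p_y} < \ell_{p_x}$, i.e.\ a margined inclusion. Your forward simulation with rightmost removals plus an exchange argument is essentially a re-run of the swap inside the paper's Alignment Lemma (\lemref{alignment}), and your staircase observation is the contrapositive of the paper's contradiction, so the structural core is sound. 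However, there is a concrete gap in the exchange: you assume every greedy-chosen item $b$ is ``held by some other player $p'$'' in $T$. The restricted assignment guaranteed by \factref{existence} is a \emph{minimal} $(1-\frac{1}{k})$-assignment and need not assign all items; $b$ may simply be unassigned in $T$. You never treat this case, and more tellingly you never invoke non-wastefulness or stranded items at all, which is exactly the hypothesis the paper's proof leans on to control leftover items. The case is repairable (if unassigned $b$ is adjacent to some later player $p''$, the staircase gives $a \in N(p'')$, so swapping $b$ to $p$ and leaving $a$ unassigned preserves non-wastefulness; if $b$ is adjacent to no later player, $T$ itself would strand $b$), but as written your induction does not close.

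Two further weaknesses. For small items, your argument is asserted rather than proved: small items have heterogeneous values, so the item-for-item exchange used for big items does not transfer, and you never show the rightmost-first removal inside $N(p)$ can always meet the cumulative target read off the vector differences; the ``within a bounded number of $\frac{1}{k}$ increments'' bookkeeping is exactly the step needing proof. The paper avoids per-step feasibility entirely by rebuilding the small-item set of each remainder directly and showing (via the same no-margined-inclusion argument) that it is a \emph{superset} of the true one with overshoot at most $\frac{2}{k}$. Finally, the Fact requires reconstructing each remainder ``only from its input vector'': your step-$j$ output depends on the entire vector path and on earlier greedy choices, and you do not show it is determined by $\Nu^{(j)}$ alone. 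This canonicity is not pedantic --- the dynamic program later pairs arbitrary marked vectors $\Nu, \Nu'$ that share no common history, and \obsref{subset} (monotonicity of \textsf{retrieve}) is what makes the difference set $\hat{I}' \, \backslash \, \hat{I}$ meaningful --- so a path-dependent reconstruction would not support the algorithm the Fact is meant to enable.
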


As the first of the restrictions we impose on the assignment to obtain remainder graphs of simpler structures, we define \emph{right-aligned} assignments.

\begin{definition}[Right-Aligned Assignment]
\label{def:alignment}
For a given convex graph $H$, an ordering of items $<_I$ that satisfies the adjacency property, and a lexicographical ordering of players with respect to $<_I$, an assignment of items to players $\bcA = (I^1, \, I^2, \, \ldots, \, I^n)$ is called right-aligned if and only if it satisfies the following properties recursively:
\begin{enumerate}
\item Let $p_n$ be the last (rightmost) player in the graph. All the big items of value $q_{\tau} = \frac{1}{k}( 1 + \frac{1}{k} )^{\tau}$ ($1 \leq \tau \leq C$) assigned to $p_n$ are the rightmost ones with that value in $N_{[H]}(p_n)$. Furthermore, all the small items assigned to $p_n$ are also the rightmost ones in her neighbourhood.
\item In the remainder graph $H' = (I \, \backslash \, I^n, \, P \, \backslash \, p_n)$, that is if $p_n$ and all the items assigned to her are removed from the graph $H$, the rest of the assignment, $\bcA' = (I^1, \, I^2, \, \ldots, I^{n - 1})$, is a right-aligned assignment.
\item An empty assignment  (i.e., when no players are left in the graph) is considered a right-aligned assignment.
\end{enumerate}
\end{definition}

Next, we prove \factref{existence} for the right-aligned assignments in a lemma called the \emph{Alignment Lemma} (\lemref{alignment}). Intuitively, the lemma states that if a 1-assignment exists, then there exists a ($1 - \epsilon$)-assignment with items aligned to the right, in which $\epsilon = \frac{1}{k}$. We first need the following observation and definition.

\begin{observation}
\label{obs:hereditary}
Convex graphs have the hereditary property, meaning that every induced subgraph of a convex graph is also convex. 
\end{observation}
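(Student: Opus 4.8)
The plan is to show that the very ordering witnessing convexity of $H$ continues to work, after restriction, for every induced subgraph, so no new ordering ever needs to be constructed. Recall that $H = (X, \, Y, \, E)$ being convex means there is an ordering $<_X$ of $X$ with the adjacency property: for every $y \in Y$ the neighbourhood $N(y)$ is an interval (a set of consecutive vertices) in $<_X$. Let $H' = (X', \, Y')$ be an arbitrary induced subgraph, so $X' \subseteq X$, $Y' \subseteq Y$, and the edges of $H'$ are exactly the edges of $H$ with both endpoints surviving. I would equip $X'$ with the ordering $<_{X'}$ obtained by simply restricting $<_X$, leaving the relative order of the surviving vertices unchanged, and then argue that $<_{X'}$ satisfies the adjacency property for $H'$.

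I would first dispense with deletions on the $Y$ side as harmless: removing vertices of $Y$ changes neither $<_X$ nor the neighbourhood in $X$ of any surviving $y \in Y'$, so the adjacency property for the surviving players is inherited verbatim. The only substantive point is deletion on the $X$ side, where I claim that for every $y \in Y'$ the restricted neighbourhood $N_{H'}(y) = N_H(y) \cap X'$ is still an interval under $<_{X'}$.

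To verify this, take any three vertices $x_1 <_{X'} x_2 <_{X'} x_3$ of $X'$ with $x_1, x_3 \in N_{H'}(y)$; it suffices to prove $x_2 \in N_{H'}(y)$. Since $<_{X'}$ is the restriction of $<_X$, we also have $x_1 <_X x_2 <_X x_3$, and $x_1, x_3 \in N_H(y)$. Because $N_H(y)$ is an interval of $<_X$ by the adjacency property of $H$, the intermediate vertex $x_2$ must lie in $N_H(y)$; as $x_2 \in X'$ as well, we conclude $x_2 \in N_H(y) \cap X' = N_{H'}(y)$. Thus $N_{H'}(y)$ admits no internal gap in $<_{X'}$ and is therefore consecutive, so $<_{X'}$ has the adjacency property and $H'$ is convex.

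The argument is elementary and I do not anticipate a genuine obstacle; the one place deserving care is confirming that passing to a subset of a linearly ordered set, while keeping the induced order, cannot turn an interval into a non-interval, which is precisely the three-point check above. The identical reasoning applies if one starts instead from an ordering of $Y$, which is why the conclusion holds irrespective of which side carries the convexity ordering.
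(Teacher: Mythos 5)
Your proof is correct and follows essentially the same route as the paper's: restrict the ordering $<_X$ to the surviving vertices and run the three-point interval check, with your direct argument being just the contrapositive of the paper's proof by contradiction. Your explicit remark that deletions on the $Y$ side are harmless is a minor clarification the paper leaves implicit, but it does not change the approach.
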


\begin{proof}
Assume otherwise, \emph{i.e.}, there exists no ordering of items that satisfies the adjacency property for a subgraph $H' = (I', \, P')$ of a convex graph $H = (I, \, P, \, E)$, where $P' \subseteq P$ and $I' \subseteq I$. Since $H$ is convex, there exists an ordering $<_I$ for $I$ which satisfies the adjacency property. On the other hand, $<_I$, when induced on the subset of items $I'$ does not satisfy the adjacency property in $H'$. Let $<_{I'}$ denote this induced ordering. There must exist a player $p \in P'$ and items $x, y, z \in I'$ for which $x$ and $z$ are in the neighbourhood of $p$ in $H'$ and $y$ is not while $x <_{I'} y <_{I'} z$. If we consider these items in $H$, $x$ and $z$ are still in the neighbourhood of $p$ and $y$ is still not, while $x <_I y <_I z$. This contradicts the assumption that $<_I$ satisfies the adjacency property in $H$. 
\end{proof}

\begin{definition}[Assignment Vector]
\label{def:assign_vec}
For an assignment of items to players $\bcA = (I^1, \, I^2, \, \ldots, \, I^n)$, and for a player $p_j$, $1 \leq j < n$,  let $\Alpha^j = \nu(H^j)$ for $j = 1, 2, \ldots, n$, in which $H^j = (P \, \backslash \, \{ p_n, \, p_{n - 1}, \, \ldots, p_{j + 1} \}, \, I \, \backslash \, (I^n \cup I^{n - 1} \cup \ldots \cup I^{j + 1}))$ is the remainder graph for player $p_j$ in the assignment $\bcA$ (for $j = n$, let $H^n = H$). We call the vector $\Alpha^j = (\alpha_0^j, \, \alpha_1^j, \, \ldots, \, \alpha_C^j)$ the \emph{assignment vector} of player $p_j$ in the assignment $\bcA$. Furthermore, let $\Alpha = (\Alpha^1, \, \Alpha^2, \, \ldots, \, \Alpha^n)$ be the assignment vector for the entire assignment $\bcA$.
\end{definition}

\begin{lemma}[The Alignment Lemma]\label{lem:alignment}
Suppose there exists a $1$-assignment for a given convex graph $H$. Then, there also exists a \underline{minimal} $(1-\frac{1}{k})$-right-aligned assignment for $H$. Furthermore, the two assignments will have identical assignment vector.
\end{lemma}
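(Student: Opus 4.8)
The plan is to prove the lemma by an exchange argument that processes players from the rightmost, $p_n$, to the leftmost, maintaining as an invariant that the as-yet-unprocessed part of the graph still carries a genuine $1$-assignment. Concretely, I start from the given $1$-assignment $\bcA=(I^1,\dots,I^n)$ and transform it in place. At the step that deals with the current rightmost player $p_n$ on the remainder graph I arrange that (i) $p_n$'s assignment becomes right-aligned in the sense of \defref{alignment}, (ii) $p_n$'s total value is squeezed down to a minimal amount that is still at least $1-\frac1k$, and (iii) after deleting $p_n$ together with the items it now holds, the players $p_1,\dots,p_{n-1}$ still admit a $1$-assignment on the remaining items. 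Granting (iii), the remainder is again convex by \obsref{hereditary}, so the inductive hypothesis yields a minimal $(1-\frac1k)$-right-aligned assignment for $p_1,\dots,p_{n-1}$, and prepending $p_n$'s block gives the desired assignment for $H$; the base case of no players is the empty assignment.

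The big items are handled by a value-preserving exchange. Suppose $p_n$ holds a category-$q_\tau$ item $x$ that is not among the rightmost category-$q_\tau$ items of $N(p_n)$, while a strictly rightmore category-$q_\tau$ item $x'\in N(p_n)$ is held by some $p_j$ with $j<n$. I claim $x\in N(p_j)$, so that swapping $x$ and $x'$ between $p_j$ and $p_n$ is legal and, the two items having the same rounded value, changes no player's total. Indeed, $p_j<_P p_n$ forces $\ell_{p_j}\le_I \ell_{p_n}$ by the lexicographic order, and $x\in N(p_n)$ gives $\ell_{p_n}\le_I x$, hence $\ell_{p_j}\le_I x$; on the other side $x'\in N(p_j)$ gives $x<_I x'\le_I r_{p_j}$, so $x\in[\ell_{p_j},r_{p_j}]=N(p_j)$. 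Iterating these swaps makes $p_n$'s big items the rightmost of each category while preserving every player's value and, crucially, the number of big items of each category held by every player; this already fixes the big-item coordinates of every assignment vector (\defref{assign_vec}) to their original values.

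The small items are the real obstacle, since they carry heterogeneous values $\le\frac1k$ and the analogous swap is not value-preserving. Here I replace $p_n$'s small items by the rightmost small items of $N(p_n)$, added greedily until $p_n$'s total first reaches $1-\frac1k$; because each small item contributes at most $\frac1k$, $p_n$'s value then lies in $[1-\frac1k,\,1)$ and the block is minimal. The work is to show the remainder still supports a $1$-assignment for $p_1,\dots,p_{n-1}$. The point is that this minimal block spends strictly less small value than $p_n$ used in the incoming $1$-assignment, so it frees small value overall; the only danger is that some of the rightmost small items $p_n$ now claims were being used by a left player $p_j$ whose interval reaches that far right. I resolve this exactly as above: any such displaced item can be traded for a small item that $p_n$ relinquishes and that, by the same $\ell_{p_j}\le_I x\le_I r_{p_j}$ computation, lies in $N(p_j)$, and I bound the net value lost by any single player by one small item, i.e.\ by $\frac1k$, which is absorbed because the companion assignment on the remainder only needs to stay a $1$-assignment while the final per-player guarantee is $1-\frac1k$. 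Feasibility of this reshuffle is where the interval structure of \obsref{double-interval} enters, ensuring that pulling items to the right for the later player never strands an earlier, left-starting player.

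Finally, minimality pins the small value consumed at each step into a single window of width $\frac1k$, so after the rounding convention of the input vector its small coordinate $\nu_0$ agrees with that of the maintained $1$-assignment at every remainder; combined with the exact preservation of all big-item counts from the second paragraph, the constructed assignment and the original $1$-assignment share the same assignment vector, as claimed. I expect the small-item reshuffle of the previous paragraph to be the main difficulty: making one exchange simultaneously (a) right-align and minimize $p_n$, (b) keep every other player at value $\ge 1$ on the remainder, and (c) lose at most $\frac1k$ per player is the single place where value is not automatically conserved and where the convex (interval) structure must be invoked.
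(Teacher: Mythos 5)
Your Round-1 treatment of big items is sound and matches the paper's proof essentially verbatim: the same lexicographic computation ($\ell_{p_j} \le_I \ell_{p_n} \le_I x$ and $x <_I x' \le_I r_{p_j}$) justifies the value-preserving swaps, and preserving the per-category counts fixes the big coordinates of every assignment vector. The gap is in your small-item round. You propose a step-by-step induction whose invariant (iii) is that, after carving out a minimal right-aligned block for the current last player, the remaining players \emph{still admit a genuine $1$-assignment}. But your own repair mechanism contradicts this invariant: you concede that the reshuffle compensating displaced players may cost each of them up to $\frac{1}{k}$ (trading a displaced item of value up to $\frac{1}{k}$ for a relinquished item of smaller value is not value-preserving), and you claim this loss ``is absorbed because the companion assignment on the remainder only needs to stay a $1$-assignment while the final per-player guarantee is $1-\frac{1}{k}$.'' These two statements cannot both hold. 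If the invariant demands an exact $1$-assignment on the remainder, a loss of $\frac{1}{k}$ at a single step already breaks it; if instead you weaken the invariant to tolerate a $\frac{1}{k}$ loss per step, the losses compound across up to $n-1$ steps, and a leftmost player could end up with value as low as $1-\frac{n-1}{k}$, destroying any constant-factor guarantee. Your slack budget of $\frac{1}{k}$ is a single global allowance per player, not a renewable per-step allowance, and nothing in your argument prevents the same player from being displaced at many successive steps.

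The paper avoids this trap by not maintaining any feasibility invariant on intermediate remainders. It fixes, once and for all, the small-item deficits $w_j$ (one minus the big-item value of $p_j$ after Round 1), observes that the original $1$-assignment witnesses that the small-item subgraph $H_0$ satisfies Hall's condition with demands $w_j$, and then greedily packs for each player $p_j$, from $p_n$ down to $p_1$, a minimal right-aligned set $J_0^j$ with $w_j - \frac{1}{k} < val(J_0^j) \le w_j$. Correctness is proved globally and by contradiction: if the greedy packing failed at some step $t$, a case analysis on whether the partial assignment contains a gap produces an interval $[\ell, r]$ with $val([\ell,r]) < \sum_j d(j)$, i.e.\ a violation of \condref{hall} in the \emph{original} graph $H_0$ --- contradicting the existence of the $1$-assignment. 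This replaces your per-step exchange bookkeeping entirely, and is exactly what is needed because, as you correctly identified, exchanges on heterogeneous small items cannot be made value-preserving. To repair your proof you would need to either prove your strong invariant outright (which, once unwound, amounts to re-deriving the paper's interval-based Hall computation) or restructure the induction along the paper's lines; as written, the argument does not go through.
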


\begin{proof}

We let $\bcA = (I^1, \ldots, I^n)$ denote the $1$-assignment stated in the lemma and let $\Alpha$ be its assignment vector. Also, we denote by $I_{\tau}^j$ the set of big items of value $q_{\tau}$ assigned to player $p_j$ for $1 \leq \tau \leq C$ and by $I_0^j$, the set of small items allocated to $p_j$ in the assignment $\bcA$. We prove the lemma by transforming the assignment $\bcA$ into a minimal $(1 - \frac{1}{k})$-right-aligned assignment, $\bcA' = (J^1, \, J^2, \, \ldots, \, J^n)$. $\bcA'$ is minimal in the sense that the removal of any items from the sets $J^1, \, \ldots, J^n$ will cause its value to drop below $1 - \frac{1}{k}$. This task is carried out in two independent and consecutive rounds. In round $1$, the big items assigned to each player are aligned to the right while the small items are aligned in round $2$. Each round consists of $n$ steps, one for each player. The two rounds are independent of each other in that the alignment of the big items in the first round has no impact on the alignment of small items in the second round. In the following, we assume the players are ordered lexicographically based on the ordering of items $<_I$. 

\textbf{Round $1$:} In round $1$, the big items are right-aligned in $n$ steps, where in each step, the big items of a single player are aligned in $C$ \emph{micro-steps}, one micro-step for each item category. We initialize $j$ to be $n$, indicating that we start with the last player, $p_{n}$, and let $\tau$ be $1$. Also, let $H^j$ denote the remainder graph at the beginning of step $j$. At the $\tau^{th}$ micro-step, we only look at items of size ${q_\tau}$ in the graph and replace the set  $I_{\tau}^j$ with its right-aligned counterpart $J_{\tau}^j$. Note that according to \obsref{hereditary}, the induced subgraph on items of a certain size, say $H_{\tau}^j$, forms a convex graph. We start with $J_\tau^j = I_\tau^j$. If $I_{\tau}^j$ is right-aligned, \emph{i.e.}, all items of value $q_{\tau}$ assigned to $p_{j}$ are the rightmost ones in $N_{H_{\tau}^j}( p_{j} )$, then we return $J_\tau^j$. If however, $I_{\tau}^j$ is not right-aligned, there must be two items $x_{r}$ and $x_{t}$ of the same value $q_{\tau}$, where $x_{r} <_I x_{t}$ and $x_{r}$ is assigned to $p_{j}$, but $x_{t}$ is not. If $x_{t}$ is not assigned to any other player, then we can simply assign it to $p_{j}$ instead of $x_{r}$. If it is assigned to some player $p_{\ell}$, $x_{r}$ must also be connected to $p_{\ell}$. Otherwise $p_{\ell}$ would come after $p_{j}$ in the lexicographical ordering of the players. Since the two items have the same value, we use the adjacency property to swap them in the assignment. Thus, $p_{j}$ gets $x_{t}$ and $p_{\ell}$ gets $x_{r}$ in the transformed assignment $J_{\tau}^j$. Now we have one less item out of alignment. We continue this process until there are no items in $J_\tau^j$ out of alignment. Then we proceed to the next micro-step by setting $\tau \leftarrow \tau + 1$. At the end of $C$ micro-steps, we let $J^j$ be $\bigcup_{\tau = 1}^{C} J_{\tau}^j$. Furthermore, we obtain the remainder graph $H^{j - 1} = (I \, \backslash \,J^j, \, P \, \backslash \, \{ p_{j} \})$. This graph will be the input to the next step of the alignment. We now proceed to the next step by updating $j \leftarrow j - 1$. Note that the assignment vector for $\bcA' = (J^1, \, J^2, \, \ldots, \, J^n)$, say $\Alpha'$, is identical to $\Alpha$ on the big item coordinates since for any arbitrary player $p$, we did not change the number of big items of any category assigned to $p$. As a result, we have not lost any solution quality. 

\medskip
\textbf{Round $2$:} In round $2$, we obtain a right-aligned assignment for small items. We first explain the procedure and then prove its correctness. Let $H_0$ denote the subgraph of $H$ that contains only small items. The neighbourhood of each player in $H_0$ is still an interval (by \obsref{hereditary}). For $j \in [n]$, let $1 - w_{j}$ be the total value of the big items assigned to player $p_{j}$ in the $1$-assignment. Therefore, $w_{j}$ denotes the deficit of player $p_{j}$ that should be satisfied by small items. If the demands of each player $p_{j}$ is set to $w_{j}$ for $j \in [n]$, then $H_0$ must satisfy Hall's condition with these demands. This is because a 1-assignment exists in the original graph for which Hall's condition is necessary, and that during the first round, the values of the big items assigned to the players did not change. Following the idea of round $1$, we align the items in $n$ steps. We start with the last player, so we let $j$ be $n$ and let $I_0^j$ denote the set of small items assigned to $p_{j}$. We also let $H_0^j$ denote the remainder graph at the beginning of step $j$. Note that $val(I_0^j) = w_{j}$. We replace this set with the set of right-aligned items, $J_0^j$, of value at least $w_j - \frac{1}{k}$. We first let $J_0^j$ be the empty set. Then, we pick small items from the neighbourhood of $p_j$ in $H_0^j$ and add them to $J_0^j$. We continue to do this as long as $val(J_0^j) \leq w_{j} - \frac{1}{k}$. When the algorithm stops, since the value of each small item is at most $\frac{1}{k}$, $w_{j} - \frac{1}{k} < val(J_0^j) \leq w_j$. By doing this,  we ensure that player $p_{j}$ gets a value strictly greater than $w_{j} - \frac{1}{k}$ in small items, and strictly greater than $1 - \frac{1}{k}$ in total. The remainder graph for the next step of round $2$, $H_0^{j - 1}$, is obtained by $H_0^{j - 1} = (I \, \backslash \, J_0^j, \, P \, \backslash \{ p_j \})$. Note that each such remainder graph is still a convex graph because we started with a convex graph $H_0$ and at each step, we removed items from right to left in the ordering $<_I$. Also, note that the small coordinate of the assignment vectors of $\bcA$ and $\bcA'$ ($\Alpha$ and $\Alpha'$ respectively) are identical too, which means $\Alpha = \Alpha'$. The reason is that we packed the small items in $\bcA'$ is such a way that they have the same total value as in $\bcA$ when counted in integral multiples of $\frac{1}{k}$ and rounded up. We then move to the next step by updating $j \leftarrow j - 1$.

To prove the correctness, we show that after each step $j$, $1 \leq j\leq n$, we can obtain a right-aligned assignment of small items the sum of whose values is strictly greater than $w_{j}-\frac{1}{k}$ for player $p_j$. Recall that we chose every $J_0^j$ ($1 \leq j \leq n$) to be the minimal right-aligned set of items assigned to player $p_j$ for which $w_j - \frac{1}{k} < val(J_0^j) \leq w_j$. For the sake of contradiction, assume that for some $j$, $1 \leq j \leq n$, we cannot provide a right-aligned assignment of small items $J_0^j$ for which $val(J_0^j) > w_{j} - \frac{1}{k}$. Let $t$ be the largest such index. This assumption implies that $val(N_{[H_0^t]}(p_{t})) \leq w_{t} - \frac{1}{k}$. We let $H_0^{t, n}$ denote the induced subgraph of $H_0$ on players $p_{t}, \, p_{t + 1}, \, \ldots, \, p_{n}$ and the small items in their neighbourhoods. We consider the partial right-aligned assignments of small items in $H_0^{t, n}$ represented by $J_0^t, \, J_0^{t + 1}, \, \ldots, \, J_0^n$, and claim that for a subset of the players $p_{t}, \, p_{t + 1}, \, \dots \, p_{n}$, Hall's condition is violated for small items in the original graph $H_0$. Since Hall's condition is necessary for any feasible assignment that fulfills the demands of players, this in turn implies that there is no $1$-assignment for the instance, which contradicts our earlier assumption. To prove this claim, we introduce the notion of a \emph{gap} in the assignment. With respect to an ordering of items $<_I$, a gap exists in a (partial) assignment if $x_{j}$ is not assigned to any player, but there exists another item $x_{i}$, such that $x_i < x_j$ in $<_I$ and $x_{i}$ is assigned to some player. The item $x_j$ is said to be in the gap with respect to the partial assignment. Based on this notion, we consider two cases: 

\begin{description}
\item[Case 1:] \emph{There exists a gap in the partial assignment}. Let $u$  be the smallest index in $N_{[H_0^{t, n}]}(p_t, \, p_{t + 1}, \,$ $\dots, \, p_{n})$ for which item $x_{u}$ is in the gap for the partial right-aligned assignment. By the choice of $u$, $x_{u - 1}$ must be assigned to a player in the set $\{p_{t}, \, p_{t + 1}, \, \dots, \, p_{n}\}$. Let $p_{v}$ be that player. Note that $x_{u} \notin N_{[H_0^{t, n}]}(p_{v})$ as otherwise $x_{u}$ would be assigned to $p_{v}$ in the right-aligned assignment. Item $x_{u-1}$ is thus the rightmost item assigned to player $p_v$. Let $x_{\ell}$ and $x_{r}$ be the left-most and the right-most items in $N_{[H_0^{t, n}]}(p_t, \, p_{t + 1}, \, \dots, \, p_{v})$ respectively (note that $r= u - 1$). Then,
\begin{align*}
val([\ell, r]) & = \sum_{\begin{array}{c} j \in [\ell, r]: \\ x_j \in H_0^{t, n} \end{array}} val(x_{j}) \\
             & = \sum_{j = t}^{v} val(J_0^j) \\
             & \leq val(J_0^t) + \left( \sum_{j = t + 1}^{v} w_{j} \right) \\
             & \leq w_t - \frac{1}{k} + \left( \sum_{j = t + 1}^{v} w_{j} \right) \\
             & < \sum_{j = t}^{v} d(j).
\end{align*}
where the first inequality is due to the fact that the total value assigned to each player $p_j$ in the right-aligned assignment is at most $w_j$, the second inequality is due to the choice of $p_t$, and last inequality holds since $d_j = w_j$ for all players $p_j$. Thus, Hall's condition is not satisfied for the set of players $p_{t}, \, p_{t + 1}, \, \dots, \, p_{v}$ in $\hat{H}$.

\item[Case 2:] \emph{There are no gaps in the partial assignment}. Every item in $N_{[H_0^{t, n}]}(p_{t}, \, p_{t + 1}, \, \dots, \, p_{n})$ has been assigned in the partial assignment. Thus $\bigcup_{j = t}^{n} J_0^j = N_{[H_0^{t, n}]}(p_{t}, \, p_{t + 1}, \, \dots, \, p_{n})$. Once again, let $x_{\ell}$ and $x_{r}$ be the left-most and the right-most items in $N_{[H_0^{t, n}]}(p_t, \, p_{t + 1}, \, \dots, \, p_{v})$ respectively. Similarly, 
\begin{align*}
val([\ell, r]) & = \sum_{j \in [\ell, r]} val(x_{j}) \\
             & \leq v(J_0^t) + \left( \sum_{j = t + 1}^{n} w_j \right) \\
             & \leq \left(\sum_{j = t}^{n} w_{j}\right) - \frac{1}{k}  \\
             & < \sum_{j = t}^{n} d(j).
\end{align*}
This implies that Hall's condition is violated in $H_0$ for the set of players $p_{t}, \, p_{t + 1}, \, \dots, \, p_{n}$.
\end{description}
\end{proof}

Unfortunately, \factref{retrieve} is not true for the right-aligned assignment restriction, as shown in the following example.

\begin{exm}
\label{exm:private}
In the example shown in \figref{private}, each of the circle items is considered small items and have a value of $\frac{1}{10}$ and each of the square items, which are called big items, have a value of $\frac{1}{4}$. Now, consider two different right-aligned $1$-assignments. 
\begin{enumerate}
\item In the first one, player $p_3$ gets the 5 rightmost circle items in her neighbourhood as well as the 2 rightmost square items, player $p_{2}$ also gets the 5 rightmost circle items in her neighbourhood alongside the 2 rightmost square items, and player $p_{1}$ observes a remainder graph with input vector $\Nu = (5, 2)$. Fortunately for $p_1$, all these items are in her neighbourhood so that she can be assigned a $1$-assignment as well and all the players' demands are met.
\item In the second assignment, player $p_3$ gets all four square items in her neighbourhood. Player $p_2$ has only one choice if she is to be allocated a $1$-assignment and that choice is to get all the small circle items in her neighbourhood. At the end, player $p_1$ is left with a different remainder graph whose input vector is also $\Nu = (5, 2)$. This time, the five circle items remaining are not in the neighbourhood of $p_1$ (the five rightmost circle items), so $p_1$ has to get by with 2 square items which amount to only a $\frac{1}{2}$-assignment for her.
\end{enumerate}
\end{exm}

\begin{figure}
    \centering
    \begin{tikzpicture}
           \centering
        
        \draw [draw = black, fill = black, thick] (-2.5, 4.7) circle (0.1cm);
        \draw [draw = black, fill = black, thick] (-2.0, 4.7) circle (0.1cm);
        \draw [draw = black, fill = black, thick] (-1.5, 4.7) circle (0.1cm);
        \draw [draw = black, fill = black, thick] (-1.0, 4.7) circle (0.1cm);
        \draw [draw = black, fill = black, thick] (-0.5, 4.7) circle (0.1cm);
        \draw [draw = black, fill = black, thick] (0.5, 4.7) circle (0.1cm);
           \draw [draw = black, fill = black, thick] (1.0, 4.7) circle (0.1cm);
        \draw [draw = black, fill = black, thick] (1.5, 4.7) circle (0.1cm);
        \draw [draw = black, fill = black, thick] (2.0, 4.7) circle (0.1cm);
        \draw [draw = black, fill = black, thick] (2.5, 4.7) circle (0.1cm);
        \draw [draw = black, fill = black, thick] (3.5, 4.7) circle (0.1cm);    
        \draw [draw = black, fill = black, thick] (4.0, 4.7) circle (0.1cm);    
        \draw [draw = black, fill = black, thick] (4.5, 4.7) circle (0.1cm);
        \draw [draw = black, fill = black, thick] (5.0, 4.7) circle (0.1cm);
        \draw [draw = black, fill = black, thick] (5.5, 4.7) circle (0.1cm);
        
        \draw [draw = black, fill = black] 
            (-4.5, 4.4) -- (-4.0, 4.4) -- (-4.0, 3.9) -- (-4.5, 3.9) -- cycle;
        \draw [draw = black, fill = black] 
            (-3.8, 4.4) -- (-3.3, 4.4) -- (-3.3, 3.9) -- (-3.8, 3.9) -- cycle;
        \draw [draw = black, fill = black] 
            (0.9, 4.4) -- (1.4, 4.4) -- (1.4, 3.9) -- (0.9, 3.9) -- cycle;
        \draw [draw = black, fill = black] 
            (1.6, 4.4) -- (2.1, 4.4) -- (2.1, 3.9) -- (1.6, 3.9) -- cycle;
        \draw [draw = black, fill = black] 
            (3.0, 4.4) -- (3.5, 4.4) -- (3.5, 3.9) -- (3.0, 3.9) -- cycle;
        \draw [draw = black, fill = black] 
            (3.7, 4.4) -- (4.2, 4.4) -- (4.2, 3.9) -- (3.7, 3.9) -- cycle;
                
        \node[below = 0.2cm] at (-3.0, 1.0) {$p_1$};
        \node[below = 0.2cm] at (0.0, 1.0) {$p_2$};
        \node[below = 0.2cm] at (3.0, 1.0) {$p_3$};
        \draw [ultra thick, draw = black, fill = yellow, fill opacity = 0.2]    
            (0.0, 1.0) -- (-3.0, 5.0) -- (3.0, 5.0) -- cycle;                 
        \draw [ultra thick, draw = black, fill = yellow, fill opacity = 0.2]    
            (3.0, 1.0) -- (6, 5.0) -- (0.0, 5.0) -- cycle;    
        \draw [ultra thick, draw = black, fill = yellow, fill opacity = 0.2] 
            (-3.0, 1.0) -- (-6.0, 5.0) -- (0.0, 5.0) -- cycle;            
            
        \draw [draw = black, fill = white, ultra thick] (0, 1.0) circle (0.2cm);
        \draw [draw = black, fill = white, ultra thick] (3.0, 1.0) circle (0.2cm);
        \draw [draw = black, fill = white, ultra thick] (-3.0, 1.0) circle (0.2cm);        
            
    \end{tikzpicture}
    \caption{Private items can introduce challenges for the dynamic programming scheme. In the figure, circle items are small, with a value of $\frac{1}{10}$, and square items are big, and have a value of $\frac{1}{4}$.}
    \label{fig:private}
\end{figure}
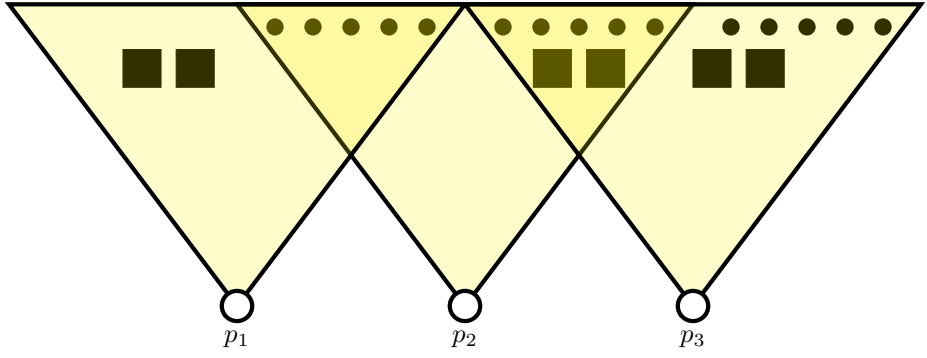

As this example shows, although we have restricted the assignments to right-aligned ones, we still cannot uniquely retrieve the proper (\emph{i.e.}, the one that provides a $1$-assignment for all the players) remainder graph from the input vector. Therefore, to avoid storing exponentially many graphs in the table $M$, we introduce more restrictions on the assignment. These added restrictions ensure that both facts would hold for the final assignment.

Despite the fact that the right-aligned restriction cannot establish a one-to-one mapping between the input vectors and remainder graphs, it takes us one step closer to such mappings. As illustrated by \exmref{private}, the reason we could not find an injective relation between the remainder graphs and the input vectors is that in some of the remainder graphs, there exist items that are not assigned to any player but are also not accessible to any player who has not yet received a bundle of items. For instance, for the second assignment in \exmref{private}, the five rightmost small items appear in the remainder graph, and consequently in the corresponding input vector, but are not in the neighbourhood of $p_1$. We call such items \emph{private}. More formally, we define private items in the following way.

\begin{definition}[Private Items]
For a subgraph $H' = (I', \, P')$ of the problem instance $H$, an item $x \in I'$ is called private if it has a degree of 1, that is, it is in the neighbourhood of only one player in $H'$. For a private item $x$ in $H'$, the player $p$ who is adjacent to $x$ is called the owner of $x$ in $H'$.
\end{definition}

\begin{definition}[Stranded Items]
For a subgraph $H' = (I', \, P')$ of the problem instance $H$, an item $x \in H'$ is called stranded if it has a degree of 0, that is, it is in the neighbourhood of no player in $H'$. 
\end{definition}

The private items, if kept unassigned in the graph, can mislead the retrieval procedure in that they appear as available items on the input vector, but become stranded items (their degree becomes zero once the player adjacent to them is removed from the graph, in which case they are wasted). We now define the new restricted assignments. 

\begin{definition}[Non-wasteful Right-Aligned Assignment]
\label{def:non-wasteful}
A right-aligned assignment $\bcA = (I^1, \, I^2, \, \ldots, \, I^n)$ is called a non-wasteful right-aligned assignment if none of the remainder graphs $H^j = (I \, \backslash \, (\bigcup_{t = j + 1}^{n} I^t), \, \{ p_1, \, p_2, \, \ldots,$ $\, p_j \})$ contains a stranded item for $j = 1, \, 2, \, \ldots, \, n - 1$.
\end{definition}


\noindent
We first mention the following observation. The proof is straightforward, and thus we sketch it briefly.

\begin{observation}
\label{obs:non-wasteful_existence}
Whenever there exists a right-aligned ($1 - \frac{1}{k}$)-assignment for a given instance, there also exists a non-wasteful right-aligned ($1 - \frac{1}{k}$)-assignment.
\end{observation}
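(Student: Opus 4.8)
The plan is to start from the given right-aligned $(1-\frac{1}{k})$-assignment $\bcA = (I^1, \ldots, I^n)$ and repair it into a non-wasteful one by handing every item that would otherwise become stranded to the appropriate player, arguing that this only increases the value each player receives and never destroys right-alignment. For an item $x$ write $[a_x, b_x]$ for its neighbourhood in the lexicographically ordered player set, which is an interval by \obsref{double-interval}. The first step is to characterize stranded items. An item assigned to some $p_i$ satisfies $a_x \le i \le b_x$, so it is adjacent to $p_i$ in every remainder graph in which it survives and is therefore never stranded; consequently a stranded item of $H^j$ is exactly an \emph{unassigned} item $x$ with $a_x > j$, and such an item first becomes stranded in $H^{a_x-1}$. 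Hence $\bcA$ fails to be non-wasteful precisely because of unassigned items $x$ with $a_x \ge 2$.

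The key structural fact — the step I expect to carry the whole argument — is that for a fixed player $p_s$ the items $x \in N(p_s)$ with $a_x = s$ are exactly the rightmost items of $N(p_s)$ in the ordering $<_I$. This follows from the inclusion-free structure captured by \obsref{double-interval}: because the players are ordered lexicographically and the graph is inclusion-free, the endpoints $\ell_p, r_p$ of the player intervals are non-decreasing along $<_P$, which forces $a_x$ to be non-decreasing along $<_I$. Thus inside the interval $N(p_s)$ the items with $a_x = s$ (those lying to the right of $r_{p_{s-1}}$) form the rightmost block, while every item of $N(p_s)$ with $a_x < s$ lies to their left and remains reachable by the earlier players $p_1, \ldots, p_{s-1}$.

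With this in hand I would define the repaired assignment $\bcA'$ by giving to each player $p_s$, in addition to $I^s$, every item $x$ that is unassigned in $\bcA$ and has $a_x = s$. Three things then need checking. First, $\bcA'$ is still right-aligned: the items added to $p_s$ all have $a_x = s$ and hence, by the structural fact, sit in the rightmost block of $N(p_s)$, so together with the rightmost items $p_s$ already held, $p_s$ now holds \emph{all} available items with $a_x = s$ in each category, which is again a rightmost selection; and since only $a_x = s$ items are moved, the neighbourhoods and available pools seen by the earlier players in the recursive definition are unchanged. Second, $\bcA'$ is still a $(1-\frac{1}{k})$-assignment, because every player's bundle only grows. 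Third, $\bcA'$ is non-wasteful: by construction no unassigned item with $a_x \ge 2$ survives, and every remaining unassigned item satisfies $a_x = 1$, so it stays adjacent to $p_1$ in each remainder graph $H^j$ ($1 \le j \le n-1$) and is never stranded.

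The main obstacle is the right-alignment check, since re-assigning a would-be-stranded item to an arbitrary adjacent player would in general break the rightmost property and, worse, could strip an earlier player of an item it needs. The point that rescues the argument is exactly that the would-be-stranded items of $p_s$ are forced to be the rightmost items of $N(p_s)$: appending them keeps $p_s$'s selection right-aligned, and because these items have $a_x = s$ they were never available to $p_1, \ldots, p_{s-1}$, so none of the earlier assignments is disturbed.
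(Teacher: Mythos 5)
Your proposal is correct and takes essentially the same approach as the paper: the paper's brief argument reassigns each unassigned private item in a remainder graph to its owner, which is exactly your rule of handing each unassigned item $x$ to $p_{a_x}$, the first (and, at the moment $x$ would become stranded, the only) player adjacent to it. The extra verification you supply --- that right-alignment survives, via the monotonicity of $\ell_p, r_p$ along $<_P$ and hence of $a_x$ along $<_I$, so that the added items together with $p_s$'s original bundle still form a rightmost selection in each category and the pools seen by earlier players are untouched --- is precisely the step the paper dismisses as ``straightforward,'' and it checks out.
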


\begin{proof}
This is because if in any right-aligned assignment, we take the unassigned private items in the remainder graphs and allocate them to their respective owners, we can only increase the value received by each player. Therefore, \factref{existence} holds for non-wasteful right-aligned assignments as well. 
\end{proof}

\noindent
\lemref{non-wasteful_assign} proves \factref{retrieve} for this type of restricted assignments.

\begin{lemma}
\label{lem:non-wasteful_assign}
Assume there exists a non-wasteful right-aligned $1$-assignment $\bcA = (I^1, \, I^2, \, \ldots, \, I^n)$ for an instance of the problem. Also assume we are given the original inclusion-free convex graph $H$ and the input vectors for partial assignments $\bcA^j = (I^j, \, I^{j + 1}, \, \ldots, \, I^n)$ for $ j = n - 1, \, n - 2, \, \ldots, \, 1$, but not the partial assignments themselves. There exists a polynomial time algorithm that can reconstruct the remainder graphs for every $\bcA^j$ in such a way that the total value of items in each reconstructed graph is at most $\epsilon = \frac{2}{k}$ more than the total value of the items in the actual corresponding remainder graph for $\bcA^j$.
%
%
%
\end{lemma}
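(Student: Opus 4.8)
The plan is to reconstruct the remainder graphs \emph{forward}, processing the players in the same order the forward phase uses, namely from $p_n$ down to $p_1$, while maintaining a reconstructed graph $\hat{H}^j$ meant to approximate the true remainder graph $H^j$ of $\bcA$. I would initialize $\hat{H}^n = H$, which equals $H^n$ exactly, and at a generic step consume the two input vectors $\Alpha^j=\nu(H^j)$ and $\Alpha^{j-1}=\nu(H^{j-1})$ to decide which items player $p_j$ removes, thereby producing $\hat{H}^{j-1}$. Note that the players present in $H^j$ are forced to be $p_1,\dots,p_j$ by the superscript (not by the unknown assignment), and by \obsref{double-interval} the item neighbourhoods form intervals in the lexicographic player order, so $p_j$ is the rightmost remaining player; this is what lets the right-aligned property (\defref{alignment}) pin down \emph{where} $p_j$'s items must lie.

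For the big items I would prove exact recovery by induction on the step. The number of big items of each category $\tau$ assigned to $p_j$ is read off as the \emph{integer} difference $\alpha_\tau^{j}-\alpha_\tau^{j-1}$, which carries no rounding. By right-alignment these are the $\alpha_\tau^{j}-\alpha_\tau^{j-1}$ rightmost items of value $q_\tau$ in $N(p_j)$. Assuming inductively that $\hat{H}^j$ and $H^j$ contain exactly the same big items (the base case $j=n$ being trivial), the reconstruction selects precisely the items the true assignment removed, so $\hat{H}^{j-1}$ and $H^{j-1}$ again agree on every big-item category. This establishes the ``same value for every big category'' half of \factref{retrieve}.

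The only genuine discrepancy lives in the small items, where \defref{input_vec} records merely the total small value rounded up to a multiple of $\frac{1}{k}$. Here I would \emph{re-anchor} to the absolute target at each step rather than propagate a running difference: after removing $p_j$'s big items, I strip the rightmost small items of $N(p_j)$ one at a time until the remaining small value first drops to at most $\frac{\alpha_0^{j-1}}{k}$. Since every small item has value at most $\frac{1}{k}$, the remaining small value then lies in $(\frac{\alpha_0^{j-1}-1}{k},\,\frac{\alpha_0^{j-1}}{k}]$, the same interval that contains the true small value of $H^{j-1}$; hence the per-step small-value discrepancy is at most $\frac{1}{k}$, comfortably inside the claimed $\frac{2}{k}$, and because I re-anchor to $\alpha_0^{j-1}$ rather than accumulate differences it does not grow over the $n$ steps. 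Crucially, the small items of $\hat{H}^j$ lying outside $N(p_j)$ are left untouched and survive into $\hat{H}^{j-1}$; by the non-wasteful property (\defref{non-wasteful}, available through \obsref{non-wasteful_existence}) none of them is stranded, so the value reported by the input vector genuinely corresponds to items still usable by earlier players. This is exactly the failure exhibited in \exmref{private}, and it is why the non-wasteful restriction is indispensable.

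Combining exact big-item recovery with the one-sided small-item error gives the bound, and since each step only removes the rightmost items of a single neighbourhood the whole procedure is plainly polynomial. The step I expect to be the main obstacle is the small-item reconstruction: I must verify that the rightmost small items I intend to strip actually lie in $N(p_j)$ and are present in sufficient quantity, and that stripping only from $N(p_j)$ (rather than globally) still reaches the target range $(\frac{\alpha_0^{j-1}-1}{k},\,\frac{\alpha_0^{j-1}}{k}]$. Ruling out the possibility that the reconstructed frontier drifts away from the true one is where the interplay of right-alignment (the assigned small items are the rightmost of $N(p_j)$), \obsref{double-interval} (surviving items outside $N(p_j)$ lie to the left and still belong to unserved earlier players), and non-wastefulness (no surviving item is stranded) must be used carefully; the extra $\frac{1}{k}$ of slack in the $\frac{2}{k}$ bound is precisely what absorbs the boundary case in which the frontier is off by a single small item.
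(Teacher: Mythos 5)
Your big-item induction is sound and genuinely different from the paper's argument: since the lemma hands you the input vectors of \emph{all} partial assignments, the counts $\alpha_\tau^{j}-\alpha_\tau^{j-1}$ are exact, and right-alignment (\defref{alignment}) determines the removed sets once $\hat{H}^j$ and $H^j$ agree on big items, so exact recovery propagates. The gap is in the small-item step, and you flag it yourself (``the main obstacle'') without closing it. Your value-anchored stopping rule needs two facts that are never established: \emph{(a)} that $N_{[\hat{H}^j]}(p_j)$ contains enough small value to drive the global remainder down to $\frac{\alpha_0^{j-1}}{k}$, and \emph{(b)} that the items it declines to strip are not stranded in $\hat{H}^{j-1}$. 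Here is a concrete failure mode for (b) that then breaks (a): the true assignment may remove small value down to just above $\frac{\alpha_0^{j-1}-1}{k}$, while your rule stops as soon as the remainder is at most $\frac{\alpha_0^{j-1}}{k}$ --- a slack of up to $\frac{1}{k}$ that can contain a small item lying in $(r_{p_{j-1}}, r_{p_j}]$, i.e., private to $p_j$. After $p_j$ is removed this item is stranded, and since $r_{p_i}\le r_{p_j}$ for all $i<j$ in an inclusion-free instance, no later step can ever strip it; the drift is permanent, so your re-anchoring invariant (remaining small value in $(\frac{\alpha_0^{j-1}-1}{k},\frac{\alpha_0^{j-1}}{k}]$) cannot be maintained by stripping inside subsequent neighbourhoods alone --- later steps must over-strip and may exhaust the small items of $N(p_i)$. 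Your claimed per-step bound of $\frac{1}{k}$, and the assertion that re-anchoring prevents accumulation, both silently assume the target range is always reachable, which is precisely the unproven part.

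The paper avoids this entirely by not propagating a reconstruction at all: \textsf{retrieve} rebuilds each remainder graph \emph{from scratch} from its own single input vector --- the globally leftmost $\nu_\tau$ big items of each category, plus a maximal leftmost set of small items of total value below $\frac{\nu_0+1}{k}$ --- and returns \textsf{NULL} if a stranded item appears. Correctness (\clmref{no_margin}) is then a direct comparison against the true remainder graph: any disagreement on big items yields a pair $x', \hat{x}$ forcing a margined inclusion (with non-wastefulness used only to exclude $x'$ being stranded), and the small-item set is shown to be a superset of the true one with surplus at most $\frac{2}{k}$ by a pure value count. Because each step's reconstruction is independent of the previous one, no frontier-drift invariant is needed and an error at one step cannot contaminate the next. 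To complete your sequential variant you would in effect have to prove that the surviving small items of the true remainder graph always form a leftmost prefix up to the rounding slack --- which is the content of the paper's claim --- so the global left-to-right sweep is the natural repair for your argument.
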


\begin{proof}

We prove the lemma by providing the reconstruction algorithm. The algorithm is, in fact, a simple left-to-right sweep\footnote{Note that the items assigned are aligned to the right, and naturally, the remaining items in the graph would be aligned to the left.} of the items. We prove the correctness in \clmref{no_margin}. For $j = n, \, n - 1, \, \ldots, \, 1$, we generate a remainder graph after each partial assignment $\bcA^j = (I^j, \, I^{j + 1}, \, \ldots, \, I^n)$, using only the input vectors. Let $\Nu = (\nu_0, \, v_1, \, \ldots, \, \nu_C)$ be the input vector of the remainder graph after this partial assignment. Let $\hat{H} = (\hat{I}, \, \hat{P})$ denote the remainder graph that our algorithm outputs at the end of this step. We set $\hat{P}$ to be $\{p_1, \, p_2, \, \ldots, \, p_j\}$. As for the set of items $\hat{I}$, for the big items of value $q_{\tau}$, $\tau = 1, \, \ldots, \, C$, we select the left-most $\nu_{\tau}$ items in the graph $H$ and add them to $\hat{I}$. For small items, we select a \underline{maximal} set of left-most items whose total value is less than $\frac{\nu_0 + 1}{k}$. This method of selection ensures that the items that have been assigned to the players thus far are right-aligned. Furthermore, if at any point, stranded items (of degree $0$) appear in the remainder graph, we will return \textsf{NULL}. This ensures that private items are not left unassigned. Note that since $\bcA$ is assumed to be a non-wasteful right-aligned assignment, so are all its partial assignments as well. Next, we show the correctness of the algorithm by proving \clmref{no_margin}.

\begin{claim}
\label{clm:no_margin}
In a given inclusion-free instance $H$, for a non-wasteful right-aligned partial assignment $\bcA^j = (I^j, \, I^{j + 1}, \, \ldots, \, I^n)$, the set of big items in $\hat{H}$, the remainder graph reconstructed by the right-to-left sweep algorithm, is identical to that of the original remainder graph of $\bcA^j$, and the set of small items of $\hat{H}$ is a superset of the set of small items of the original remainder graph. Furthermore, the sum of values for small items in $\hat{H}$ is at most $\frac{2}{k}$ more than that of the original remainder graph.
\end{claim}

\begin{proof}
We treat big items and small items separately.

\smallskip
\noindent
\textbf{Big items:} First, we consider the induced graph on the set of big items of a certain size $q_{\tau}$ for some arbitrary $\tau = 1, \, \ldots, \, C$. Assume that $\hat{P} = \{ p_1, \, p_2, \, \ldots, \, p_j \}$ for some $1 \leq j < n$ (the case where $j = n$ is trivial since the remainder graph would be the original graph $H$). Also let $H'$ denote the original remainder graph for $\bcA^j$. For contradiction, assume that the graphs $H'$ and $\hat{H}$ are not identical when induced on the set of big items of value $q_{\tau}$. Since both graph $H'$ and $\hat{H}$ have the same number of items of value $q_{\tau}$, there must exist items $x'$ and $\hat{x}$ such that $x'$ appears in $H'$, but not in $\hat{H}$ and $\hat{x}$ belongs to $\hat{H}$, but not to $H'$. This means that the item $\hat{x}$ is assigned to a player $p_x$, $x \geq j + 1$ in one the earlier partial assignments $\bcA^{j  + 1}, \, \bcA^{j + 2}, \, \ldots, \, \bcA^n$. The item $x'$ is either adjacent to a player $p_y$, $1 \leq y \leq j$, or it is an item of degree $0$ in $H'$ and $\hat{H}$ (note that players $p_{j + 1}, \, \ldots, \, p_n$ do not belong to the graphs $H'$ and $\hat{H}$). If the latter case occurs, it means that $x'$ is a stranded item in $H'$, contradicting the fact that $H'$ is the remainder graph of a non-wasteful right-aligned assignment. In fact, when faced with a stranded item, the reconstruction algorithm returns \textsf{NULL} which signals the dynamic programming algorithm to skip the current input vector as it does not correspond to a valid remainder graph. Therefore, we can assume there exists a player $p_y$ adjacent to $x'$. Since $y \leq j$ and $x \geq j + 1$, we have that $p_y < p_x$. By the adjacency property, we have that $\hat{x} \in N_{[H]}(p_y)$. $p_x$ is not adjacent to $x'$ since otherwise the item $\hat{x}$ assigned to her would be out of alignment. The item $x'$ is assigned to the player $p_y$, therefore we conclude that $r_{p_x} < r_{p_y}$. Since $p_y < p_x$, it must be the case that $\ell_{p_y} < \ell_{p_x}$. This is a contradiction to the assumption that there are no inclusions in $H$. This situation is depicted in \figref{no_margin}.

\smallskip
\noindent
\textbf{Small items:} Assume that the set of small items of $\hat{H}$ is not a superset of the set of small items of $H'$, meaning that an item $x'$ exists such that $x'$ belongs to $H'$ but not to $\hat{H}$. If there exists another item $\hat{x}$ such that $\hat{x}$ belongs to $\hat{H}$, but not to $H'$, then, similar to the situation in \figref{no_margin}, we conclude that a margined inclusion should exist in $H$, which is a contradiction. Therefore, we may assume that $\hat{H}$ is a subgraph of $H'$. Recall that $\Nu = \nu(H') = (\nu_0, \, \nu_1, \, \ldots, \, \nu_C)$ and the sum of values of all the small items in $H'$ is at most $\frac{\nu_0}{k}$ (including $x'$). The total value of small items in $\hat{H}$ is at least $\frac{\nu_0}{k}$ by definition. This is a contradiction since we assumed $\hat{H}$ is a subgraph of $H'$. Therefore, the set of small items of $\hat{H}$ must be a superset of $H'$. In the proposed graph $\hat{H}$, we selected the minimal set of small items whose value is greater than or equal to $\frac{\nu_0}{k}$. Such a set may have a value as high as $\frac{\nu_0 + 1}{k} - \epsilon$ for some small value of $\epsilon > 0$ since the value of small items is less than or equal to $\frac{1}{k}$. The minimum total value of small items in $H'$ is no less than $\frac{\nu_0 - 1}{k}$ by the definition of input vectors, hence the difference of at most $\frac{2}{k}$. This complete the proof of the claim.
\end{proof}

\begin{figure}
    \centering
    \begin{tikzpicture}
           \draw[black, dashed, thick]        
               (0, 0) -- (2, 2);
           \draw[decoration={markings, mark=at position 0.3 with {\arrow{<}}}, postaction={decorate}, thick]        
               (2, 0) -- (0, 2);  
           \draw [red, thick, dashed] (0, 0) -- (0, 2);
        \draw [draw = black, fill = white, thick] (0, 2) circle (0.2cm);
           \draw [draw = black, fill = white, thick] (2, 2) circle (0.2cm);
        \node[below = 0.2cm] at (0, 0) {$p_y$};
        \node[below = 0.2cm] at (2, 0) {$p_x$};
        \node[above = 0.2cm] at (0, 2) {$\hat{x}$};
        \node[above = 0.2cm] at (2, 2) {$x'$};
        \draw [ultra thick, draw=black, fill=yellow, fill opacity=0.2]    
               (0, 0) -- (-3, 2.8) -- (4, 2.8) -- cycle;                 
        \draw [ultra thick, draw=black, fill=yellow, fill opacity=0.2]    
               (2, 0) -- (-2.3, 2.8) -- (1.1, 2.8) -- cycle;    
        \draw [draw = black, fill = white, thick] (0, 0) circle (0.2cm);
        \draw [draw = black, fill = white, thick] (2, 0) circle (0.2cm);
        \end{tikzpicture}
    \caption{If the set of big items of $H'$ and $\hat{H}$ are not identical, a margined inclusion should exist. The arrow indicates assignment in one of the predecessors of $H'$.}
    \label{fig:no_margin}
\end{figure}
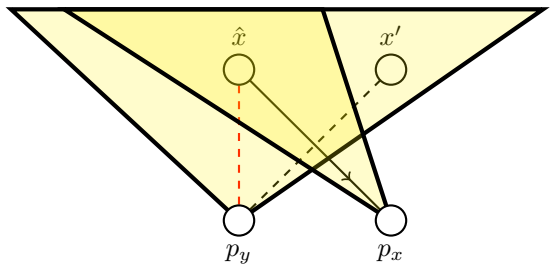

\end{proof}

\begin{observation}
\label{obs:approx_factor}
Note that, as mentioned in \clmref{no_margin}, the algorithm tends to overestimate the value of small items still available in the graph by at most an additive constant factor of $\frac{2}{k}$, but it never underestimates. Therefore when a player $p_j$ expects a bundle of small items worth $d_j$, she might receive $d_j - \frac{2}{k}$ worth of small items instead.
\end{observation}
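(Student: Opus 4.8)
The plan is to read the statement off as a direct reinterpretation of \clmref{no_margin}, translating its bound on the reconstructed remainder graph into a per-player guarantee. First I would recall the two conclusions of \clmref{no_margin} concerning small items at a fixed step $j$: the set of small items of the reconstructed graph $\hat{H}$ is a \emph{superset} of the set of small items of the true remainder graph $H'$, and $v_s(\hat{H}) \le v_s(H') + \frac{2}{k}$. The superset property immediately yields $v_s(\hat{H}) \ge v_s(H')$; combined with the upper bound, this sandwiches the reconstructed value as $v_s(H') \le v_s(\hat{H}) \le v_s(H') + \frac{2}{k}$. That is precisely the assertion that the reconstruction never underestimates the available small-item value and overestimates it by at most $\frac{2}{k}$.

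Next I would connect this global discrepancy to a single player's allocation. The dynamic program plans the bundle of $p_j$ against the reconstructed availability $v_s(\hat{H})$, reserving small items of value $d_j$ for her. Since the true pool in $H'$ contains every small item of $\hat{H}$ apart from a surplus of total value at most $\frac{2}{k}$, the items that the plan intends to hand to $p_j$ may, in the actual instance, fall short by at most that surplus. Hence in the worst case $p_j$ is realized with small items worth $d_j - \frac{2}{k}$ rather than $d_j$, which is exactly the claimed bound.

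The one point that needs care is that the $\frac{2}{k}$ slack is a per-remainder-graph quantity, so I would stress that the reconstruction is performed independently for the remainder graph at each step and that \clmref{no_margin} bounds the small-item discrepancy of \emph{each} such graph separately. The shortfall therefore does not accumulate across the $n$ players: each player individually faces a deficit of at most $\frac{2}{k}$ against her target, not a cumulative one. I expect this non-accumulation remark, rather than any computation, to be the only subtle step, since everything else is a verbatim restatement of \clmref{no_margin}.
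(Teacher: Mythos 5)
Your proposal is correct and follows essentially the same route as the paper: the observation is stated there without a separate proof, as an immediate consequence of Claim~\ref{clm:no_margin} (superset of small items, hence no underestimate, plus the additive $\frac{2}{k}$ overestimate), and the per-player shortfall mechanism you describe is exactly the worst-case scenario the paper later spells out in the proof of Lemma~\ref{lem:finalItem}, where the ``after'' remainder graph is overestimated by $\frac{2}{k}$ while the ``before'' graph is guessed exactly. Your added remark that the error is per-step and does not accumulate across players is consistent with how the paper's dynamic program uses absolute input vectors at each step, so nothing is missing.
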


Based on \obsref{approx_factor}, we next provide a dynamic programming algorithm that enumerates all non-wasteful right-aligned ($1 - \frac{3}{k}$)-assignments for a rounded instance $H$. From \lemref{alignment} and \obsref{non-wasteful_existence} one can conclude that whenever a $1$-assignment exists for a given instance of the problem, so does a non-wasteful right-aligned $(1 - \frac{1}{k})$-assignment. Therefore, if we only consider non-wasteful right-aligned assignments instead of all assignments, we are still able to find an assignment with objective value at least $(1 - \frac{1}{k})$, provided that a $1$-assignment for the instance exists in the first place. \lemref{non-wasteful_assign} and \obsref{approx_factor} together show that for any non-wasteful right-aligned assignment for an instance of the problem, an approximation of the assignment can be retrieved, and we only lose an additive fraction of $\frac{2}{k}$ in solution quality.  Therefore, the general idea of the algorithm is to consider all the $m^{C + 1}$ possible input vectors in a dynamic programming algorithm, retrieve  a non-wasteful right-aligned $(1 - \frac{3}{k}$)-assignment for each vector, and mark the feasible one. If it eventually manages to find at least one feasible assignment for $p_1$ (the last player considered by the algorithm), then it reports success. Before presenting the algorithm, we first introduce two functions that are invoked by the algorithm:

\begin{enumerate}
\item \textsf{retrieve}: Function \textsf{retrieve} takes as input parameters input vector $\Nu$ and player index $j$, and returns remainder graph $H' = (I', \, P')$. The graph $H' = (I', \, P')$ that it reconstructs is such that $P' = (p_1, \, p_2, \, \ldots, \, p_j)$ and $I'$ is formed by the left-to-right sweep algorithm in \lemref{non-wasteful_assign} \footnote{In fact, Function \textsf{retrieve} also needs the input inclusion-free convex graph $H$ as the third input parameter. To simplify notation, we omit this parameter from the function calls.}. Function \textsf{retrieve} returns \textsf{NULL} if at any point during reconstruction, there is at least one stranded item in the remainder graph.
\item \textsf{feasible}: Function \textsf{feasible} takes as input parameters remainder graph $H'$ and set of items $I'$, and returns either \textsf{true} or \textsf{false}. It returns \textsf{true} if  each of the following is true: $H'$ is not \textsf{NULL}, set $I'$ is entirely in the neighbourhood of the last player in $H'$, and the sum of item values in $I'$ is at least $1 - \frac{3}{k}$. Otherwise, Function \textsf{feasible} returns \textsf{false}.
\end{enumerate} 

\noindent
\textbf{Algorithm Assignment (for Inclusion-Free Convex Instances):} Algorithm Assignment uses dynamic programming to fill the entries of an $n \times m^{C + 1}$ table, in which each row represents a player, and each column represents an input vector (we assume an arbitrary ordering of all valid input vectors). Let the matrix $M$ denote this table.  Each entry $(j, \Nu)$ of $M$ is composed of two fields:
\begin{enumerate}
\item $M(j, \, \Nu).\textsf{bit}$ is a binary variable. If $M(j, \, \Nu).\textsf{bit}$ is set to $1$, then the input vector $\Nu$ is called \emph{marked} for player $p_j$. Otherwise, the vector $\Nu$ is \emph{unmarked} for player $p_j$.

\item $M(j, \, \Nu).\textsf{ptr}$ either contains \textsf{NULL} or a reference to another input vector $\Nu'$ (in the beginning,  this pointer is initialized to \textsf{NULL} for every entry).
 
\end{enumerate}

Let $\Nu_{in} = (\nu_{0}, \, \nu_{1}, \, \ldots, \, \nu_{C})$ be the input vector for the original graph $H$. The algorithm starts the forward phase with the last player $p_n$. For every input graph $\Nu$ (\emph{i.e.}, every column of the table $M$), it forms the remainder graph $\hat{H} = \textsf{retrieve}(\Nu, \, n - 1)$. Let $\hat{I}$ denote the set of items in $\hat{H}$ and also let $I^n = I \, \backslash \, \hat{I}$. The set $I^n$ provides a potential assignment for player $p_n$. The algorithm checks if this assignment is a non-wasteful right-aligned ($1 - \frac{3}{k}$)-assignment for $p_n$ using the boolean function $\textsf{feasible}(H, \, I^n)$. Note that for the first step of the algorithm, the original graph $H$ serves as the remainder graph. We call input vector $\Nu$ a \emph{successful} input vector for remainder graph $H'$ and player $p_n$ if $p_n$ is the last player in $H'$ and a call to the function $\textsf{feasible}(H', \, I^n)$ returns \textsf{true}. All the successful assignments are marked by setting $M(n, \, \Nu).\textsf{bit} = 1$. For a successful input vector $\Nu$ for player $p_n$, we set $M(n, \, \Nu).\textsf{ptr} = \Nu_{in}$, while for unsuccessful vectors, the value of $M(n, \, \Nu).\textsf{ptr}$ is not changed. In this section, we will not use the field \textsf{ptr} in the forward phase of the algorithm\footnote{We will however make use of this field in the forward phase of the algorithm in \secref{margined}}.

After this first iteration for player $p_n$, the algorithm proceeds to a similar procedure for other players. For a player $p_j$, $1 \leq j \leq n - 1$, it seeks ($1 - \frac{3}{k}$)-assignments once more. The only difference is that since the assignment should provide a ($1 - \frac{3}{k}$)-assignment for all players $p_n, \, p_{n - 1}, \, \ldots, \, p_j$, for every entry $(j, \, \Nu)$ the algorithm looks at all marked entries of the previous row, $j + 1$. Let $M(j + 1, \, \Nu')$ be such an entry. The algorithm retrieves the graphs $\hat{H} = \textsf{retrieve}(\Nu, j - 1)$ and $\hat{H}' = \textsf{retrieve}(\Nu', j)$. We let $\hat{I}$ and $\hat{I}'$ denote the item sets of graph $\hat{H}$ and $\hat{H}'$ respectively. If the set of items $I^j = \hat{I}' \, \backslash \, \hat{I}$ is confirmed to be at least a ($1 - \frac{3}{k}$)-assignment for player $p_j$ by function $\textsf{feasible}(\hat{H}', \, I^j)$, the entry $M(j, \, \Nu).\textsf{bit}$ is set to $1$. Furthermore, if $M(j, \, \Nu).\textsf{ptr}$ is \textsf{NULL}, it is updated to $\Nu'$ to mark this step of the assignment, indicating that the input vector $\Nu$ was achieved from a remainder graph represented by the input vector $\Nu'$. If the function \textsf{feasible} returns \textsf{false}, the algorithm simply moves to the next marked entry of the previous row. This procedure continues until the algorithm either finds a successful input vector for $p_j$, or that there is no successful vector in the previous row. In the former case, it reports success and moves to the backward phase in which the actual assignment is retrieved. In the latter case, it simply reports failure. In the backward phase, the non-wasteful right-aligned ($1 - \frac{3}{k}$)-assignment can be obtained by following the \textsf{ptr} back from the last row to the first. Before formally presenting the algorithm, we make the following remarks.

\begin{remark}
If Function \textsf{feasible} returns \textsf{false} at any stage, it may be due to any one of the following three reasons. Either \emph{(i)} the set $I^j$ does not have enough total item value, or \emph{(ii)} there are items in set $I^j$ which are not in the neighbourhood of player $p_j$, or \emph{(iii)} the remainder graph $\hat{H}$ returned by \textsf{retrieve} contains stranded items. In each of these cases, the input vector $\Nu$ does not represent a valid vector according to the set of restrictions imposed on the assignments. Therefore, the algorithm should ignore $\Nu$ and simply move to the next input vector.
\end{remark}

\begin{remark}
If the \textsf{ptr} field is already assigned a value, we do not change it since for the purpose of finding a feasible ($1 - \frac{1}{3}$)-assignments for margined-inclusive free instances, it is not important how we arrived at an input vector $\Nu$ in the current iteration. If a vector $\Nu$ represents an assignment that satisfies players $p_n, \, \ldots, \, p_{j + 1}$, a simple retrieval procedure that removes items from the right (as suggested in \lemref{non-wasteful_assign}) can retrieve a remainder graph that is a close enough approximation of the true remainder graph.
\end{remark}

\begin{remark}
In selecting the potential assignment $I^j$ for player $p_j$ at any iteration, we may assign a value more than the target value $1 - \frac{3}{k}$. Because we do not wish to leave items stranded or unaligned, the algorithm allocates every item in the difference of two consecutive remainder graphs to the player for whom the iteration is run. Nonetheless, as we formally discuss in \lemref{finalItem}, this is not at odds with our purpose of guaranteeing bundles of value $1 - \frac{3}{k}$ or higher in the rounded instance for every player.
\end{remark}

The pseudocode for the algorithm is provided below. In the following, a vector $\Nu$ is said to be less than or equal to $\Nu'$ if every entry of $\Nu$ is less than or equal to its counterpart in $\Nu'$.

\begin{algorithm}
 \KwData{rounded instance $H$.}
 \KwResult{returns either a ($1 - \frac{3}{k}$)-assignment for all players in the rounded instance, or report failure.}
 \For{every vector $\Nu$ in $\cV$ such that $\Nu \leq \Nu_{in}$} {
  $\hat{H} \leftarrow \textsf{retrieve}(\Nu, \, n - 1)$ \\
  $\hat{I} \leftarrow$ the set of items in $\hat{H} \, ; \, I^n \leftarrow I \, \backslash \, \hat{I}$ \\
  \If{$\textsf{feasible}(H, \, I^n)$} {
   $M(n, \, \Nu).\textsf{bit} \leftarrow 1$ \\ 
   \If{$M(n, \, \Nu).\textsf{ptr} == \textsf{NULL}$} {
    $M(n, \, \Nu).\textsf{ptr} \leftarrow \Nu_{in}$
   }
  }
 }
 \For{$j \leftarrow n - 1$ downto $1$}{
  \For{every vector $\Nu$ in $\cV$ such that $\Nu \leq \Nu_{in}$} {
   \For{every vector $\Nu'$ in $\cV$ such that $\Nu' \geq \Nu$ and $M(j + 1, \, \Nu').\textsf{bit} == 1$} {
    $\hat{H} \leftarrow \textsf{retrieve}(\Nu, \, j - 1) \, ; \, \hat{H}' \leftarrow \textsf{retrieve}(\Nu', \, j) $ \\
    $\hat{I} \leftarrow$ the set of items in $\hat{H}$ \, ; \, $\hat{I}' \leftarrow$ the set of items in $\hat{H}' \, ; \, I^j \leftarrow \hat{I}' \, \backslash \, \hat{I}$ \\
    \If{$\textsf{feasible}(\hat{H}', \, I^j)$} {
     $M(j, \, \Nu).\textsf{bit} \leftarrow 1$ \\ 
     \If{$M(j, \, \Nu).\textsf{ptr} == \textsf{NULL}$} {
      $M(j, \, \Nu).\textsf{ptr} \leftarrow \Nu'$
     }
    }
   }
  } 
 }
 \caption{Algorithm Assignment for Inclusion-Free Instances (Forward Phase)}
\end{algorithm}

In \lemref{finalItem} next, we show that Algorithm Assignment finds a ($1 - \frac{3}{K}$)-assignment in a rounded instance if one exists. We use \obsref{subset} in its proof. 

\begin{observation}
\label{obs:subset}
For a given instance $H$, assume $H'$ and $H''$ are two remainder graphs, and that $H'$ is an induced subgraph of $H''$. Let $\hat{H}' = \textsf{retrieve}(\nu(H'), \, j')$ and $\hat{H}'' = \textsf{retrieve}(\nu(H''), \, j'')$, where $j'$ and $j''$ are the indices of the last players in $H'$ and $H''$ respectively. Then, $\hat{H}'$ is also a subgraph of $\hat{H}''$.
\end{observation}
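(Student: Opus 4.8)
The plan is to reduce the statement to a monotonicity property of the \textsf{retrieve} procedure: if one input vector dominates another coordinatewise and its associated player prefix is longer, then the reconstructed graph contains the other. The whole argument rests on the fact that \textsf{retrieve} always draws items from the \emph{same} fixed original graph $H$ in the \emph{same} left-to-right order $<_I$, so the selections it makes are nested.

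First I would record what the hypothesis provides. Since $H'$ and $H''$ are remainder graphs, their player sets are prefixes of the lexicographic order, say $\{p_1, \ldots, p_{j'}\}$ and $\{p_1, \ldots, p_{j''}\}$. The assumption that $H'$ is an induced subgraph of $H''$ forces $\{p_1,\ldots,p_{j'}\} \subseteq \{p_1,\ldots,p_{j''}\}$, hence $j' \le j''$, and also that the item set of $H'$ is contained in that of $H''$. From the item containment I would deduce $\nu(H') \le \nu(H'')$ coordinatewise: for each big category $q_\tau$ the number of items of that value can only drop when passing to the subgraph $H'$, while for the small coordinate the total small value in $H'$ is at most that in $H''$, and rounding up to multiples of $\frac{1}{k}$ preserves this inequality, giving $\nu_0(H') \le \nu_0(H'')$.

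Next I would invoke the construction of \textsf{retrieve} from \lemref{non-wasteful_assign}. For the big items of each value $q_\tau$, $\hat{H}'$ takes the leftmost $\nu_\tau(H')$ such items of $H$ and $\hat{H}''$ takes the leftmost $\nu_\tau(H'')$; since $\nu_\tau(H') \le \nu_\tau(H'')$ and both are prefixes of the same $<_I$-ordered list, the former set is contained in the latter. For the small items, $\hat{H}'$ takes the maximal $<_I$-prefix of small items of total value below $\frac{\nu_0(H')+1}{k}$ and $\hat{H}''$ the maximal prefix below $\frac{\nu_0(H'')+1}{k}$; because the first threshold is no larger than the second and both chosen sets are prefixes in the common order, the small items of $\hat{H}'$ are contained in those of $\hat{H}''$. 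Combining the big and small cases gives the containment of item sets, and together with $\{p_1,\ldots,p_{j'}\} \subseteq \{p_1,\ldots,p_{j''}\}$ and the fact that both reconstructed graphs inherit their edges from $H$, we conclude that $\hat{H}'$ is an induced subgraph of $\hat{H}''$.

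I expect the only delicate point to be the small-item case, where selection is driven by a value threshold rather than a count: I must use that taking the maximal $<_I$-prefix under a budget is monotone in the budget, so that a smaller threshold yields a nested (prefix) selection. A secondary subtlety is that \textsf{retrieve} may return \textsf{NULL} upon encountering a stranded item; since a subgraph relation is only meaningful when both reconstructions succeed, I would state the claim for the regime in which neither call returns \textsf{NULL}, which is precisely how the observation is applied in the proof of \lemref{finalItem}.
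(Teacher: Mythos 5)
Your proposal is correct and follows essentially the same route as the paper's proof: both reduce the claim to coordinatewise domination $\nu(H') \le \nu(H'')$ and the monotonicity of \textsf{retrieve}'s left-to-right prefix selections (counted prefixes for each big category, a value-budgeted prefix for the small items). Your version is merely more explicit where the paper compresses the big-item case into the remark that \textsf{retrieve} preserves the players and big items, and your caveat about the \textsf{NULL} case is a reasonable extra precision, not a deviation.
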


\begin{proof}
The function $\textsf{retrieve}$ preserves the set of players and the set of big items. Therefore, to prove the claim of the observation, it is enough to show that the set of small items in $\hat{H}'$ is a subset of the set of small items in $\hat{H}''$. Let $\Nu' = \nu(H')$ be $(\nu'_0, \, \nu'_1, \, \ldots, \nu'_C)$ and $\Nu'' = \nu(H'')$ be $(\nu''_0, \, \nu''_1, \, \ldots, \, \nu''_C)$. The fact that $H'$ is a subgraph of $H''$ implies that $\nu'_0 \leq \nu''_0$, which means that the function $\textsf{retrieve}$ has not packed more small items from the left to $\hat{H}'$ compared to $\hat{H}''$. Therefore, the set of small items of $\hat{H}'$ is a subset of that of $\hat{H}''$.
\end{proof}

\begin{lemma}\label{lem:finalItem}
Let $H$ be a rounded instance of the problem with $n$ players and $m$ items. The Algorithm Assignment assigns (in polynomial time) to each player a set of items with value at least $1 - \frac{3}{k}$ if a $1$-assignment exists for all players. The algorithm returns failure if no ($1 - \frac{1}{k}$)-assignment exists for $H$.
\end{lemma}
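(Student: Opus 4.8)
The plan is to prove the two halves of the statement separately: the \emph{completeness} direction (if a $1$-assignment exists, the algorithm succeeds and every player receives value at least $1-\frac{3}{k}$) and the \emph{soundness} direction (the algorithm fails whenever no $(1-\frac{1}{k})$-assignment exists, equivalently, success certifies the existence of a $(1-\frac{1}{k})$-assignment). The running-time claim then follows from a direct accounting of the table size and the cost of the two helper functions.

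For completeness I would start from an optimal $1$-assignment and apply \lemref{alignment} followed by \obsref{non-wasteful_existence} to obtain a non-wasteful right-aligned $(1-\frac{1}{k})$-assignment $\bcA=(I^1,\dots,I^n)$ together with its assignment vector $\Alpha=(\Alpha^1,\dots,\Alpha^n)$ from \defref{assign_vec}. Writing $\Alpha^0=\mathbf{0}$ for the empty remainder graph, the key claim is that the forward phase marks $M(j,\Alpha^{j-1}).\textsf{bit}=1$ for every $j$, which I would prove by induction from $j=n$ down to $j=1$. In the base case the algorithm reconstructs $\hat H=\textsf{retrieve}(\Alpha^{n-1},n-1)$ and tests $I^n=I\setminus\hat I$; in the inductive step, using the hypothesis $M(j+1,\Alpha^{j}).\textsf{bit}=1$ together with the fact that $H^{j-1}$ is an induced subgraph of $H^{j}$ (so $\Alpha^{j-1}\le\Alpha^{j}$, as the inner loop requires), the algorithm reconstructs $\hat H'=\textsf{retrieve}(\Alpha^{j},j)$ and $\hat H=\textsf{retrieve}(\Alpha^{j-1},j-1)$ and tests $I^j=\hat I'\setminus\hat I$. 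In each case \lemref{non-wasteful_assign} and \clmref{no_margin} guarantee that the reconstructed remainder graphs carry exactly the same big items and a superset of the small items (of excess value at most $\tfrac2k$) as the true ones, so by \obsref{approx_factor} the tested bundle has the same big value and loses at most $\tfrac2k$ in small value relative to the real $I^j$. Since $p_j$ received at least $1-\frac1k$ in $\bcA$, the tested bundle has value at least $1-\frac1k-\frac2k=1-\frac3k$, and its retained small items, being the rightmost ones, stay inside $N(p_j)$; hence \textsf{feasible} returns \textsf{true} and the entry is marked. Marking $M(1,\mathbf 0)$ signals success, and following the \textsf{ptr} chain in the backward phase yields a consistent assignment (the ``remainder after $p_{j+1}$'' graph coincides with the ``remainder before $p_j$'' graph because both equal the same \textsf{retrieve} call), each of whose bundles passed the $1-\frac3k$ test.

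For soundness I would argue the contrapositive: whenever the forward phase marks $M(1,\mathbf 0)$, a $(1-\frac{1}{k})$-assignment exists. A successful run fixes a chain of vectors $\mathbf 0=\Nu^{1},\Nu^{2},\dots,\Nu^{n+1}=\Nu_{in}$ via the \textsf{ptr} fields, whose reconstructed remainder graphs $\hat G_j=\textsf{retrieve}(\Nu^{j+1},j)$ are nested by \obsref{subset} and partition $I$ into real bundles $\hat I^{j}=\hat G_j\setminus\hat G_{j-1}$ with $val(\hat I^{j})\ge 1-\frac3k$ and $\hat I^{j}\subseteq N(p_j)$. The crucial point is that, by \obsref{approx_factor}, \textsf{retrieve} \emph{over}-reserves the small items left in each $\hat G_{j-1}$ by at most $\tfrac2k$ and never under-reserves, so each tested bundle is a pessimistic proxy that falls short of the value genuinely assignable to $p_j$ by at most $\tfrac2k$. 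I would make this precise by exhibiting the ``tight'' right-aligned assignment obtained from the same big-item placement and the same nested remainder structure but reserving only the nominal small value dictated by the vectors, and showing, exactly as in Round~2 of \lemref{alignment}, that it is realizable: if some player failed to reach $1-\frac1k$, a gap/no-gap case analysis would produce an interval of small items violating Hall's condition \condref{hall}, contradicting that the reconstructed chain (which reserves at least as much) was feasible and non-wasteful. This upgrades the per-player guarantee from $1-\frac3k$ to $1-\frac1k$ and produces the desired $(1-\frac1k)$-assignment.

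I expect the soundness direction to be the main obstacle. Completeness is a fairly mechanical induction once \lemrefs{alignment}{non-wasteful_assign} are in hand, but soundness must convert the one-sided $\tfrac2k$ reconstruction error (which at each step shifts small value both into and out of a remainder graph) into a clean per-player improvement of $\tfrac2k$ without breaking the feasibility of the earlier-indexed players; getting the bookkeeping of the $\tfrac2k$ slack to line up with the $\frac3k$-versus-$\frac1k$ gap, and certifying realizability through Hall's condition rather than by a global redistribution of value, is the delicate step. The running time is then immediate: the table has $n\cdot m^{C+1}$ entries, filling row $j$ scans at most $m^{C+1}$ pairs $(\Nu,\Nu')$, and each \textsf{retrieve}/\textsf{feasible} call costs $O(m+n)$, giving the bound $\bigO{(m+n)nm^{2(C+1)}}$ claimed in \thmref{finalItem}.
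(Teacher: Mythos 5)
Your completeness half is correct and is essentially the paper's own argument: the paper likewise chains \lemref{alignment} and \obsref{non-wasteful_existence} to get a non-wasteful right-aligned $(1-\frac{1}{k})$-assignment, and then uses \lemref{non-wasteful_assign}, \clmref{no_margin} and \obsref{subset} to bound the reconstruction error of $\textsf{retrieve}$ by $\frac{2}{k}$ on the small items, so that every tested bundle $I^j=\hat{I}'\,\backslash\,\hat{I}$ passes the $1-\frac{3}{k}$ test whenever the true bundle had value $1-\frac{1}{k}$. Your explicit downward induction on the assignment vectors $\Alpha^{j}$ is just a more careful write-up of the same step (the paper argues it for a generic step without the induction scaffolding), and your running-time accounting matches the proof of \thmref{finalItem}.

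The genuine gap is in your soundness half. You read the second sentence literally as ``success certifies the existence of a $(1-\frac{1}{k})$-assignment'' and propose to upgrade the per-player guarantee from $1-\frac{3}{k}$ to $1-\frac{1}{k}$ via a Hall's-condition argument; but that claim is false, so no such upgrade can exist. Concretely, take a single player adjacent to a single big item of rounded value $1-\frac{2}{k}$: in the only row of the table, the vector $\Nu=\boldsymbol{0}$ retrieves an empty remainder graph, the bundle $I^1$ is the whole item set, $\textsf{feasible}$ accepts since $1-\frac{2}{k}\geq 1-\frac{3}{k}$, and the algorithm reports success --- yet no $(1-\frac{1}{k})$-assignment exists. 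The marked chain only ever witnesses a $(1-\frac{3}{k})$-assignment, and Hall's condition is merely necessary, so it cannot manufacture the missing $\frac{2}{k}$ of value. What the paper actually proves (and what the statement is intended to mean, with ``if'' read as ``only if'') is the converse implication: \emph{failure} implies no $(1-\frac{1}{k})$-assignment exists, obtained in one sentence by rerunning the completeness machinery with a hypothetical $(1-\frac{1}{k})$-assignment in place of the $1$-assignment --- if such an assignment existed, alignment plus retrieval would still produce a marked chain, so the algorithm could not have failed. (Even that direction is numerically loose as written in the paper: starting from $1-\frac{1}{k}$, alignment can cost another $\frac{1}{k}$ and retrieval $\frac{2}{k}$, landing at $1-\frac{4}{k}$, below the $1-\frac{3}{k}$ threshold; but this slack in constants is a defect of the paper's one-liner, not something your Hall's-condition detour repairs, since the direction you aimed at is simply not true.) Your instinct that soundness was the delicate step was right; the fix is to prove the converse implication, not to strengthen what success certifies.
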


\begin{proof}

From \lemref{alignment}, we know that where there exists a $1$-assignment for a rounded instance $H$, there is also a right-aligned ($1 - \frac{1}{k}$)-assignment for the instance, which in turn implies the existence of a non-wasteful right-aligned ($1 - \frac{1}{k}$)-assignment. Now, assume the algorithm forms a bundle of items $I^j$ to be potentially assigned to an arbitrary player $p_j$. The way the algorithm forms the bundles is by finding the difference in the item sets of two remainder graphs, one before the assignment is made and one after. In doing so, it guesses the remainder graphs from their corresponding input vectors. Further, assume that the two remainder graphs before and after assignment are $H'$ and $H''$ respectively, thus $H''$ is a subgraph of $H'$. Also, the remainder graphs guessed by the function $\textsf{retrieve}$ are $\hat{H}'$ and $\hat{H}''$. Based on \obsref{subset}, $\hat{H}''$ is also a subgraph of $\hat{H}'$. In a worst case scenario, the function $\textsf{retrieve}$ may guess the total value of small items in $\hat{H}''$ to be at most $\frac{2}{k}$ more than the actual value (the value of small items in $H''$), while it guesses the same quantity correctly for $\hat{H}'$ (that is, equal to the total sum of small item values in $H'$). This means that the difference set $I^j$ would have $\frac{2}{k}$ less in small item values than the right-aligned ($1 - \frac{1}{k}$)-assignment implied by the original remainder graphs $H'$ and $H''$. Thus, the set $I^j$ results in a ($1 - \frac{3}{k}$)-assignment for player $p_j$. Therefore, for every possible non-wasteful right-aligned ($1 - \frac{1}{k})$-assignment to players, the algorithm considers a non-wasteful right-aligned ($1 - \frac{3}{k})$-assignment instead. Since the existence of the former results in the existence of the latter, Algorithm Assignment is guaranteed to find a ($1 - \frac{3}{k}$)-assignment to players if a ($1 - \frac{1}{k}$)-assignment exists.
On the other hand, when the algorithm reports failure, we can be certain that a ($1 - \frac{1}{k}$)-assignment does not exist, since otherwise, our algorithm would be able to find a slightly worse non-wasteful right-aligned ($1 - \frac{3}{k}$)-assignment that is guaranteed to exist.
\end{proof}

Now we can prove \thmref{finalItem}.

\begin{proof}[Proof of \thmref{finalItem}]
Each player receives a set of items with value at least $1-\frac{3}{k}$ for the rounded instance (see \lemref{finalItem}). Each item value is rounded up by at most a multiplicative factor of $1 + \frac{1}{k}$. Therefore each player receives a set of items with value at least $( 1 - \frac{3}{k} ) / ( 1 + \frac{1}{k} ) = 1 - \frac{4}{k + 1}$ of the optimal. The two inner \textbf{for} loops of Algorithm Assignment take $m^{2(C + 1)}$. The outer loop is run for $n$ players, and finding the remainder graphs and potential bundles need time $\bigO{m + n}$. Thus, the Algorithm Assignment runs in time $\bigO{(m + n)nm^{2(C + 1)}}$.
\end{proof}

\section{Min-Max Allocation Problem ($R \, | \, | \, C_{\max}$) On Convex Graphs}
\label{sec:rc_max}

In this section, we will show that an adaptation of the techniques used in the previous sections, will give us a PTAS for the Min-Max fair allocation of jobs to machines in a Min-Max allocation problem. Most of the techniques used in this section are very similar to those of \secref{max-min}, therefore we treat the proofs more concisely here, only highlighting the modifications and changes. 

\subsection{Problem Definition and Preliminaries}
\label{rc_max_intro}

As in instance of this problem, we are given a set $M = \{M_1, \, M_2, \, \ldots, \, M_m \}$ of identical machines or processors and a set $J = \{ J_1, \, J_2, \, \ldots, \, J_n \}$ of jobs. Each job $J_j$ has the same processing time $p_j$ on a subset of machines and it has processing time $\infty$ on the rest of the machines. The goal is to find an assignment of the (\underline{entire} set of) jobs to the machines $\boldsymbol{\cA} = (J^1, \, J^2, \, \ldots, \, J^m)$, such that the maximum load among all the machines is minimized. Formally, we are given a convex graph $H = (M, \, J, \, E)$ where $M$ is a set of machines and $J$ is a set of jobs and $E$ denotes the edge set. There is an edge in $E$ between a machine and a job if the job can be executed on that machine. Also given is a utility function $p: \, J \rightarrow \bQ_{>0}$ which assigns a processing time to each job. With slight abuse of notation, we write $p(J_j)$ as $p_j$ for short, and job $J_j$ requires $p_j$ units of processing time to run to completion on any machine $M_i$, provided that there exists an edge between $J_j$ and $M_i$ in the set $E$. If the edge does not exist, the processing time of $J_j$ on $M_i$ is not bounded. 

Similarly, as before, we assume that the input bipartite graph $H$ satisfies the adjacency and the enclosure properties, \emph{i.e.}, is an inclusion-free convex graph. In other words, we assume that we have an ordering $<_J$ of jobs (which orders the jobs as $J_1, \, J_2,...,J_n$ ) such that each machine can execute consecutive jobs (an interval of jobs). More formally, we denote the interval of job $J_i$ by $[\ell_i,r_i]$. We assume that $J_i$ is before $J_j$ and write $J_i <_J J_j$ whenever $\ell_i < \ell_j$ or $\ell_i = \ell_j$, $r_i \le r_j$. Based on $<_J$, an ordering of jobs that satisfies the adjacency property, we define a lexicographical ordering of the machines in the same manner as \secref{max-min}. Also as before, we may assume that $0 \le p_i \le 1$ by scaling down the processing time values.

\begin{remark}
\label{rem:Min-Max_double_interval}
Similar to the case of Max-Min allocations, if the input convex graph $H$ is inclusion-free, meaning that it satisfies the enclosure property as well as the adjacency property, then not only the neighbourhood of every machine is an interval of jobs, but also every job is adjacent to a consecutive set of machines. More precisely, given the ordering of jobs $<_J$ that satisfies the adjacency and enclosure property and assuming that the set of machines $M$ is ordered lexicographically based on $<_J$, then the neighbourhood of each job $J_j \in J$ forms an interval in the set $M$. This can be seen as a consequence of \obsref{double-interval}.
\end{remark}

\noindent \textbf{Hall's Condition:} Hall's condition needs to be slightly modified for the case of Min-Max allocations. For subset $J'$ of jobs let $w(J')$ and $w_s(J')$ be the sum of the processing times of all jobs and small\footnote{We use definitions for \emph{small} and \emph{big} jobs similar to the definitions for \emph{small} and \emph{big} items in \secref{rounding_santa}.} jobs in $J'$ respectively. Also let $N_{[H]}(J')$ (or $N(J')$ when the graph $H'$ is implied by context) denote the neighbourhood of the subset of jobs $J'$ in $M$. A necessary condition for having a maximum load which is at most $1$ is that for every subset $J'$ of machines  $w(J') \leq |N(J')|$. More generally, assume each machine $M_i$, $1 \leq i \leq m$, has a maximum \emph{allowable} load denoted by $a(M_i)$. Also, for a subset of machines $M'$, let $a(M')$ be the sum of allowable loads of all the machines in $M'$. Then, Hall's condition for the Min-Max allocation problem states that in order to have an assignment in which every machine has a load below its allowable maximum, it is necessary to have $w(J') \leq a(N(J'))$ for every subset of jobs $J'$. For the case of Max-Min allocations on inclusion-free graphs, we only need to check this condition for every interval of machines.

\begin{lemma}
\label{lem:min-max_Hall}
In order to check Hall's condition for Min-Max allocations in an inclusion-free graph $H$, it is enough to check the following condition:
\begin{align}
\label{cond:Min-Max_hall}
\forall \ \ [\ell, r\, ] \subseteq [1, \, m]: \ \  w(\cJ([\ell, \, r])) \leq \sum_{i = \ell}^{r} a(M_i)
\end{align}
in which $\cJ([\ell, \, r])$ denotes the set of jobs whose entire neighbourhood of machines falls in the interval $[\ell, \, r]$. We refer to \condref{Min-Max_hall} as Hall's condition for Min-Max allocations.

\end{lemma}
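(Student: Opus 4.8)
The plan is to mirror the proof of \lemref{hall} almost verbatim, exploiting the duality between the Max-Min and Min-Max settings. In \lemref{hall} the relevant inequality read $val(N(P')) \ge d(P')$, with players carrying demands and items carrying values, and the argument hinged on the fact that each player's neighbourhood is an interval of items. In the present setting the roles are exchanged: jobs carry the ``demand'' $w$, machines carry the ``capacity'' $a$, and by \remref{Min-Max_double_interval} (itself a consequence of \obsref{double-interval}) the neighbourhood $N(J_j)$ of every job is an \emph{interval} of machines once $M$ is ordered lexicographically. Thus I would reprove the two-directional violation equivalence: the general condition $w(J') \le a(N(J'))$ holds for all $J' \subseteq J$ if and only if \condref{Min-Max_hall} holds for every machine interval $[\ell, r]$.

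The easy direction is that any violation of \condref{Min-Max_hall} is already a violation of the general condition: if $w(\cJ([\ell,r])) > \sum_{i=\ell}^r a(M_i)$, then setting $J' = \cJ([\ell,r])$ and noting $N(J') \subseteq [\ell,r]$ gives $a(N(J')) \le \sum_{i=\ell}^r a(M_i) < w(J')$, so the general condition fails on $J'$. For the substantive direction I would start from an arbitrary $J'$ with $w(J') > a(N(J'))$ and decompose its neighbourhood $N(J') \subseteq M$ into maximal (gap-separated) intervals $K_1, \ldots, K_t$. Because each job's neighbourhood is a single interval of machines, every job of $J'$ lands entirely inside exactly one $K_i$, inducing a partition $J' = J'_1 \cup \cdots \cup J'_t$ with $N(J'_i) = K_i$. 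Summing, $\sum_i w(J'_i) = w(J') > a(N(J')) = \sum_i a(K_i)$, so by averaging there is an index $i$ with $w(J'_i) > a(K_i)$. Taking $[\ell, r] = K_i$ and observing $J'_i \subseteq \cJ([\ell,r])$ yields $\sum_{\iota = \ell}^{r} a(M_\iota) = a(K_i) < w(J'_i) \le w(\cJ([\ell,r]))$, a violation of \condref{Min-Max_hall}.

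The only place requiring genuine care --- and the step that would fail for general (margined-inclusive) convex graphs --- is the partition of $J'$ according to the maximal intervals $K_1, \ldots, K_t$. This decomposition is legitimate precisely because each $N(J_j)$ is an interval of machines, so no job can straddle a gap between two maximal blocks; this is exactly the content supplied by \remref{Min-Max_double_interval} for inclusion-free graphs. I would also spell out the small set-theoretic point that $N(J'_i) = K_i$ rather than merely $N(J'_i) \subseteq K_i$: since the $K_i$ are disjoint and $\bigcup_i N(J'_i) = N(J') = \bigcup_i K_i$, containment forces equality. Finally, as with \lemref{hall}, I would remark that \condref{Min-Max_hall} ranges over the at most $m^2$ machine intervals and hence is checkable in time polynomial in the size of $H$.
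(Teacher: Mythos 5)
Your proposal is correct and follows essentially the same route as the paper's proof: both directions mirror \lemref{hall}, invoking \remref{Min-Max_double_interval} to decompose $N(J')$ into maximal machine intervals, partitioning $J'$ accordingly, and extracting a violating interval by averaging, with the easy direction handled by taking $J' = \cJ([\ell, r])$ and noting $N(J') \subseteq [\ell, r]$. Your explicit justification that $N(J'_i) = K_i$ (via disjointness of the maximal blocks) is a small polish the paper leaves implicit, but it is not a different argument.
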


\begin{proof}
The proof mostly follows the proof of \lemref{hall} except for the fact that it uses the property mentioned in \remref{Min-Max_double_interval}. Assume we are given an inclusion-free convex graph $H$. As before, we show that  there exists a subset of the jobs $J'$ for which \emph{(i)} $N(J')$ is not an interval and therefore can be represented as the union of several maximal intervals, and \emph{(ii)} Hall's condition is violated if and only if there exists an interval of machines $[\ell,\, r]$ for which \condref{Min-Max_hall} is violated. 

\noindent
$\Rightarrow$: Assume that $w(J') > a(N(J'))$ and that there exist several maximal intervals of machines $\cN_1, \, \cN_2, \, \ldots, \, \cN_t$ whose union gives $N(J')$. Since each job is adjacent to an interval of machines by \remref{Min-Max_double_interval}, there exists a corresponding partition of $J'$ into subsets $J'_1, \, J'_2, \, \ldots, \, J'_t$, such that $N(J'_i) = \cN_i$. Since Hall's condition is violated, we have

\begin{displaymath}
a(N_1) + \, \ldots \, + a(N_t) = a(N(J')) < w(J') = w(J'_1) + \, \ldots \, +  w(J'_t )
\end{displaymath}

\noindent
Thus, there must exist an $i, 1 \leq i \leq t$, for which $a(\cN_i) = a(N(J'_i)) < w(J'_i )$. Let $\ell$ and $r$ be the leftmost and rightmost machines in $N(J'_i)$. We have that $a(N(J'_i)) = \sum_{i = \ell}^{r} a(M_i) < w(J'_i) \leq w(\cJ([\ell, \, r]))$  since $w(J'_i) \subseteq w(\cJ([\ell, \, r]))$.

\noindent
$\Leftarrow$: Assume that \condref{Min-Max_hall} is violated for an interval of machines $[\ell, \, r] \subseteq [1, \, m]$, meaning that $w(\cJ([\ell, \, r])) > \sum_{i = \ell}^{r} a(M_i)$. Let $J'$ be $\cJ([\ell, \, r])$. Then, $N(J') \subseteq [\ell, \, r]$ since otherwise it means there exists a job in $J' = \cJ([\ell, \, r])$ whose neighbourhood stretches beyond the interval $[\ell, \, r]$, contradicting the definition of $\cJ([\ell, \, r])$. Therefore $w(J') > \sum_{i = \ell}^{r} a(M_i) \geq a(N(J'))$. Therefore, Hall's condition is violated for the set $J'$.
\end{proof}

\subsection{Rounding the Instance}

Before running the algorithm on the instance, we round the processing times of jobs based on an input error parameter $k$. First of all, by properly scaling the values we assume every job has a processing time less than or equal to one. We say a job is \emph{small} if its processing time is less than or equal to $\frac{1}{k}$ after scaling, and we say a job is \emph{big} if its processing time is strictly larger than $\frac{1}{k}$. As opposed to the Max-Min case, we round \underline{down} the processing times for big jobs for the Min-Max case. Thus, if $p_j$, the processing time of job $J_j$,  is in the interval $[\frac{1}{k}(1+\frac{1}{k})^i, \frac{1}{k}(1+\frac{1}{k})^{i+1})$, then it is replaced by $\frac{1}{k}(1+\frac{1}{k})^{i}$.  Using this method, we obtain at most $C =  \lceil \frac{ \log k }{ \log ( 1+ \frac{1}{k} ) } \rceil$  distinct processing times for big jobs (each of which is more than $\frac{1}{k}$, the maximum processing time for small jobs). For $1 \leq \tau \leq C$, let $q_\tau = \frac{1}{k}(1+\frac{1}{k})^{i}$. Each $q_\tau$, $1 \leq \tau \leq C$, denotes the rounded processing time of category $\tau$ of big jobs. 

\subsubsection{The Algorithm for Inclusion-Free Convex Graphs}
The main theorem of this section is the following.

\begin{theorem}
\label{thm:finalJob}
Let $H$ be an instance of the problem before rounding with $n$ jobs and $m$ machines. Then, for $k \geq 4$ there exists a $(1 + \frac{4}{k} + \frac{3}{k^2})$-approximation algorithm for the Min-Max allocation problem on inclusion-free convex graphs with the running time of $\bigO{(m + n)mn^{2(C + 1)}}$ in which $C \leq k ^ {1.4}$.
\end{theorem}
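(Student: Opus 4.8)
The plan is to mirror the development of \secref{max-min} almost line by line, carrying over the three structural ingredients---the Alignment Lemma, the notion of non-wasteful right-aligned assignments, and the reconstruction procedure---while reversing every inequality to account for the fact that we now minimize load, round processing times \emph{down}, and must assign \underline{every} job. Fix the error parameter $k \ge 4$, scale so that the guessed makespan target is $1$, and pass to the rounded instance in which each big job of processing time in $[\frac{1}{k}(1+\frac{1}{k})^i, \frac{1}{k}(1+\frac{1}{k})^{i+1})$ is decreased to $q_\tau = \frac{1}{k}(1+\frac{1}{k})^i$. Since rounding only decreases loads, any makespan-$1$ assignment of the original instance remains a makespan-$1$ assignment of the rounded instance, so it suffices to design the algorithm on the rounded instance and pay the rounding factor at the very end.

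First I would restate the Alignment Lemma for the Min-Max case: if the rounded instance admits a makespan-$1$ assignment, then it admits a right-aligned assignment of makespan at most $1 + \frac{1}{k}$. The two-round argument of \lemref{alignment} transfers, processing machines from $M_m$ downward in the lexicographical order and, at each step, swapping in the rightmost big jobs of each category (which preserves each machine's load exactly by the adjacency property) and then greedily right-aligning the small jobs. The only change is the direction of the slack: because small jobs have size at most $\frac{1}{k}$ and every job must be placed, the right-aligned small-job bundle of a machine can overshoot its original small-job load by at most $\frac{1}{k}$, giving the additive $+\frac{1}{k}$. Feasibility of the greedy small-job packing is again guaranteed by Hall's condition, now in the form of \condref{Min-Max_hall} via \lemref{min-max_Hall}. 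The definitions of private and stranded jobs, the analogue of \obsref{non-wasteful_existence}, and the reconstruction lemma \lemref{non-wasteful_assign} then carry over: the left-to-right \textsf{retrieve} sweep again reconstructs the big-job part of every remainder graph exactly (any discrepancy would force a margined inclusion, ruled out by inclusion-freeness exactly as in \clmref{no_margin}, using the job-to-machine interval structure of \remref{Min-Max_double_interval}) and estimates the small-job load with error at most $\frac{2}{k}$. Here ``non-wasteful'' plays the essential role that a job which can still be executed only on an already-processed machine must not have been left unassigned, since in the Min-Max setting such a job would become stranded and the instance infeasible.

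With these pieces in place, the dynamic program is the exact analogue of Algorithm Assignment, but it now iterates over machines $M_m, M_{m-1}, \dots, M_1$ and over input vectors whose coordinates count \emph{jobs}; since there are at most $n$ big jobs per category and the small-job total is at most $\frac{n}{k}$, each coordinate is bounded by $n$ and there are at most $n^{C+1}$ vectors in $\cV$. The feasibility test now accepts a bundle whose load is at most $1 + \frac{3}{k}$. Combining the $+\frac{1}{k}$ from alignment with the $+\frac{2}{k}$ reconstruction error (now an over-estimate of the load handed to a machine), the same argument as in \lemref{finalItem} shows that, whenever a makespan-$1$ assignment of the rounded instance exists, the algorithm returns one of makespan at most $1 + \frac{3}{k}$, and otherwise reports failure. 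Finally, every rounded processing time is at most a factor $1 + \frac{1}{k}$ below the true one, so the load of each machine in the original instance is at most $(1 + \frac{3}{k})(1 + \frac{1}{k}) = 1 + \frac{4}{k} + \frac{3}{k^2}$, which is the claimed ratio. For the running time, the two nested vector loops cost $n^{2(C+1)}$, the outer loop runs over $m$ machines, and each \textsf{retrieve}/\textsf{feasible} call costs $\bigO{m+n}$, giving $\bigO{(m+n)\,m\,n^{2(C+1)}}$ with $C \le k^{1.4}$.

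I expect the main obstacle to be pinning down the Alignment Lemma and the reconstruction error in the minimization regime, where the convenient Max-Min liberty of discarding items is unavailable: every job must land somewhere, so the greedy right-alignment of small jobs must be shown to overshoot (never undershoot) by at most $\frac{1}{k}$, and the \textsf{retrieve} sweep must be shown to over-count (never under-count) the small-job load by at most $\frac{2}{k}$, so that the two errors add in the feasibility-preserving direction rather than cancel. Verifying that Hall's condition in the form \condref{Min-Max_hall} certifies infeasibility precisely when the greedy packing stalls is the remaining delicate point; the rest is the bookkeeping already carried out for \thmref{finalItem}.
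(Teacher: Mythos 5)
Your proposal is correct and follows essentially the same route as the paper: it transfers the Max-Min machinery (the Alignment Lemma, the automatic non-wastefulness of job assignments, the left-to-right \textsf{retrieve} sweep with minimal leftmost small-job packing, and the dynamic program over at most $n^{C+1}$ input vectors) with all error directions reversed so that alignment and reconstruction can only overshoot the makespan by $\frac{1}{k}$ and $\frac{2}{k}$ respectively. Your final accounting, $(1+\frac{3}{k})(1+\frac{1}{k}) = 1+\frac{4}{k}+\frac{3}{k^2}$, and the running-time analysis match the paper's proofs of \lemref{finalJob} and \thmref{finalJob} exactly.
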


To show this result, we borrow from the techniques used in \secref{max-min} extensively. We use the notions of \emph{$t$-assignment}, \emph{input vector}, \emph{assignment vector}, and \emph{right-aligned} assignments in the same sense as we did for the Max-Min allocations (jobs are allocated to machines here, instead of items to players). We modify the definition of input vectors slightly for Min-Max allocations below. 

\begin{definition}[$t$-assignment for the Min-Max case]
A $t$-assignment, for any $t \ge 0$, is a feasible assignment such that every machine $M_i$ receives a set of jobs $J^i \subseteq [\ell_{M_i}, \, r_{M_i}]$ with total processing time \underline{at most} $t$.
\end{definition}

\begin{definition}[Input Vector]
\label{def:input_vec_min-max}
For a given convex graph $H$ for an input instance with $\nu_\tau$ big jobs of processing time $q_\tau$ for $\tau = 1, \, 2, \, \ldots, \, C$ and small jobs of total processing time in the interval $[\frac{\nu_0}{k}, \, \frac{\nu_0 + 1}{k})$, an input vector is a configuration vector of the form $\nu(H) = \Nu = (\nu_0, \, \nu_1, \, \ldots, \, \nu_{C})$. 
\end{definition}

Then, we can prove the following \emph{alignment} lemma.

\begin{lemma}[The Alignment Lemma for Min-Max Allocations]\label{lem:alignment_min-max}
Suppose there exists a $1$-assignment for a given convex graph $H$. Then, there also exists a \underline{maximal} $(1 + \frac{1}{k})$-right-aligned assignment for $H$. Furthermore, the two assignments will have an identical assignment vector. 
\end{lemma}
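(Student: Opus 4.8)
The plan is to follow the proof of \lemref{alignment} almost line for line, swapping the roles of players/items with machines/jobs and dualizing every inequality. The only structural ingredients used there — the adjacency property, the heredity of convexity (\obsref{hereditary}), and the reduction of Hall's condition to intervals — all have Min-Max counterparts already in hand, namely \remref{Min-Max_double_interval} and \lemref{min-max_Hall}. Concretely, starting from the given $1$-assignment $\bcA=(J^1,\dots,J^m)$ with assignment vector $\Alpha$, I would construct a maximal $(1+\frac{1}{k})$-right-aligned assignment $\bcA'$ in two independent rounds, first handling the big jobs and then the small jobs, each sweeping the machines from the rightmost $M_m$ down to the leftmost $M_1$.

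Round $1$ (big jobs) is verbatim the Max-Min argument and needs no new idea. For each machine and each category $q_\tau$ in turn, whenever a big job $J_t$ lying to the right of an already-assigned job $J_r$ of equal value is not on the current machine $M_i$, I swap them; the adjacency property makes the swap legal, because any machine currently holding $J_t$ is lexicographically after $M_i$ and hence also adjacent to $J_r$. Since a swap never alters how many big jobs of a given category sit on any machine, the big-load of every machine is preserved exactly, so the big coordinates of the assignment vector are unchanged and no machine's load increases in this round.

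Round $2$ (small jobs) is where the dual nature of Min-Max enters. Let $b_i$ denote the (now fixed) big-load of $M_i$ and let $s_i$ be the value of the small jobs originally on $M_i$, so $b_i+s_i\le 1$. Sweeping $M_m,\dots,M_1$, I would give each $M_i$ the rightmost still-available small jobs until its small-load first meets or exceeds its original value $s_i$ (or until its neighbourhood is exhausted). Because every small job has value at most $\frac{1}{k}$, the last job added overshoots by strictly less than $\frac{1}{k}$, so $M_i$ ends with small-load in $[s_i,s_i+\frac{1}{k})$ and total load below $b_i+s_i+\frac{1}{k}\le 1+\frac{1}{k}$. This tight packing from above is exactly the dual of the minimal packing in \lemref{alignment}, which is why the resulting assignment is called \emph{maximal}. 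Matching of the small coordinate of the assignment vector then follows as in the Max-Min case, now with the floor-based grouping of \defref{input_vec_min-max}: at each step the cumulative small value removed agrees with that of $\bcA$ up to the $\frac{1}{k}$ granularity, so every remainder graph carries the same $\nu_0$ and hence $\Alpha'=\Alpha$.

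The step that requires genuine work — and the main obstacle — is proving that this greedy sweep never gets stuck, i.e.\ that every small job is in fact placed and no machine is driven past $1+\frac{1}{k}$. This is the dual of the gap analysis (Case $1$ / Case $2$) in \lemref{alignment}. I would argue by contradiction. First I note that the restriction $H_0$ of the rounded instance to small jobs, with allowable loads $a(M_i)=1-b_i$, satisfies \condref{Min-Max_hall}, since the original $1$-assignment already places all small jobs within these budgets and Round $1$ left the budgets untouched. Now suppose the sweep fails, and let $x_u$ be the leftmost small job it cannot place within budget; by \remref{Min-Max_double_interval} the machines adjacent to $x_u$ form an interval, and by the right-to-left order each of them was filled (using only jobs at or to the right of $x_u$) to at least its budget while $x_u$ still remains. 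Taking the interval $[\ell,r]$ of machines carrying those piled-up jobs, I would sum their loads to obtain $w(\cJ([\ell,r]))>\sum_{i=\ell}^{r}a(M_i)$, a violation of \condref{Min-Max_hall} for $H_0$. Since \lemref{min-max_Hall} shows this condition is necessary for any feasible placement, the contradiction proves the greedy succeeds, completing the construction of the maximal $(1+\frac{1}{k})$-right-aligned assignment with identical assignment vector.
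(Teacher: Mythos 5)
Your proposal follows essentially the same route as the paper: the paper's own proof of this lemma is only a few sentences long, deferring Round~1 (big jobs) verbatim to \lemref{alignment} and, for Round~2, packing maximal sets of small jobs below the $1+\frac{1}{k}$ threshold while asserting, without further argument, that the assignment vector is unchanged and the packing succeeds. Your Round~1 is the paper's, modulo one directional slip: in the right-to-left sweep the machine currently holding $J_t$ is a \emph{remaining} machine, hence lexicographically \emph{before} $M_i$; the actual argument (as in \lemref{alignment}) is that if that machine were not adjacent to $J_r$ it would have to come \emph{after} $M_i$ in the lexicographical order, a contradiction. Your Round~2 stopping rule differs mildly from the paper's --- you fill each machine up to its original small load $s_i$ (overshooting by less than $\frac{1}{k}$) rather than packing maximally below $1+\frac{1}{k}$ --- but this still bounds every load by $1+\frac{1}{k}$, and since in the Min-Max problem every job must be assigned anyway, your coverage argument does the same work as the paper's ``maximal'' packing. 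Your explicit Hall-based ``the sweep never gets stuck'' argument is a genuine addition: the paper omits it here entirely, relying implicitly on the Case~1/Case~2 gap analysis from \lemref{alignment}.

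Two points in that added argument need repair. First, a budget mismatch: you declare allowable loads $a(M_i)=1-b_i$ for $H_0$, but your packing rule only fills each machine adjacent to $x_u$ to at least $s_i$, which may be strictly smaller than $1-b_i$; so the claimed violation $w(\cJ([\ell,r]))>\sum_{i=\ell}^{r}a(M_i)$ of \condref{Min-Max_hall} does not follow as written. The fix is to run Hall's condition on $H_0$ with budgets $a(M_i)=s_i$ --- these are certified by the original $1$-assignment, since machine $M_i$ carries exactly $s_i$ worth of small jobs there --- after which loads of at least $s_i$ on the interval plus the unplaced job $x_u$ do give the violation. Second, invoking \condref{Min-Max_hall} requires that all the piled-up jobs have their \emph{entire} machine-neighbourhoods inside the chosen interval $[\ell,r]$, which is exactly what the gap-closure case analysis dual to Cases~1 and~2 of \lemref{alignment} establishes; you compress this to a single sentence, and right-alignment plus \remref{Min-Max_double_interval} must be used to cap the intervals of the assigned jobs. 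Finally, your claim that the cumulative small value removed agrees with that of $\bcA$ up to $\frac{1}{k}$ at \emph{every} step overstates what per-machine targets give, since overshoots of up to $\frac{1}{k}$ per machine can accumulate across steps; to get $\Alpha'=\Alpha$ exactly one should track a cumulative target $\sum_{i\geq j}s_i$ rather than per-machine targets --- though in fairness the paper's own proofs of \lemref{alignment} and of this lemma are equally loose at precisely this point.
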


\begin{proof}
We assign the jobs in two rounds and make use of the adjacency property. During the first round, the big jobs are aligned to the right in the same manner as \lemref{alignment}. In doing so, we do not change the assignment vector since the same number of big jobs as the $1$-assignment are allocated to the machines. During the second round, we assign the small items. The only difference between the assignment of jobs and the assignment of items is that we would prefer to allocate as many jobs as possible to a machine as long as the makespan is below the threshold of $1 + \frac{1}{k}$. Therefore, we pack maximal sets of small jobs and assign them to the machines. Due to the way the assignment vectors are defined, we still have not changed the vector after this phase.
\end{proof}

By the problem definition, we are not allowed to have stranded jobs in the Min-Max allocation problem. Therefore, every right-aligned assignment that we make is also a \emph{non-wasteful} right-aligned assignment. The following two facts hold for any non-wasteful right-aligned assignment of jobs to machines.

\begin{fact}
\label{fact:existence_min-max}
Whenever there exists an arbitrary $1$-assignment of jobs to machines for an instance of the problem, there also exists a restricted ($1 + \epsilon$)-assignment, in the sense that all the jobs assigned are restricted to be aligned to the right.
\end{fact}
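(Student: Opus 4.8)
The plan is to obtain \factref{existence_min-max} as an essentially immediate corollary of the Alignment Lemma for Min-Max allocations (\lemref{alignment_min-max}), taking the error parameter to be $\epsilon = \frac{1}{k}$. First I would invoke the hypothesis that a $1$-assignment exists and feed it into \lemref{alignment_min-max}, which guarantees a maximal $(1 + \frac{1}{k})$-right-aligned assignment whose assignment vector coincides with that of the original $1$-assignment. Declaring this right-aligned assignment to be the desired restricted assignment already yields the statement, since here ``restricted'' means precisely that every assigned job is pushed to the right end of its machine's interval, which is exactly the content of \defref{alignment} transported from items to jobs.

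The only point with genuine content beyond citing the lemma is to check that the makespan of the produced assignment stays at most $1 + \frac{1}{k}$, and I would verify this separately for big and small jobs, mirroring the two rounds of \lemref{alignment_min-max}. For big jobs the right-alignment proceeds by swapping jobs of equal rounded processing time, a swap licensed by the adjacency property exactly as in \lemref{alignment}; hence no machine's load changes and the big-job contribution to each machine is identical to that in the $1$-assignment. For small jobs the packing is maximal: we keep adding small jobs to a machine until one more would exceed the threshold. Since every small job has processing time at most $\frac{1}{k}$, the resulting load lies below $1 + \frac{1}{k}$, because the last small job added can raise a load that was at most $1$ by at most $\frac{1}{k}$.

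A second point worth recording is that, unlike the Max-Min setting, the Min-Max case needs no separate ``non-wasteful'' refinement. The problem definition forbids leaving any job unassigned, so every right-aligned assignment is automatically non-wasteful and there are no stranded or unassigned private jobs to reclaim. I would also note that the feasibility of placing all jobs within the stated bound is underwritten by Hall's condition for Min-Max allocations (\lemref{min-max_Hall}): the existence of a $1$-assignment forces \condref{Min-Max_hall}, and the alignment construction respects it. Consequently the main (and quite minor) obstacle is purely the bookkeeping above: confirming that the maximal greedy packing of small jobs overshoots by at most one small job's worth, i.e. by at most $\frac{1}{k}$, so that the global bound is $1 + \epsilon$ with $\epsilon = \frac{1}{k}$.
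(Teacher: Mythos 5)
Your proposal is correct and follows essentially the same route as the paper: the paper justifies \factref{existence_min-max} precisely by invoking \lemref{alignment_min-max} (big jobs right-aligned via equal-value swaps licensed by the adjacency property, small jobs packed maximally so the overshoot is at most one small job, i.e.\ at most $\frac{1}{k}$), and it likewise notes that non-wastefulness is automatic in the Min-Max setting since every job must be assigned. Your added bookkeeping and the appeal to Hall's condition for feasibility match (and if anything slightly elaborate) the paper's own brief argument.
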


\begin{fact}
\label{fact:retrieve_min-max}
Given that a $1$-assignment exists for an instance of the problem, there also exists a polynomial-time algorithm that, given the input vectors for every step (partial assignment) of a ($1 + \epsilon$)-assignment, reconstructs the remainder graph of every step in such a way that the total processing time of the jobs in each reconstructed remainder graphs is

\begin{itemize}
\item the same as that of the corresponding remainder graph in the original (and unknown) $1$-assignment for every category $\tau = 1, \, 2, \, \ldots, \, C$ of big jobs and,
\item only a small fraction $\epsilon$ less than its counterpart in the original remainder graph for small jobs.
\end{itemize}
\end{fact}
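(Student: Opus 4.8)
The plan is to prove \factref{retrieve_min-max} exactly as its Max-Min counterpart \factref{retrieve} was established through \lemref{non-wasteful_assign} and \clmref{no_margin}, transposing items/players into jobs/machines and, most importantly, reversing the direction of the small-job rounding error from an over-estimate into an under-estimate. I would first give the reconstruction procedure as a single left-to-right sweep: since \lemref{alignment_min-max} lets us assume the assigned jobs are right-aligned, the jobs surviving in any remainder graph are left-aligned, so from an input vector $\Nu = (\nu_0, \, \nu_1, \, \ldots, \, \nu_C)$ and the machine set $\{M_1, \, \ldots, \, M_j\}$ the algorithm recovers, for each big category $\tau$, the left-most $\nu_\tau$ jobs of processing time $q_\tau$, and for the small jobs it greedily collects a maximal left-most set of total processing time at most $\frac{\nu_0}{k}$, matching the left-shifted small-job window $[\frac{\nu_0}{k}, \, \frac{\nu_0 + 1}{k})$ fixed in \defref{input_vec_min-max}. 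I would note up front that because the Min-Max formulation insists every job be assigned, no surviving job can ever have degree $0$; the stranded/private-item bookkeeping that burdened the Max-Min proof is therefore vacuous here, and the sweep never has to emit \textsf{NULL}.

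The big-job half of the correctness claim I would argue by contradiction, copying \clmref{no_margin}. If on some size $q_\tau$ the reconstructed set $\hat{H}$ and the true remainder graph $H'$ differed, then (both holding the same count $\nu_\tau$) there is a job $\hat{x}$ in $\hat{H}\setminus H'$ and a job $x'$ in $H'\setminus\hat{H}$, with $\hat{x} <_J x'$ because the sweep keeps the left-most jobs. Here $\hat{x}$ was handed to some later machine $M_x$ ($x \ge j+1$) and $x'$ is still adjacent to a surviving machine $M_y$ ($y \le j$); using that each machine's neighbourhood is an interval of jobs and that the assignment is right-aligned, $M_x$ cannot reach $x'$ (else it would have preferred $x'$ to $\hat{x}$), giving $r_{M_x} <_J r_{M_y}$, while the lexicographic order then forces $\ell_{M_y} <_J \ell_{M_x}$. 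That is precisely a margined inclusion of two machine neighbourhoods (the configuration of \figref{no_margin}, with machines in place of players), contradicting inclusion-freeness. I expect this to be the only genuinely load-bearing step, since it is where the structural hypothesis (and \remref{Min-Max_double_interval}, the job-side interval property transposing \obsref{double-interval}) is actually consumed; the remaining work is arithmetic.

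The small-job half then shows the reconstructed small set is a \emph{subset} of the true one --- the reverse containment from the Max-Min case. Ruling out the mixed situation ($\hat{x}\in\hat{H}\setminus H'$ together with $x'\in H'\setminus\hat{H}$) by the very same margined-inclusion contradiction forces $H'$ to contain $\hat{H}$ on small jobs, and the totals then settle the gap: the true small total lies in $[\frac{\nu_0}{k}, \, \frac{\nu_0+1}{k})$ by \defref{input_vec_min-max}, whereas the greedy sweep halts with a value in $(\frac{\nu_0-1}{k}, \, \frac{\nu_0}{k}]$ because every small job contributes at most $\frac{1}{k}$, so the reconstruction never overshoots the truth and falls short by strictly less than $\frac{2}{k}$. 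Taking $\epsilon = \frac{2}{k}$ delivers both bullets of the Fact: exact recovery in each big category and an under-estimate of at most $\epsilon$ for the small jobs. The one point demanding care beyond mechanical translation is to orient every inequality so the residual is a deficit rather than a surplus --- which is exactly what the downward rounding of big jobs and the left-shifted small-job window were arranged to guarantee.
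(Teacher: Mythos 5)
Your proposal is correct and follows essentially the same route as the paper, which proves \factref{retrieve_min-max} via \lemref{non-wasteful_assign_min-max} by declaring the argument identical to that of \lemref{non-wasteful_assign} (left-to-right sweep, exact recovery of each big category via the margined-inclusion contradiction of \clmref{no_margin}, and a mirrored small-job containment-plus-totals argument with deficit below $\frac{2}{k}$). Your only deviation is the small-job stopping rule (a maximal prefix of total at most $\frac{\nu_0}{k}$ instead of the paper's minimal prefix of total greater than $\frac{\nu_0 - 1}{k}$), but both rules produce a left-most prefix with total in $(\frac{\nu_0 - 1}{k}, \frac{\nu_0}{k}]$, so the under-estimation bound and the rest of the argument are unaffected.
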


While \lemref{alignment_min-max} verifies \factref{existence_min-max} for non-wasteful right-aligned assignments, the following lemma ensures that \factref{retrieve_min-max} also holds for these types of assignments. \\

\begin{lemma}
\label{lem:non-wasteful_assign_min-max}
Assume there exists a non-wasteful right-aligned $1$-assignment $\bcA = (J^1, \, J^2, \, \ldots, \, J^m)$ for an instance of the problem. Also assume we are given the original inclusion-free convex graph $H$ and the input vectors for partial assignments $\bcA^j = (J^j, \, J^{j + 1}, \, \ldots, \, J^m)$ for $j = m - 1, \, m - 2, \, \ldots, \, 1$, but not the partial assignments themselves. There exists a polynomial time algorithm that can reconstruct the remainder graphs for every $\bcA^j$ in such a way that the total processing time of jobs in each reconstructed graph is at most $\epsilon = \frac{2}{k}$ less than the total processing time of jobs in the actual corresponding remainder graph for $\bcA^j$.
\end{lemma}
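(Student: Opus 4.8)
The plan is to mirror the proof of \lemref{non-wasteful_assign} together with its \clmref{no_margin} essentially verbatim, swapping the roles of items/players for jobs/machines and reversing the direction in which the reconstruction error accumulates. First I would exhibit the reconstruction algorithm as a single left-to-right sweep of the jobs. For each partial assignment $\bcA^j$ with input vector $\Nu = (\nu_0, \, \nu_1, \, \ldots, \, \nu_C)$, set the machine side of the reconstructed graph $\hat{H}$ to $\{M_1, \, M_2, \, \ldots, \, M_j\}$; for every big category $q_\tau$ select the leftmost $\nu_\tau$ remaining big jobs of that processing time; and for the small jobs select a \emph{maximal} leftmost set whose total processing time does not exceed the lower endpoint $\frac{\nu_0}{k}$ of the small-coordinate interval. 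Since $\bcA$ is right-aligned, the jobs not yet assigned occupy a left-aligned prefix, so this sweep is the natural inverse of the assignment. Note also that, as observed just before the lemma, the Min-Max problem forbids stranded jobs, so every right-aligned assignment is automatically non-wasteful and the NULL-on-stranded-job safeguard never fires on a genuine partial assignment; the sweep clearly runs in polynomial time.

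Next I would prove the job analogue of \clmref{no_margin}. The big-job part is identical to the Max-Min argument and needs no change of idea: if the reconstructed big jobs of some size $q_\tau$ disagreed with the true remainder graph, then (the counts being equal) there would be a job $x'$ present in the real graph but absent from $\hat{H}$ and a job $\hat{x}$ present in $\hat{H}$ but absent from the real graph. Tracing which machine each is (or was) assigned to, and invoking the adjacency property together with \remref{Min-Max_double_interval} (each job sees an interval of machines, a consequence of \obsref{double-interval}), forces $\ell_{M_y} <_J \ell_{M_x}$ and $r_{M_x} <_J r_{M_y}$, i.e. a margined inclusion, contradicting inclusion-freeness. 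This is exactly the configuration of \figref{no_margin}, so the big jobs are reconstructed exactly.

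For the small jobs the direction flips relative to the Max-Min setting, and this is the step I expect to be the main obstacle since it is where the Min-Max proof genuinely differs. Because big jobs are rounded \emph{down} and the small coordinate now records a total in $[\frac{\nu_0}{k}, \, \frac{\nu_0+1}{k})$ (\defref{input_vec_min-max}) rather than in $(\frac{\nu_0-1}{k}, \, \frac{\nu_0}{k}]$, the reconstruction packs a \emph{maximal} leftmost small-job set below $\frac{\nu_0}{k}$ instead of a minimal one above it. I would then argue, exactly dually to the Max-Min case, that this reconstructed set of small jobs is a \emph{subset} of the true remainder graph's small jobs (both are left-aligned, and $\hat{H}$ carries the smaller total), ruling out an alien job $\hat{x}\notin H'$ by the same margined-inclusion contradiction as above. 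Finally I would bound the shortfall: the reconstructed maximal set has total in $(\frac{\nu_0-1}{k}, \, \frac{\nu_0}{k}]$ (one small job of value at most $\frac1k$ below the cutoff), while the true small total is at most $\frac{\nu_0+1}{k}$, so the reconstruction underestimates by at most $\frac{2}{k}=\epsilon$, never overestimating. Verifying these inequalities against the shifted input-vector interval, and confirming the subset relation survives the round-down convention, is the delicate part; once it is in place the lemma follows.
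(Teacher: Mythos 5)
Your proposal is correct and takes essentially the same approach as the paper: the paper's proof is likewise a left-to-right sweep that defers entirely to \lemref{non-wasteful_assign} (exact reconstruction of big jobs via the no-margined-inclusion argument), changing only the small-job selection so that the reconstruction underestimates rather than overestimates, with deficit at most $\frac{2}{k}$. The one immaterial difference is your small-job rule --- a maximal leftmost set of total at most $\frac{\nu_0}{k}$, versus the paper's minimal leftmost set of total greater than $\frac{\nu_0 - 1}{k}$ --- but both land in $\left(\frac{\nu_0 - 1}{k}, \frac{\nu_0}{k}\right]$ and, against a true small total below $\frac{\nu_0 + 1}{k}$, give the same bound.
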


\begin{proof}
The proof uses a left-to-right sweep again. We do not provide the details here since it is identical to the proof of \lemref{non-wasteful_assign}, except for the way small jobs are chosen in the reconstruction process. For small jobs, we select a \underline{minimal} set of left-most jobs whose total processing time in greater than $\frac{\nu_0 - 1}{k}$ and add them to the remainder graph. Since the processing time of each small job is less than $\frac{1}{k}$, this ensures that the sum of processing times of the jobs we choose in the reconstructed remainder graph is less than or equal to that of the original remainder graph, but the deficit is not more than $\frac{1}{k}$. The rest of the proof follows the proof of \lemref{non-wasteful_assign}.
\end{proof}

\noindent
\textbf{Algorithm Assignment (for the Min-Max Problem on Inclusion-Free Convex Instances):} Algorithm Assignment for the Min-Max allocation problem is a dynamic programming algorithm similar to the algorithm given in \secref{max-min}. Again we make use of two utility functions, namely the reconstruction function \textsf{retrieve} and the boolean function \textsf{feasible}. Function \textsf{retrieve} reconstructs a remainder graph based on an input vector and a player index which potentially has slightly less total processing time than the original remainder graph of a partial assignment, and function \textsf{feasible} checks whether a given assignment of jobs is feasible in a remainder graph and does not leave any stranded jobs behind. Using these functions, the algorithm fills in the entries of a $m \times n^{C + 1}$ table $T$. Each entry $(i, \, \Nu)$ of the table has two fields: 

\begin{enumerate}
\item $T(i, \, \Nu).$\textsf{bit}: this binary value is set to $1$ if the input vector $\Nu$ is marked for machine $i$, and set to $0$ otherwise.
\item $T(i, \, \Nu).$\textsf{ptr}: contains either \textsf{NULL} or a reference to another input vector $\Nu'$ in the previous row, intended for use in the backward phase of the algorithm where an actual assignment of the jobs to the machines is retrieved from the table $T$.
\end{enumerate}

The algorithm is as follows.

\begin{algorithm}
 \KwData{rounded instance $H$.}
 \KwResult{returns either a ($1 + \frac{3}{k}$)-assignment for all machines in the rounded instance, or report failure.}
 \For{every vector $\Nu$ in $\cV$ such that $\Nu \leq \Nu_{in}$} {
  $\hat{H} \leftarrow \textsf{retrieve}(\Nu, \, m - 1)$ \\
  $\hat{J} \leftarrow$ the set of jobs in $\hat{H} \, ; \, J^m \leftarrow J \, \backslash \, \hat{J}$ \\
  \If{$\textsf{feasible}(H, \, J^m)$} {
   $T(m, \, \Nu).\textsf{bit} \leftarrow 1$ \\ 
   \If{$T(m, \, \Nu).\textsf{ptr} == \textsf{NULL}$} {
    $T(m, \, \Nu).\textsf{ptr} \leftarrow \Nu_{in}$
   }
  }
 }
 \For{$i \leftarrow m - 1$ downto $1$}{
  \For{every vector $\Nu$ in $\cV$ such that $\Nu \leq \Nu_{in}$} {
   \For{every vector $\Nu'$ in $\cV$ such that $\Nu' \geq \Nu$ and $T(i + 1, \, \Nu').\textsf{bit} == 1$} {
    $\hat{H} \leftarrow \textsf{retrieve}(\Nu, \, i - 1) \, ; \, \hat{H}' \leftarrow \textsf{retrieve}(\Nu', \, i) $ \\
    $\hat{J} \leftarrow$ the set of jobs in $\hat{H}$ \, ; \, $\hat{J}' \leftarrow$ the set of jobs in $\hat{H}' \, ; \, J^i \leftarrow \hat{J}' \, \backslash \, \hat{J}$ \\
    \If{$\textsf{feasible}(\hat{H}', \, J^i)$} {
     $T(i, \, \Nu).\textsf{bit} \leftarrow 1$ \\ 
     \If{$T(i, \, \Nu).\textsf{ptr} == \textsf{NULL}$} {
      $T(i, \, \Nu).\textsf{ptr} \leftarrow \Nu'$
     }
    }
   }
  } 
 }
 \caption{Algorithm Assignment for Inclusion-Free Instances (Forward Phase)}
\end{algorithm}

\begin{lemma}\label{lem:finalJob}
Let $H$ be a rounded instance of the problem with $n$ jobs and $m$ machines. Algorithm Assignment assigns (in polynomial time) to each machine a set of jobs with value at most $1 + \frac{3}{k}$ if a $1$-assignment exists for all machines. The algorithm returns failure if no ($1 + \frac{1}{k}$)-assignment exists for $H$. 
\end{lemma}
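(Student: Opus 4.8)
The plan is to mirror the proof of \lemref{finalItem} step for step, flipping every inequality to account for the fact that in the Min-Max setting we round processing times \emph{down}, seek loads that are \emph{at most} the target, and (by \lemref{non-wasteful_assign_min-max}) the reconstruction \textsf{retrieve} \emph{underestimates} the remaining small-job mass rather than overestimating it. First I would invoke \lemref{alignment_min-max}: whenever a $1$-assignment exists, a \emph{maximal} right-aligned $(1+\frac{1}{k})$-assignment exists with the same assignment vector. Because the problem forbids leaving any job unassigned, there are no stranded jobs, so every right-aligned assignment here is automatically non-wasteful; this discharges \factref{existence_min-max} with no extra work and removes the need for a separate non-wastefulness observation.

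Next I would analyze a single bundle produced by the forward phase. For machine $M_i$ the algorithm sets $J^i = \hat{J}' \setminus \hat{J}$, where $\hat{H}' = \textsf{retrieve}(\Nu', i)$ and $\hat{H} = \textsf{retrieve}(\Nu, i-1)$ are the reconstructed remainder graphs immediately before and after the assignment to $M_i$. Let $H'$ and $H$ be the corresponding true remainder graphs of the maximal right-aligned $(1+\frac{1}{k})$-assignment, so that $H$ is an induced subgraph of $H'$. Using the Min-Max analogue of \obsref{subset}, the function \textsf{retrieve} preserves the set of machines and every category of big jobs exactly and keeps the small-job sets monotone, so $\hat{H}$ is a subgraph of $\hat{H}'$ and the difference $J^i$ is well defined with its big-job content identical to that of the true bundle. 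By \lemref{non-wasteful_assign_min-max}, \textsf{retrieve} never overshoots the small-job mass of a remainder graph and undershoots it by at most $\frac{2}{k}$. The worst case is therefore that $\hat{H}$ (the ``after'' graph, which remains) is undershot by $\frac{2}{k}$ while $\hat{H}'$ is estimated exactly; then $J^i$ carries at most $\frac{2}{k}$ \emph{more} small-job processing time than the true bundle, whose load is at most $1+\frac{1}{k}$. Hence every reconstructed bundle has load at most $(1+\frac{1}{k})+\frac{2}{k} = 1+\frac{3}{k}$, which establishes \factref{retrieve_min-max} in quantitative form.

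Finally I would assemble the two directions exactly as in \lemref{finalItem}. If a $1$-assignment exists, then the maximal right-aligned $(1+\frac{1}{k})$-assignment exists, its input vectors are among the $n^{C+1}$ columns the algorithm enumerates, and the bundle analysis above shows the algorithm accepts them and outputs a $(1+\frac{3}{k})$-assignment; the running time is polynomial because the table has $m \cdot n^{C+1}$ entries and each of the two inner \textbf{for} loops ranges over input vectors. Contrapositively, if no $(1+\frac{1}{k})$-assignment exists then the algorithm cannot mark a successful vector for $M_1$ and must report failure, since any success would, by reversing the bundle argument, exhibit a right-aligned assignment of load at most $1+\frac{1}{k}$. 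The step I expect to be the main obstacle is verifying that the error in \textsf{retrieve} propagates in the correct direction after the sign flip: one must check that undershooting the \emph{remaining} jobs is exactly what inflates the \emph{assigned} bundle, and that this inflation is capped at $\frac{2}{k}$ even when left-inclusive and right-inclusive neighbourhoods force the minimal left-most small-job selection to straddle a machine boundary. Confirming the Min-Max counterpart of \obsref{subset} (monotonicity of the minimal left-most small-job rule) is the other place where care is needed.
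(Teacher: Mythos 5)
Your treatment of the main claim tracks the paper's proof of \lemref{finalJob} step for step, and is in fact more explicit than the original: the paper simply cites \lemref{alignment_min-max} for the $(1+\frac{1}{k})$ right-aligned assignment and \lemref{non-wasteful_assign_min-max} for the $\frac{2}{k}$ retrieval deficiency and concludes $1+\frac{3}{k}$, whereas you spell out the mirrored worst case (the ``after'' graph undershot by $\frac{2}{k}$ while the ``before'' graph is exact, inflating the assigned bundle by at most $\frac{2}{k}$ on top of the true load $1+\frac{1}{k}$), which is exactly the argument the paper writes out only in the Max-Min proof of \lemref{finalItem}. Your observation that non-wastefulness is automatic because no job may be left stranded is also how the paper dispenses with it, and the Min-Max counterpart of \obsref{subset} that you flag does hold by the same one-line monotonicity argument ($\textsf{retrieve}$ preserves machines and big-job categories, and $\nu_0' \leq \nu_0''$ makes the minimal left-most small-job selection monotone), so that concern is benign.

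The genuine gap is in your failure direction. You argue ``success $\Rightarrow$ a $(1+\frac{1}{k})$-assignment exists'' by ``reversing the bundle argument,'' but this reversal is not available: the function \textsf{feasible} accepts any bundle of load up to $1+\frac{3}{k}$, and the reconstructed bundles \emph{are} the actual output assignment --- there is no tighter assignment hiding behind a successful run from which a load-$(1+\frac{1}{k})$ assignment could be extracted. An instance whose optimal makespan lies strictly between $1+\frac{1}{k}$ and $1+\frac{3}{k}$ could be accepted while admitting no $(1+\frac{1}{k})$-assignment, so success does not certify existence. The implication the paper actually establishes (explicitly in the proof of \lemref{finalItem}, and tersely as ``straightforward'' here) runs the other way: \emph{if} a $(1+\frac{1}{k})$-assignment existed, the alignment and retrieval machinery applied to it would produce marked successful vectors, so the algorithm could not have failed; hence failure certifies that no $(1+\frac{1}{k})$-assignment exists, which is what the binary search needs. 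To repair your proof, drop the reversal and instead rerun your first two paragraphs starting from a hypothetical $(1+\frac{1}{k})$-assignment rather than a $1$-assignment (keeping track of the slightly larger slack this injects), concluding that failure implies no such assignment exists.
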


\begin{proof}
By \lemref{alignment_min-max}, a $(1 + \frac{1}{k})$-assignment exists whenever there is a $1$-assignment for the problem instance. Using the retrieve function, we may guess the remainder graph of any of the partial assignments to have a deficiency in total processing time which is no more than $\frac{2}{k}$ when compared to the original remainder graph. This together with the error caused by the alignment imply the total sum of processing times of the jobs assigned to each machine is at most $\frac{3}{k}$ more than the same value in the $1$-assignment (if a successful assignment is identified). It is also straightforward to see that whenever there exists no $(1 + \frac{1}{k})$-assignment for the instance, the algorithm fails to recover a feasible assignment for a machine, resulting in a failure report. This completes the correctness proof of the Algorithm Assignment.
\end{proof}

In the last part, we prove \thmref{finalJob} using the lemmas in this section. 

\begin{proof}[Proof of \thmref{finalJob}]
First we discuss the approximation guarantee. Through alignment and reconstruction of the remainder graphs, we may pack an extra $\frac{3}{k}$ units of processing time in the bundle of jobs assigned to a machine. This extra value is magnified by a multiplicative factor of $1 + \frac{1}{k}$ in the rounded instance. Therefore, the approximation guarantee would be $(1 + \frac{3}{k}) \cdot (1 + \frac{1}{k}) = 1 + \frac{4}{k} + \frac{3}{k^2}$. We now discuss the running time of the algorithm. The two inner loops of the Algorithm Assignment take a combined total time of $n^{2(C + 1)}$ while the outer loop runs for each machine for a total of $m$ times. Retrieving the remainder graph can be done in $\bigO{m + n}$, hence the running time is $\bigO{(m + n)n^{2(C + 1)}}$ in total. The correctness is already shown in \lemref{finalJob}.
\end{proof} 

\section{Conclusion and Future Work}

In all instances of the problem considered in this paper, a proper ordering has played an important role. We conjecture that a dynamic programming algorithm similar to the one used in this paper provides a PTAS for the case of margined-inclusive intervals. However, we do not know a dichotomy classification for the instances of the problem that admit a PTAS. We ask for a dichotomy of the following form. If $H$ belongs to class $X$ of bipartite graphs, then there is a PTAS for Max-Min allocation problem. Otherwise, there is no PTAS.

\bibliographystyle{abbrv}

\bibliography{santa_min_new}

\end{document}